\frenchspacing \setlength{\parindent}{0pt}
\theoremstyle{plain}
\newtheorem{theorem}{Theorem}
\newtheorem{lemma}{Lemma}
\newtheorem{corollary}{Corollary}
\newtheorem{proposition}{Proposition}
\newtheorem{property}{Property}
\newtheorem{definition}{Definition}
\theoremstyle{remark}
\newcommand{\swc}[1]{| #1 |_s}
\newcommand{\ssl}{\mbox{\sf ssl}}
\newcommand{\sw}{\mbox{\sf swc}}
\newcommand{\SSL}{\mbox{\sf SSL}}
\newcommand{\SW}{\mbox{\sf SWC}}
\title{Counting symbol switches in synchronizing automata}
\author[1]{Henk Don}
\author[2,3]{Hans Zantema}
\affil[1]{\small Radboud University Nijmegen, P.O.\ Box 9010, 6500 GL Nijmegen, The Netherlands,
	email: {\tt h.don@math.ru.nl}}
\affil[2]{Department of Computer Science, TU Eindhoven, P.O.\ Box 513,
		5600 MB Eindhoven, The Netherlands, 
		email: {\tt h.zantema@tue.nl}}
\affil[3]{Radboud University Nijmegen, P.O.\ Box 9010, 6500 GL Nijmegen, The Netherlands}
\begin{document}
\maketitle

\begin{abstract}
	Instead of looking at the lengths of synchronizing words as in \v{C}ern\'y's conjecture, we look at the {\em switch count} of such words,
	that is, we only count the switches from one letter to another. Where the synchronizing words of the \v{C}ern\'y automata $\mathcal{C}_n$ have switch count linear in $n$, we wonder whether synchronizing automata exist for which every synchronizing word has quadratic switch count. The answer is positive: we prove that switch count has the same complexity as synchronizing word length. We give some series of synchronizing automata yielding quadratic switch count, the best one reaching $\frac{2}{3} n^2 + O(n)$ as switch count.
	
	We investigate all binary automata on at most 9 states and determine the maximal possible switch count. For all $3\leq n\leq 9$, a strictly higher switch count can be reached by allowing more symbols. This behaviour differs from length, where for every $n$, no automata are known with higher synchronization length than $\mathcal{C}_n$, which has only two symbols. It is not clear if this pattern extends to larger $n$. For $n\geq 12$, our best construction only has two symbols.
	
	Keywords: \v{C}ern\'y conjecture, synchronization, switch count
\end{abstract}

%
%
%
\section{Introduction}

The well-known \v{C}ern\'y automaton ${\cal C}_n$ on $n$ states has shortest synchronizing word $b(a^{n-1}b)^{n-2}$ of length $(n-1)^2$; for $n=4$ this is $baaabaaab$ and the automaton is drawn below.

\begin{center}
	\begin{tikzpicture}[-latex',node distance =2 cm and 2cm ,on grid ,
	semithick , state/.style ={ circle ,top color =white , bottom
		color = white!20 , draw, black , text=black , minimum width =.5
		cm}]
	
	\node[state] (1) {};
	\node[state] (4) [below =of 1] {};
	\node[state] (2) [right =of 1] {};
	\node[state] (3) [right =of 4] {};
	\path (1) edge node[above] {$a,b$} (2);
	\path (2) edge node[right] {$a$} (3);
	\path (3) edge node[below] {$a$} (4);
	\path (4) edge node[left] {$a$} (1);
	\path (2) edge [loop right,looseness=8, in = 0, out = 90] node[right] {$b$} (2);
	\path (3) edge [loop right,looseness=8, in = 270, out = 0] node[right] {$b$} (3);
	\path (4) edge [loop right,looseness=8, in = 180, out = 270] node[left] {$b$} (4);
	\end{tikzpicture}
\end{center}

\v{C}ern\'y's conjecture \cite{C64} states that for every synchronizing DFA on $n$ states the shortest synchronizing word has length at most $(n-1)^2$.
Moreover, it is conjectured that apart from a few exceptions for $n \leq 6$, this DFA is the only one for which this length $(n-1)^2$ is reached.
It is clear that the shortest synchronizing word $b(a^{n-1}b)^{n-2}$ has a lot of big groups of consecutive $a$'s, and only a linear number of switches from $a$ to $b$ or conversely.

Now we wonder whether this is always the case: if a DFA has a long shortest synchronizing word, is it always the case that this big length is caused by big number of consecutive equal symbols? The answer will be negative: in this paper we investigate synchronizing DFAs on $n$ states of which shortest synchronizing words both have length quadratic in $n$
and do not contain many consecutive equal symbols. To make the question more precise we introduce the notion of {\em switch count}: an alternative notion of word size ignoring consecutive equal symbols: the switch count $\swc{w}$ of a word $w$ is the length of $w$ after collapsing consecutive equal symbols to a single symbol. So $\swc{b(a^{n-1}b)^{n-2}} = |b(ab)^{n-2}| = 2n-3$. The switch count is inductively defined by
\[ \swc{\epsilon} = 0, \; \swc{a} = 1,\; \swc{aaw} = \swc{aw},\; \swc{abw}= 1+ \swc{b}\]
for all symbols $a \neq b$ and all words $w$.

In this paper we consider minimal switch counts of synchronizing words, rather than minimal lengths of synchronizing words, and we look for automata for which this number is high.
More precisely, we define the {\em switch count} $\sw(A)$ of a synchronizing automaton $A$ to be the minimal switch count of a synchronizing word, and we are looking for synchronizing automata
for which this switch count is as high as possible, in particular, quadratic in the number of states.

An alternative formulation is as follows. Define an automaton to be {\em power closed} if for every symbol $a$ and every $k>1$ the operation $a^k$ is either the identity
or occurs as a symbol in the automaton. The {\em power closure} of any automaton is defined to be the automaton obtained by adding every operation $a^k$ for every symbol $a$ and every $k>1$, as long as this operation is not the identity and does not yet occur as a symbol. Now the switch count of any synchronizing automaton coincides with the minimal synchronizing word length of its power closure. This observation follows from the definitions, since in switch count every consecutive groups of equal symbols is counted as a single symbol.
So the maximal switch count of any synchronizing automaton on $n$ states coincides with the maximal synchronizing word length of any power closed automaton on $n$ states.

While \v{C}ern\'y's conjecture is about maximizing the minimal {\em length} of a synchronizing word, in this paper we are interested in maximizing the switch count instead.
Note that ${\cal C}_n$ is no candidate at all for a quadratic bound on switch counts, as we saw that its switch count $\sw({\cal C}_n)$ is at most $2n-3$.
In fact we have $\sw({\cal C}_n) = 2n-3$ since every synchronizing word of switch count $< 2n-3$ can be transformed to a synchronizing word of length $< (n-1)^2$
by replacing every $a^k$ by $a^{k \mod n}$ and every $b^k$ by $b$, for every $k>1$.

A first main result is that when only looking at the order of magnitude, there is no difference between switch count and synchronization length. More precisely, writing
$\SW(n)$ for the largest switch count of a synchronizing automaton on $n$ states, and $\SSL(n)$ for the largest shortest synchronizing word length of a
synchronizing automaton on $n$ states, we have $\SW(n) = \Theta(\SSL(n))$.

In our analysis we focus on two flavors of automata: binary automata having only two symbols and automata in which any number of symbols is allowed. When looking at synchronization length, having more symbols does not give more power:
the highest known synchronization length $(n-1)^2$ for automata on $n$ states is achieved by the binary \v{C}ern\'y automaton for every $n > 1$. For an experimental analysis of automata on restricted number of states, we refer to \cite{T06,BDZ17}.

For switch count for small $n$ this is substantially different, as is shown in the following table giving maximal switch counts for automata on $n \leq 12$ states:
\[ \begin{array}{|c|c|c|}
\mbox{nr of states $n$} & \mbox{binary} & \mbox{any nr of symbols} \\
\hline
2 & 1 & 1 \\
\hline
3 & 3 & 4 \\
\hline
4 & 7 & 9 \\
\hline
5 & 11 & 15 \\
\hline
6 & 19 & \geq 21 \\
\hline
7 & 25 & \geq 28 \\
\hline
8 & 31 & \geq 36 \\
\hline
9 & 41 & \geq 45 \\
\hline
10 & \geq 53 & \geq 55 \\
\hline
11 & \geq 65 & \geq 66 \\
\hline
12 & \geq 79 & \geq 79 \\
\hline
\end{array} \]
Where this table gives an exact number this is found by exhaustive search on all automata of the indicated shape. Where it states '$\geq$' it is the highest switch count of a particular automaton that we could find. So for all $n$ with $3 \leq n \leq 9$ we are sure that non-binary allows for a higher switch count than binary.

Most of the particular automata yielding the results in the table are instances from series of automata parametrized by the number of states $n$. For unbounded number of symbols and $5 \leq n \leq 11$ states
the bound is reached by the automaton ${\cal R}_n$ introduced in Section \ref{secbc}, for which we prove that its switch count is $\frac{n(n+1)}{2}$. For $n=3,4$ the automata
T3-2 and T4-1 as given in \cite{DZ16} yield the given switch counts $4,9$.

For binary automata and $n \geq 8$ states the bound is reached by the automaton ${\cal A}_n$ presented and investigated in Section \ref{secmr}. For lower values of $n$ we give examples in Section \ref{secexp}. To give the flavor of ${\cal A}_n$ we show ${\cal A}_7$.

\begin{center}
	\begin{tikzpicture}[-latex',node distance =1.8 cm and 1.8cm ,on grid ,
	semithick , state/.style ={ circle ,top color =white , bottom
		color = white!20 , draw, black , text=black , minimum width =.5
		cm}]
	
	\node[state] (1) {1};
	\node[state] (2) [right =of 1] {2};
	\node[state] (3) [right =of 2] {3};
	\node[state] (4) [right =of 3] {4};
	\node[state] (5) [right =of 4] {5};
	\node[state] (6) [right =of 5] {6};
	\node[state] (7) [right =of 6] {7};
	\path (1) edge [bend right = 0] node[above] {$b$} (2);
	\path (2) edge [bend right = 0] node[above] {$a$} (3);
	\path (3) edge [bend right = 0] node[above] {$b$} (4);
	\path (4) edge [bend right = 0] node[above] {$a$} (5);
	\path (5) edge [bend right = 0] node[above] {$b$} (6);
	\path (6) edge [bend right = 0] node[above] {$a$} (7);
	
	\path (2) edge [bend right = 0] node[above] {} (1);
	\path (3) edge [bend right = 0] node[above] {} (2);
	\path (4) edge [bend right = 0] node[above] {} (3);
	\path (5) edge [bend right = 0] node[above] {} (4);
	\path (6) edge [bend right = 0] node[above] {} (5);
	
	\path (7) edge [bend left = 30] node[above] {$a,b$} (3);
	
	\path (1) edge [loop left,looseness=8, in = 150, out = 210] node[left] {a} (1);
	\end{tikzpicture}
\end{center}

More general, in ${\cal A}_n$ the states are numbered from $1$ to $n$, and the two symbols $a,b$ satisfy
\begin{itemize}
	\item $a$ swaps $2k+1$ and $2k$, and $b$ swaps $2k-1$ and $2k$, for $k > 0$,
	\item $1a = 1$, $na = nb = q$ for $q \approx n/3$;
\end{itemize}
details are given in Section \ref{secmr}. The most involved result of this paper given in Section \ref{secmr} proves that ${\cal A}_n$
has switch count $\left\lceil \frac{2}{3}n(n-2)-1\right\rceil$. For $n \geq 8$ it is the highest switch of any binary automaton that we know, and for $n \geq 12$ it is the highest switch count
of any automaton that we know. A remarkable observation is that the reasons for having high switch counts are completely different for ${\cal R}_n$ and ${\cal A}_n$.
In ${\cal R}_n$ this is because many symbols swap two states and are the identity on all other states, and for synchronization a kind of bubble sort on the states is required. The maximal number of steps needed to synchronize a pair of states in $\mathcal{R}_n$ is linear in $n$. In contrast, in ${\cal A}_n$ both symbols swap nearly all the states. As a consequence, most time only strings of the shape $(ab)^k$ are relevant.  They act in such a way that the maximal number of steps needed to synchronize a pair is quadratic in $n$.

This is paper is organized as follows. In Section \ref{secprel} we give some preliminaries.
In Section \ref{seccomp} we give a transformation on synchronizing automata for which the switch count of the transformed automaton is of the same order as the synchronization length of the original one, from which $\SW(n) = \Theta(\SSL(n))$ is concluded.
In Section \ref{secexp} we present some experimental results on binary automata, mainly consisting of examples reaching the highest possible switch count for given values of $n$.
In Section \ref{secbc} we give some basic constructions for automata having quadratic switch count, all of the shape $\frac{1}{2} n^2 + O(n)$, including ${\cal R}_n$ yielding the
highest known switch counts for $5 \leq n \leq 11$ . In Section \ref{secmr} we give our main result: the analysis of the automata ${\cal A}_n$ improving the switch count from $\frac{1}{2} n^2 + O(n)$ to $\frac{2}{3} n^2 + O(n)$. In Section \ref{seccyc} we show that for cyclic automata, that is, one symbol cyclicly permutes all states, the switch count is linear in the number $n$ of states in case $n$ is a prime number. We conclude in Section \ref{secconcl}.

\section{Preliminaries}
\label{secprel}

A {\em deterministic finite automaton (DFA)} over a finite alphabet $\Sigma$ consists of a finite set
$Q$ of states and a map $\delta: Q \times \Sigma \to Q$.\footnote{For synchronization the
	initial state and the set of final states in the standard definition may be ignored.}
For $w \in \Sigma^*$ and $q \in Q$ define $qw$ inductively by
$q \epsilon = q$ and $q w a = \delta(qw,a)$ for
$a \in \Sigma$. So  $qw$ is the state where one ends when starting in $q$ and applying
$\delta$-steps for the symbols in $w$ consecutively, and $qa$ is a short hand notation for $\delta(q,a)$.
A word $w \in \Sigma^*$ is called {\em synchronizing} if a
state $q_s \in Q$ exists such that $q w = q_s$ for all $q \in Q$. Stated in words: starting in
any state $q$, after processing $w$ one always ends in state $q_s$.
Obviously, if $w$ is a synchronizing word then so is $wu$ for any word $u$.

The basic tool to analyze synchronization is by exploiting the {\em power automaton}.
For any DFA $(Q,\Sigma, \delta)$ its power automaton is the DFA $(2^Q,\Sigma, \delta')$
where $\delta' : 2^Q \times \Sigma \to 2^Q$ is defined by
$\delta'(V,a) = \{q \in Q \mid \exists p \in V : \delta(p,a) = q \}$.
For any $V \subseteq Q, w \in \Sigma^*$ we define $Vw$ as above, using $\delta'$ instead of
$\delta$.  From this definition one easily proves that
$Vw = \{ qw \mid q \in V \}$ for any $V \subseteq Q, w \in \Sigma^*$.
A set of the shape $\{q\}$ for $q \in Q$ is called a {\em singleton}.
So a word $w$ is synchronizing if and only if $Qw$ is a singleton.
Hence a DFA is synchronizing if and only if its
power automaton admits a path from $Q$ to a singleton, and the shortest
length of such a path corresponds to the shortest length of a synchronizing word.

Similarly, the shortest length of a path from $Q$ to a singleton in the power automaton of the
power closure corresponds to the switch count of the automaton. In the power automaton of the original automaton
this switch count corresponds to the switch count of a word corresponding to such a path.

For $V \subseteq Q$ and $a \in \Sigma$ we define $Va^{-1} = \{q \in Q \mid qa \in V\}$, and for a single state $q$ we
shortly write $q a^{-1}$ instead of $\{q\} a^{-1}$. In the power automaton we can also reason backward: a word $w = a_1 a_2 \cdots a_k$
is synchronizing if and only if a state $q$ exists such that $q a_k^{-1} a_{k-1}^{-1} \cdots a_2^{-1} a_1^{-1} = Q$.

For the concatenation $w_1 w_2 \cdots w_k$ of words $w_i \in \Sigma^*$ we shortly write $\prod_{i=1}^k w_i$.

\section{Complexity of length and switch count}
\label{seccomp}

For any synchronizing automaton $A$, let $\ssl(A)$ denote the shortest synchronizing word length of $A$.
By definition $\swc{w} \leq |w|$ for any word $w$, so $\sw(A) \leq \ssl(A)$ for any synchronizing automaton $A$. We already saw that $\sw({\cal C}_n) = 2n-3$ while $\ssl({\cal C}_n) = (n-1)^2$, so $\sw(A)$ may be much smaller than $\ssl(A)$. In this section we will show that nevertheless the highest switch count of any synchronizing automaton on $n$ states is of the same order of magnitude as the highest synchronizing word length of any synchronizing automaton on $n$ states. The key idea is to find a transformation of any synchronizing automaton $A$ on $n$ states to a synchronizing automaton $F(A)$ on $2n$ states such that $\sw(F(A)) = 2 \ssl(A)$.

For a synchronizing automaton $A = (Q,\Sigma,\delta)$ we define the automaton $F(A) = (Q \cup Q', \Sigma \cup \{c\}, \delta')$, where $Q'$ is a disjoint copy of Q, for which we write
$q' \in Q'$ for the element corresponding to $q \in Q$, $c$ is a fresh symbol not in $\Sigma$, and $\delta' : (Q \cup Q') \times (\Sigma \cup \{c\}) \to  Q \cup Q'$ is defined by
\[ \delta'(q,a) = q, \; \delta'(q,c) = q',\; \delta'(q',a) = \delta(q,a), \; \delta'(q',c) = q',\]
for all $q \in Q, a \in \Sigma$. For instance, if $\Sigma = \{a,b\}$, then by applying $F$ the pattern

\begin{minipage}{35mm}
	\begin{tikzpicture}[-latex',node distance =2 cm and 2cm ,on grid ,
	semithick , state/.style ={ circle ,top color =white , bottom
		color = white!20 , draw, black , text=black , minimum width =.5
		cm}]
	
	\node[state] (1) {$q$};
	\node[state] (3) [right =of 1] {};
	\node[state] (4) [above =of 3] {};
	\path (1) edge node[above] {$b$} (4);
	\path (1) edge node[below] {$a$} (3);
	\end{tikzpicture}
\end{minipage}
is replaced by
\begin{minipage}{6cm}
	\vspace{3mm}
	
	\begin{tikzpicture}[-latex',node distance =2 cm and 2cm ,on grid ,
	semithick , state/.style ={ circle ,top color =white , bottom
		color = white!20 , draw, black , text=black , minimum width =.5
		cm}]
	
	\node[state] (1) {$q$};
	\node[state] (2) [right =of 1] {$q'$};
	\node[state] (3) [right =of 2] {};
	\node[state] (4) [above =of 3] {};
	\path (1) edge node[above] {$c$} (2);
	\path (2) edge node[above] {$b$} (4);
	\path (2) edge node[below] {$a$} (3);
	\path (1) edge [loop right,looseness=8, in = 315, out = 225] node[below] {$a,b$} (1);
	\path (2) edge [loop right,looseness=6, in = 315, out = 225] node[below] {$c$} (2);
	\end{tikzpicture}
\end{minipage}

\begin{theorem}
	\label{thmcomp}
	Let $A$ be a synchronizing automaton. Then $F(A)$ is synchronizing too, and $\sw(F(A)) = 2 \ssl(A)$.
\end{theorem}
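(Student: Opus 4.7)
The plan is to establish $\sw(F(A)) = 2\ssl(A)$ by matching upper and lower bounds, leveraging a key idempotency property of all symbols of $F(A)$.

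First I would verify that every symbol of $F(A)$ is idempotent. For any $a \in \Sigma$, the map $a$ is the identity on $Q$ and sends $q' \in Q'$ to $qa \in Q$, so a second application of $a$ fixes the result; hence $a^2 = a$. Likewise $c$ sends all of $Q \cup Q'$ into $Q'$ and is the identity there, so $c^2 = c$. Therefore $F(A)$ equals its own power closure, and by the correspondence between switch count and synchronization length recalled in the introduction we obtain $\sw(F(A)) = \ssl(F(A))$; it then suffices to prove $\ssl(F(A)) = 2\ssl(A)$. As a bonus, any shortest synchronizing word of $F(A)$ must avoid two consecutive equal letters, else collapsing would shorten it.

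For the upper bound I would exhibit an explicit synchronizing word of length $2\ssl(A)$. Taking a shortest synchronizing word $w = a_1 a_2 \cdots a_k$ of $A$ and setting $v = c a_1 c a_2 \cdots c a_k$, a direct induction shows that after each block $c a_i$ the reachable set in $F(A)$ is exactly $Q a_1 a_2 \cdots a_i \subseteq Q$; in particular after $v$ it equals the singleton $Q w$, so $v$ synchronizes $F(A)$ and $\ssl(F(A)) \leq 2k$.

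For the lower bound, let $v$ be a shortest synchronizing word of $F(A)$, and write $v = u_0 c u_1 c \cdots c u_\ell$ with each $u_i \in \Sigma^*$; here $\ell \geq 1$ since a word in $\Sigma^*$ alone fixes $Q$ pointwise (assuming $|Q| > 1$), and $u_i$ is nonempty for $1 \leq i \leq \ell - 1$ because $v$ has no two consecutive $c$'s. The crucial trimming claim is that each such $u_i$ consists of at most one letter: after the preceding $c$ the reachable set lies in $Q'$, the first letter of $u_i$ moves it into $Q$, and all further $\Sigma$-symbols then act as identity. Similarly $u_0$ can be replaced by $\epsilon$ because $u_0 c$ always yields exactly $Q'$ regardless of $u_0$. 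Minimality therefore forces $u_0 = \epsilon$, $|u_i| = 1$ for $1 \leq i \leq \ell - 1$, and $|u_\ell| \leq 1$. Splitting into the cases $u_\ell \in \Sigma$ (where $u_1 u_2 \cdots u_\ell$ must be a synchronizing word of $A$ of length $\ell$, so $|v| = 2\ell \geq 2\ssl(A)$) and $u_\ell = \epsilon$ (where $u_1 \cdots u_{\ell-1}$ must be synchronizing of length $\ell - 1$, so $|v| = 2\ell - 1 \geq 2\ssl(A) + 1$), the bound $|v| \geq 2\ssl(A)$ follows in both cases. The main obstacle throughout is this trimming step, which rests on carefully tracking which side, $Q$ or $Q'$, the current reachable set lives on after each symbol and exploiting that $\Sigma$ acts trivially on $Q$.
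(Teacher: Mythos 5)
Your proposal is correct and follows essentially the same route as the paper: observe that every symbol of $F(A)$ is idempotent so that $F(A)$ is power closed and $\sw(F(A)) = \ssl(F(A))$, get the upper bound from the word $ca_1ca_2\cdots ca_k$, and get the lower bound by normalizing a shortest synchronizing word of $F(A)$ (no leading $\Sigma$-block, no repeated $\Sigma$-letters or $c$'s) so that it is essentially $h(w)$ for a synchronizing word $w$ of $A$. Your only deviation is handling a possible trailing $c$ by a case split instead of the paper's remark that the trailing $c$ can simply be dropped, which is an immaterial difference.
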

\begin{proof}
	Define the homomorphism $h : \Sigma \to (\Sigma \cup \{c\})^*$ by $h(a) = ca$ for all $a \in \Sigma$.
	Then in $F(A)$ one has $qca = q'ca = \delta(q,a)$ for all $q \in Q, a\in \Sigma$. So if $w$ is a synchronizing word for $A$, then $h(w)$ is a synchronizing word for
	$F(A)$, proving that $F(A)$ is synchronizing.
	
	Since $qc^k = qc$ and $qa^k = qa$ for all $q \in Q \cup Q', a \in \Sigma, k >0$, we conclude that $F(A)$ is power closed. Hence $\sw(F(A)) = \ssl(F(A))$.
	So it remains to prove that $\ssl(F(A)) = 2 \ssl(A)$. Let $w$ be a synchronizing word for $A$ of length $\ssl(A)$, then $h(w)$ is a synchronizing word for
	$F(A)$ of length $2\ssl(A)$, proving that $\ssl(F(A)) \leq 2 \ssl(A)$. Conversely, let $v$ be a shortest synchronizing word for $F(A)$. Using 'shortest', we observe
	\begin{itemize}
		\item $(Q \cup Q')ac = Q' = (Q \cup Q')c$, so the first element of $v$ is $c$.
		\item $qau = qa$ for all $a \in \Sigma, u \in \Sigma^*, q \in Q \cup Q'$, so no two consecutive elements in $v$ are in $\Sigma$.
		\item $qc^k = qc$ for all $a \in \Sigma, k > 0, q \in Q \cup Q'$, so no two consecutive $c$'s occur in $v$.
		\item If $v'c$ is synchronizing for $F(A)$, then so is $v'$, so the last element of $v$ is in $\Sigma$.
	\end{itemize}
	For these observations we conclude that $v = h(w)$ for some $w \in \Sigma^*$. Since $ca$ acts on $Q$ in $F(A)$ just like $a$ on $Q$ in $A$, for every $a \in \Sigma$,
	we conclude that $w$ is a synchronizing word for $A$, hence $|v| = |h(w)| = 2|w| \geq 2 \ssl(A)$, concluding the proof.
\end{proof}

The picture below shows $F({\cal C}_4)$.

\begin{center}
	\begin{tikzpicture}[-latex',node distance =2 cm and 2cm ,on grid ,
	semithick , state/.style ={ circle ,top color =white , bottom
		color = white!20 , draw, black , text=black , minimum width =.5
		cm}]
	
	\node[state] (1) {};
	\node[state] (8) [below =of 1] {};
	\node[state] (4) [below =of 8] {};
	\node[state] (5) [right =of 1] {};
	\node[state] (2) [right =of 5] {};
	\node[state] (6) [below =of 2] {};
	\node[state] (7) [right =of 4] {};
	\node[state] (3) [right =of 7] {};
	\path (1) edge node[above] {$c$} (5);
	\path (5) edge node[above] {$a,b$} (2);
	\path (2) edge node[left] {$c$} (6);
	\path (6) edge node[right] {$a$} (3);
	\path (3) edge node[above] {$c$} (7);
	\path (7) edge node[below] {$a$} (4);
	\path (4) edge node[right] {$c$} (8);
	\path (8) edge node[left] {$a$} (1);
	\path (6) edge [looseness=1, in = 300, out = 60] node[right] {$b$} (2);
	\path (7) edge [looseness=1, in = 210, out = 330] node[below] {$b$} (3);
	\path (8) edge [looseness=1, in = 120, out = 240] node[left] {$b$} (4);
	\path (5) edge [loop right,looseness=8, in = 315, out = 225] node[below] {$c$} (5);
	\path (6) edge [loop right,looseness=8, in = 315, out = 45] node[right] {$c$} (6);
	\path (7) edge [loop right,looseness=8, in = 315, out = 225] node[below] {$c$} (7);
	\path (8) edge [loop right,looseness=8, in = 225, out = 135] node[left] {$c$} (8);
	\path (1) edge [loop right,looseness=8, in = 180, out = 90] node[above] {$a,b$} (1);
	\path (2) edge [loop right,looseness=8, in = 0, out = 90] node[right] {$a,b$} (2);
	\path (3) edge [loop right,looseness=8, in = 270, out = 0] node[right] {$a,b$} (3);
	\path (4) edge [loop right,looseness=8, in = 180, out = 270] node[left] {$a,b$} (4);
	\end{tikzpicture}
\end{center}

From Theorem \ref{thmcomp} we conclude that the automaton $F({\cal C}_n)$ on $2n$ states has switch count $2(n-1)^2$ for every $n > 1$. So for every even number $n$ we have a synchronizing  automaton on $n$ states with switch count $\frac{n^2 -4n+4}{2}$. In particular, the automaton on 8 states sketched above has switch count 18.

The following key theorem is an easy consequence of Theorem \ref{thmcomp}.

\begin{theorem}
	The functions $\SW(n)$ and $\SSL(n)$ satisfy
	\[\SW(n) = \Theta(\SSL(n)).\]
\end{theorem}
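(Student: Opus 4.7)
I would prove the theorem as an almost immediate consequence of Theorem~\ref{thmcomp}, using two matching inequalities.

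The upper bound $\SW(n) \leq \SSL(n)$ is trivial: every word $w$ satisfies $\swc{w} \leq |w|$, so $\sw(A) \leq \ssl(A)$ for every synchronizing $A$, and taking the maximum over all $n$-state synchronizing automata gives the claim.

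For the lower bound, apply Theorem~\ref{thmcomp} to a synchronizing automaton $A$ on $\lfloor n/2\rfloor$ states with $\ssl(A) = \SSL(\lfloor n/2\rfloor)$. Then $F(A)$ is synchronizing on $2\lfloor n/2\rfloor \leq n$ states with switch count $2\,\SSL(\lfloor n/2\rfloor)$; if $n$ is odd one can pad $F(A)$ with an extra state absorbed in a single transition (much as in the standard proof that $\SSL$ is monotone), without decreasing the switch count. This yields $\SW(n) \geq 2\,\SSL(\lfloor n/2\rfloor)$, so combining with the trivial upper bound produces the sandwich
\[
2\,\SSL(\lfloor n/2\rfloor) \;\leq\; \SW(n) \;\leq\; \SSL(n).
\]

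To conclude $\SW(n) = \Theta(\SSL(n))$ from this, one uses that $\SSL$ has polynomially regular growth, so that $\SSL(\lfloor n/2\rfloor) = \Theta(\SSL(n))$; this is consistent with the known bounds $(n-1)^2 \leq \SSL(n) \leq (n^3-n)/6$ coming from \v{C}ern\'y and from Pin--Frankl. The main obstacle is precisely this last regularity step: Theorem~\ref{thmcomp} doubles the state count, and converting its bound into an in-place $\Theta$-equivalence requires that halving the argument of $\SSL$ costs at most a constant factor. Everything else in the proof is a direct application of what has already been established.
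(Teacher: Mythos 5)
Your proposal follows essentially the same route as the paper: the trivial inequality $\sw(A)\leq\ssl(A)$ gives $\SW(n)=O(\SSL(n))$, and Theorem~\ref{thmcomp} applied to an extremal automaton on half the number of states gives the $\Omega$-direction. The regularity step you flag as the main obstacle is precisely the step the paper also dispatches in one line (asserting $\SW(n)\geq C\cdot\SW(2n)$ for some $C>0$ ``since $\SW(n)$ is polynomial in $n$''), so your argument matches the paper's both in structure and in where the only loose point lies.
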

\begin{proof}
	By definition we have $\SW(n) \leq \SSL(n)$, so $\SW(n) = O(\SSL(n))$. From Theorem \ref{thmcomp} we conclude $\SW(2n) \geq 2 \SSL(n)$. Since $\SW(n)$ is polynomial in $n$,
	we have $\SW(n) \geq C\cdot \SW(2n)$ for some $C > 0$, combining to $\SW(n) = \Omega(\SSL(n))$.
\end{proof}

A similar result holds when restricting to binary automata. For the proof the construction $F$ is adjusted to $F_2$ as follows: suppose $\Sigma = \left\{a,b\right\}$, replace $c$ by $ab$ and introduce extra states $q''$ in the middle of each $c$ step. More precisely,
$F_2(A) = (Q \cup Q' \cup Q'', \{a,b\}, \delta')$, where $Q',Q''$ are two disjoint copies of Q, and $\delta' : (Q \cup Q' \cup Q'') \times \{a,b\} \to  Q \cup Q' \cup Q''$ is defined by
$\delta'(q,a) = q''$, $\delta'(q,b) = q$, $\delta'(q',a) = \delta(q,a)$, $\delta'(q',b) = \delta(q,b)$, $\delta'(q'',a) = q''$, $\delta'(q'',b) = q'$,
for all $q \in Q$. It can be shown that $\sw(F_2(A)) = 2\ssl(A)$ for any synchronizing automaton $A$ by an argument similar to the proof of Theorem \ref{thmcomp}, although $\sw(F_2(A))$ is not power closed.

\section{Experimental results for binary automata}
\label{secexp}

In this section we present some examples of binary automata reaching the maximal possible switch count for given number of states. For $n\leq 9$, we did an exhaustive search, leading to the maximal switch counts given in the table below. We also identified how many non-isomorfic binary automata reach this maximum.
\[
\begin{array}{c||c|c|c|c|c|c|c}
n & 3 & 4 & 5 & 6 & 7 & 8 & 9\\
\hline
\text{maximum} &\ \ 3\ \ &\ \ 7\ \ &\ \ 11\ \ &\ \ 19\ \ &\ \ 25\ \ &\ \ 31\ \ &\ \ 41\ \ \\
\text{\# automata}& 6 & 2 & 6 & 2 & 2 & 8 & 2
\end{array}
\]
For each value of $n$, we will give an example attaining the maximal possible value. For $n=8$ and $n=9$, the maximum is already attained by ${\cal A}_n$, to be investigated in Section \ref{secmr}. Here we will also give the other extremal automaton for $n=9$, and for $n=8$ we will give one more example as well. As was to be expected, the extremal automata typically only have short cycles.

%
%
%
%
%
%
%
%
%
%
%
%
%
%
%
%

\begin{center}
	\begin{tikzpicture}[-latex',node distance =1.8 cm and 1.8cm ,on grid ,
	semithick , state/.style ={ circle ,top color =white , bottom
		color = white!20 , draw, black , text=black , minimum width =.5
		cm}]
	
	\node[state] (1) {1};
	\node[state] (2) [right=of 1] {2};
	\node[state] (3) [right =of 2] {3};
	
	\path (2) edge [bend right = 0] node[above] {$a$} (1);
	
	\path (1) edge [loop left,looseness=8, in = 150, out = 210] node[left] {$a,b$} (1);
	\path (3) edge [loop left,looseness=8, in = -30, out = 30] node[right] {$a$} (3);

	\path (2) edge [bend right = 0] node[above] {$b$} (3);
	\path (3) edge [bend right = 0] node[above] {$b$} (2);
	
	

	%
	%
	%

	\node[state,white] (0) [right = of 3] {}; 
	
	\node[state] (11) [right =of 0] {1};
	\node[state] (13) [right =of 11] {3};
	\node[state] (12) [below =of 13] {2};
	\node[state] (14) [below =of 11] {4};
	
	
	\path (12) edge [bend right = 0] node[right] {$a$} (13);
	\path (13) edge [bend right = 0] node[above] {} (12);
	\path (14) edge [bend right = 0] node[below] {$a$} (12);
	\path (11) edge [loop left,looseness=8, in = 150, out = 210] node[left] {$a$} (11);

	\path (11) edge [bend right = 0] node[above] {$b$} (13);
	\path (13) edge [bend right = 0] node[below] {$b$} (14);
	\path (14) edge [bend right = 0] node[left] {$b$} (11);
	
	\path (12) edge [loop left,looseness=8, in = -30, out = 30] node[right] {$b$} (12);
	
	\end{tikzpicture}
\end{center}

The automaton on three states above has switch count 3. It has a unique shortest synchronizing word $w=aba$, with $|w|=|w|_s=3$. Note that there never is a unique synchronizing word with minimal switch count, since there are always trivial extensions at the beginning and the end. Also at other positions often letters can be added. In this example on three states, every word of the form $ab^ka$ with $k$ odd is synchronizing as well.

In the examples that follow, we will always give a word with minimal switch count which is as short as possible. Under this restriction, we can again ask if this word is unique. In particular, the answer is positive if there is a unique shortest synchronizing word which happens to have minimal switch count. Remarkably, it is possible that the shortest synchronizing word does not have minimal switch count, as we will see in our example with 8 states.

The automaton on four states has switch count 7. Also this one has a unique shortest synchronizing word $w=abab^2aba$, with $|w|=8$ and $|w|_s=7$.

%
%
%
%
%
%
%
%
%
%
%
%

\begin{center}
	\begin{tikzpicture}[-latex',node distance =1.7 cm and 1.7cm ,on grid ,
	semithick , state/.style ={ circle ,top color =white , bottom
		color = white!20 , draw, black , text=black , minimum width =.5
		cm}]
	
	\node[state] (1) {1};
	\node[state] (2) [below =of 1] {2};
	\node[state] (3) [right =of 2] {3};
	\node[state] (4) [right =of 3] {4};
	\node[state] (5) [above =of 4] {5};
	
	\path (3) edge [bend right = 0] node[above] {$a$} (4);
	\path (4) edge [bend right = 0] node[right] {$a$} (5);
	\path (5) edge [bend right = 0] node[above] {$a$} (3);
	
	\path (1) edge [loop left,looseness=8, in = 150, out = 210] node[left] {$a$} (1);
	\path (2) edge [loop left,looseness=8, in = 150, out = 210] node[left] {$a$} (2);

	\path (1) edge [bend right = 0] node[left] {$b$} (2);
	\path (2) edge [bend right = 0] node[above] {$b$} (3);
	\path (3) edge [bend right = 0] node[above] {$b$} (1);
	\path (5) edge [bend right = 0] node[above] {$b$} (1);
	
	\path (4) edge [loop left,looseness=8, in = -30, out = 30] node[right] {$b$} (4);
	

	%
	%
	%

	\node[state,white] (0) [right = of 5] {}; 
	
	\node[state] (11) [right =of 0] {1};
	\node[state] (12) [below =of 11] {2};
	\node[state] (14) [right =of 11] {4};
	\node[state] (16) [below =of 14] {6};
	\node[state] (15) [right =of 14] {5};
	\node[state] (13) [below =of 15] {3};
	
	
	\path (14) edge [bend right = 0] node[right] {} (15);
	\path (15) edge [bend right = 0] node[above] {$a$} (14);
	\path (16) edge [bend right = 0] node[left] {$a$} (14);
	\path (11) edge [loop left,looseness=8, in = 150, out = 210] node[left] {$a$} (11);
	\path (12) edge [loop left,looseness=8, in = 150, out = 210] node[left] {$a$} (12);
	\path (13) edge [loop left,looseness=8, in = -30, out = 30] node[right] {$a$} (13);

	\path (11) edge [bend right = 0] node[left] {$b$} (12);
	\path (12) edge [bend right = 0] node[above] {$b$} (14);
	\path (13) edge [bend right = 0] node[right] {$b$} (15);
	\path (14) edge [bend right = 0] node[above] {$b$} (11);
	\path (15) edge [bend right = 0] node[above] {$b$} (16);
	\path (16) edge [bend right = 0] node[below] {$b$} (13);
	
	
	\end{tikzpicture}
\end{center}

On five states we have an automaton with switch count 11. It has a unique shortest synchronizing word $w=ba^2baba^2bab^2a^2b$ with $|w|=15$ and $|w|_s=11$. In this case, there exist synchronizing words with the same switch count that can not be obtained by merely adding some letters to $w$, for instance $ba^2bab^2a^2b^2aba^2b$ of length 16. The example on six states has switch count 19 and unique shortest synchronizing word $w = abab^2ababab^2ababa^2b^2aba$, satisfying $|w|=23$ and $|w|_s=19$.

\begin{center}
	\begin{tikzpicture}[-latex',node distance =1.7 cm and 1.7cm ,on grid ,
	semithick , state/.style ={ circle ,top color =white , bottom
		color = white!20 , draw, black , text=black , minimum width =.5
		cm}]
	
	\node[state] (1) {1};
	\node[state] (3) [below =of 1] {3};
	\node[state] (4) [right =of 3] {4};
	\node[state] (2) [below =of 4] {2};
	\node[state] (5) [below =of 3] {5};
	\node[state] (6) [right =of 1] {6};
	\node[state] (7) [right =of 6] {7};
	
	
	\path (3) edge [bend right = 0] node[above] {$a$} (4);
	\path (4) edge [bend right = 0] node[right] {} (3);
	\path (5) edge [bend right = 0] node[left] {$a$} (3);
	\path (6) edge [bend right = 0] node[above] {$a$} (7);
	\path (7) edge [bend right = 0] node[below] {} (6);
	\path (1) edge [loop left,looseness=8, in = 150, out = 210] node[left] {$a$} (1);
	\path (2) edge [loop left,looseness=8, in = -30, out = 30] node[right] {$a$} (2);

	\path (1) edge [bend right = 0] node[left] {$b$} (3);
	\path (2) edge [bend right = 0] node[below] {$b$} (5);
	\path (3) edge [bend right = 0] node[below] {$b$} (6);
	\path (4) edge [bend right = 0] node[right] {$b$} (2);
	\path (5) edge [bend right = 0] node[above] {$b$} (4);
	\path (6) edge [bend right = 0] node[above] {$b$} (1);
	
	\path (7) edge [loop below,looseness=8, in = -30, out = 30] node[right] {$b$} (7);
	
	
	\node[state,white] (0) [right = of 7] {}; 
	\node[state] (14) [below right = of 0] {4};
	\node[state] (13) [left =of 14] {3};
	\node[state] (11) [below =of 13] {1};
	\node[state] (17) [right =of 14] {7};
	\node[state] (15) [right =of 17] {5};
	\node[state] (16) [below =of 17] {6};
	\node[state] (12) [above =of 15] {2};
	\node[state] (18) [above =of 17] {8};
	
	
	\path (13) edge [bend right = 0] node[above] {$a$} (14);
	\path (14) edge [bend right = 0] node[right] {} (13);
	\path (15) edge [bend right = 0] node[below] {$a$} (16);
	\path (16) edge [bend right = 0] node[left] {$a$} (17);
	\path (17) edge [bend right = 0] node[left] {$a$} (18);
	\path (18) edge [bend right = 0] node[above] {$a$} (15);
	\path (11) edge [loop left,looseness=8, in = 150, out = 210] node[left] {$a$} (11);
	\path (12) edge [loop left,looseness=8, in = -30, out = 30] node[right] {$a$} (12);

	\path (11) edge [bend right = 0] node[left] {$b$} (13);
	\path (12) edge [bend right = 0] node[right] {$b$} (15);
	\path (13) edge [bend right = 0] node[below] {} (11);
	\path (14) edge [bend right = 0] node[above] {$b$} (18);
	\path (15) edge [bend right = 0] node[above] {} (12);
	\path (16) edge [bend right = 0] node[below] {$b$} (14);
	\path (18) edge [bend right = 0] node[above] {} (14);
	
	\path (17) edge [loop left,looseness=8, in = -30, out = 30] node[right] {$b$} (17);

	\end{tikzpicture}
\end{center}
For $n=7$ and $n=8$, the above two automata reach the maximal possible switch count, 25 and 31 respectively. The one on seven states has exactly three shortest synchronizing words of length 32, given by the regular expression
\[
ab^2abab(ab^2a(bab+aba)+bab^2aba)bab^2abab^2ab^2abab^2a.
\]
Al these three words have switch count 25.

The automaton on eight states has the unique shortest synchronizing word
\[
w = ba^3(ba)^3a(ba)^4(ab)^2(ba)^3(ab)^2(ba)^2a(ab)^2
\]
with $|w|=42$ and $|w|_s = 33$. So this word does not have the minimal possible switch count. Instead the minimal switch count is attained by the synchronizing word
\[
w' = ba^3(ba)^3a(ba)^4(ab)^2(ba)^2a^2ba^3ba^2ba^2(ab)^2,
\]
satisfying $|w'|=43>|w|$ and $|w'|_s = 31<|w|_s$.


On $n=9, 10, 11$ states, we have three automata that look very much alike and attain the same switch counts as ${\cal A}_n$. However, the obvious extension to $n=12$ only reaches switch count 77, while ${\cal A}_{12}$ has switch count 79. Below these three automata are given. They all have a unique shortest synchronizing word.

\begin{center}
	\begin{tikzpicture}[-latex',node distance =1.8 cm and 1.8cm ,on grid ,
	semithick , state/.style ={ circle ,top color =white , bottom
		color = white!20 , draw, black , text=black , minimum width =.5
		cm}]
	
	\node[state] (1) {1};
	\node[state] (2) [below =of 1] {2};
	\node[state] (3) [right =of 2] {3};
	\node[state] (4) [right =of 3] {4};
	\node[state] (5) [right =of 4] {5};
	\node[state] (6) [right =of 5] {6};
	\node[state] (7) [right =of 1] {7};
	\node[state] (8) [right =of 7] {8};
	\node[state] (9) [right =of 8] {9};
	
	\path (2) edge [bend right = 0] node[above] {$a$} (3);
	\path (3) edge [bend right = 0] node[above] {$a$} (2);
	\path (4) edge [bend right = 0] node[right] {$a$} (8);
	\path (5) edge [bend right = 0] node[above] {$a$} (6);
	\path (6) edge [bend right = 0] node[above] {$a$} (5);
	\path (7) edge [bend right = 0] node[above] {$a$} (8);
	\path (8) edge [bend right = 0] node[above] {$a$} (7);
	\path (1) edge [loop left,looseness=8, in = 150, out = 210] node[left] {$a$} (1);
	\path (9) edge [loop left,looseness=8, in = -30, out = 30] node[right] {$a$} (9);
	
	\path (1) edge [bend right = 0] node[right] {$b$} (2);
	\path (2) edge [bend right = 0] (1);
	\path (3) edge [bend right = 0] node[above] {$b$} (4);
	\path (4) edge [bend right = 0] node[above] {$b$} (5);
	\path (5) edge [bend right = 0] node[above] {$b$} (4);
	\path (7) edge [bend right = 0] node[right] {$b$} (3);
	\path (8) edge [bend right = 0] node[above] {$b$} (9);
	\path (9) edge [bend right = 0] node[above] {$b$} (8);
	
	\path (6) edge [loop left,looseness=8, in = -30, out = 30] node[right] {$b$} (6);

	\end{tikzpicture}
\end{center}
Automaton on 9 states with switch count 41. The shortest synchronizing word has length 49 and switch count 41:
\[
b^2(ab)^4b(ab)^5b(ab)^5b(ba)^3(ab)^2b^2(ab)^2.
\]

\begin{center}
	\begin{tikzpicture}[-latex',node distance =1.8 cm and 1.8cm ,on grid ,
	semithick , state/.style ={ circle ,top color =white , bottom
		color = white!20 , draw, black , text=black , minimum width =.5
		cm}]
	
	\node[state] (1) {1};
	\node[state] (2) [below =of 1] {2};
	\node[state] (3) [right =of 2] {3};
	\node[state] (4) [right =of 3] {4};
	\node[state] (5) [right =of 4] {5};
	\node[state] (6) [right =of 5] {6};
	\node[state] (7) [right =of 1] {7};
	\node[state] (8) [right =of 7] {8};
	\node[state] (9) [right =of 8] {9};
	\node[state] (10) [right =of 9] {10};
	
	\path (2) edge [bend right = 0] node[above] {$a$} (3);
	\path (3) edge [bend right = 0] node[above] {$a$} (2);
	\path (4) edge [bend right = 0] node[right] {$a$} (8);
	\path (5) edge [bend right = 0] node[above] {$a$} (6);
	\path (6) edge [bend right = 0] node[above] {$a$} (5);
	\path (7) edge [bend right = 0] node[above] {$a$} (8);
	\path (8) edge [bend right = 0] node[above] {$a$} (7);
	\path (9) edge [bend right = 0] node[above] {$a$} (10);
	\path (10) edge [bend right = 0] node[above] {$a$} (9);
	\path (1) edge [loop left,looseness=8, in = 150, out = 210] node[left] {$a$} (1);
	
	\path (1) edge [bend right = 0] node[right] {$b$} (2);
	\path (2) edge [bend right = 0] (1);
	\path (3) edge [bend right = 0] node[above] {$b$} (4);
	\path (4) edge [bend right = 0] node[above] {$b$} (5);
	\path (5) edge [bend right = 0] node[above] {$b$} (4);
	\path (7) edge [bend right = 0] node[right] {$b$} (3);
	\path (8) edge [bend right = 0] node[above] {$b$} (9);
	\path (9) edge [bend right = 0] node[above] {$b$} (8);
	
	\path (6) edge [loop left,looseness=8, in = -30, out = 30] node[right] {$b$} (6);
	\path (10) edge [loop left,looseness=8, in = -30, out = 30] node[right] {$b$} (10);
	
	\end{tikzpicture}
\end{center}
Automaton on 10 states with switch count 53. The shortest synchronizing word has length 63 and switch count 53:
\[
b^2(ab)^4b(ab)^5b(ab)^6(ba)^3(ab)^3b(ba)^3(ab)^2b^2(ab)^2.
\]

\begin{center}
	\begin{tikzpicture}[-latex',node distance =1.8 cm and 1.8cm ,on grid ,
	semithick , state/.style ={ circle ,top color =white , bottom
		color = white!20 , draw, black, text=black , minimum width =.5
		cm}]
	
	\node[state] (1) {1};
	\node[state] (2) [below =of 1] {2};
	\node[state] (3) [right =of 2] {3};
	\node[state] (4) [right =of 3] {4};
	\node[state] (5) [right =of 4] {5};
	\node[state] (6) [right =of 5] {6};
	\node[state] (7) [right =of 1] {7};
	\node[state] (8) [right =of 7] {8};
	\node[state] (9) [right =of 8] {9};
	\node[state] (10) [right =of 9] {10};
	\node[state] (11) [right =of 10] {11};
	
	\path (2) edge [bend right = 0] node[above] {$a$} (3);
	\path (3) edge [bend right = 0] node[above] {$a$} (2);
	\path (4) edge [bend right = 0] node[right] {$a$} (8);
	\path (5) edge [bend right = 0] node[above] {$a$} (6);
	\path (6) edge [bend right = 0] node[above] {$a$} (5);
	\path (7) edge [bend right = 0] node[above] {$a$} (8);
	\path (8) edge [bend right = 0] node[above] {$a$} (7);
	\path (9) edge [bend right = 0] node[above] {$a$} (10);
	\path (10) edge [bend right = 0] node[above] {$a$} (9);
	\path (1) edge [loop left,looseness=8, in = 150, out = 210] node[left] {$a$} (1);
	\path (11) edge [loop left,looseness=8, in = -30, out = 30] node[right] {$a$} (11);
	
	\path (1) edge [bend right = 0] node[right] {$b$} (2);
	\path (2) edge [bend right = 0] (1);
	\path (3) edge [bend right = 0] node[above] {$b$} (4);
	\path (4) edge [bend right = 0] node[above] {$b$} (5);
	\path (5) edge [bend right = 0] node[above] {$b$} (4);
	\path (7) edge [bend right = 0] node[right] {$b$} (3);
	\path (8) edge [bend right = 0] node[above] {$b$} (9);
	\path (9) edge [bend right = 0] node[above] {$b$} (8);
	\path (10) edge [bend right = 0] node[above] {$b$} (11);
	\path (11) edge [bend right = 0] node[above] {$b$} (10);
	
	\path (6) edge [loop left,looseness=8, in = -30, out = 30] node[right] {$b$} (6);
	
	\end{tikzpicture}
\end{center}
Automaton on 11 states with switch count 65. The shortest synchronizing word has length 77 and switch count 65:
\[
b^2(ab)^4b(ab)^5b(ab)^6(ba)^3(ab)^4(ba)^3(ab)^3b(ba)^3(ab)^2b^2(ab)^2.
\]

\section{Basic constructions}
\label{secbc}

In this section we give some series of automata for which the switch count is quadratic, more precisely, $\frac{1}{2}n^2 + O(n)$ for automata on $n$ states.
A first one satisfying this property is $F({\cal C}_{n/2})$ for even $n$ as presented in Section \ref{seccomp}, having three symbols and switch count $\frac{n^2 -4n+4}{2}$. Here we want to
present and discuss some other patterns that lead to quadratic switch counts. These constructions provide relatively easy improvements over the bounds in Section \ref{seccomp} and include a sequence of binary automata.

The first construction has switch count $\frac{n^2 - n}{2}$ for automata on $n$ states and $n-1$ symbols. A slight improvement of the same idea yields switch count $\frac{n^2 + n}{2}$ for $n \geq 5$, being the highest switch count that we know for $n = 5,6,7,8,9,10,11$.
The last construction is binary and yields switch count $\frac{n^2 - 6n + 10}{2}$ for $n$ even.

\subsection{No bound on number of symbols}

For $n \geq 2$ we define the DFA ${\cal P}_n$ on $n$ states $1,2,\ldots,n$ and $n-1$ symbols $a_1,a_2,\ldots,a_{n-1}$ as follows:
\[2 a_1 = 1,\;\; q a_1 = q \mbox{ for all $q \neq 2$}\]
and
\[i a_i = i+1,\;\; (i+1) a_i = i,\;\; q a_i = q \mbox{ for all $q \neq i, i+1$}\]
for all $i = 2,3,\ldots,n-1$.
Leaving out all arrows from a state to itself, in a picture this looks as follows:

\begin{center}
	\begin{tikzpicture}[-latex',node distance =2 cm and 2cm ,on grid ,
	semithick , state/.style ={ circle ,top color =white , bottom
		color = white!20 , draw, black , text=black , minimum width =.5
		cm}]
	
	\node[state] (2) {2};
	\node[state] (1) [below =of 2] {1};
	\node[state] (3) [right =of 2] {3};
	\node[state] (4) [right =of 3] {4};
	\node (5) [right =of 4] {$\cdots$};
	\node[state] (6) [right =of 5] {};
	\node[state] (7) [right =of 6] {$n$};
	\path (2) edge node[left] {$a_1$} (1);
	
	\path (2) edge [bend left = 15] node[above] {$a_2$} (3);
	\path (3) edge [bend left = 15] node[below] {$a_2$} (2);
	\path (3) edge [bend left = 15] node[above] {$a_3$} (4);
	\path (4) edge [bend left = 15] node[below] {$a_3$} (3);
	\path (4) edge [bend left = 15] node[above] {$a_4$} (5);
	\path (5) edge [bend left = 15] node[below] {$a_4$} (4);
	\path (6) edge [bend left = 15] node[above] {$a_{n-1}$} (7);
	\path (7) edge [bend left = 15] node[below] {$a_{n-1}$} (6);
	
	\end{tikzpicture}
\end{center}

\begin{theorem}
	\label{thm1}
	For every $n\geq 2$ the DFA ${\cal P}_n$ is synchronizing, and its switch count is equal to its minimal synchronizing word length, which is $\frac{n^2 - n}{2}$.
\end{theorem}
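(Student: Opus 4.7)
The plan is to reduce the switch-count problem to the shortest-synchronizing-word problem via power closure, and then sandwich $\ssl(\mathcal{P}_n)$ between $\binom{n}{2}$ and $\binom{n}{2}$ by an explicit construction and a potential argument. First, I would observe that $\mathcal{P}_n$ is power closed: each $a_i$ with $i\geq 2$ is the transposition swapping $i$ and $i+1$, hence $a_i^2=\id$; and $a_1$ is idempotent because $1a_1=2a_1=1$ forces $a_1^k=a_1$ for every $k\geq 1$. Therefore, by the observation following the definition of power closure in the introduction, $\sw(\mathcal{P}_n)=\ssl(\mathcal{P}_n)$, and it suffices to prove $\ssl(\mathcal{P}_n)=\frac{n^2-n}{2}$.

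For the upper bound I would exhibit the synchronizing word
\[ w_n \;=\; \prod_{k=1}^{n-1}\bigl(a_k\, a_{k-1}\cdots a_2\, a_1\bigr), \]
which has length $1+2+\cdots+(n-1)=\binom{n}{2}$. A straightforward induction on $k$ shows that, after processing the first $k$ blocks, the reachable set equals $\{1,k+2,k+3,\ldots,n\}$: inside the $k$-th block the swap $a_k$ moves the smallest non-$1$ state from $k+1$ down to $k$, the successive swaps $a_{k-1},\ldots,a_2$ bubble it down to position $2$, and $a_1$ finally merges it into state $1$. After the $(n-1)$-th block the set is $\{1\}$, so $w_n$ is synchronizing.

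For the matching lower bound I would use the potential
\[ \phi(S)=\sum_{q\in S}(q-1),\qquad S\subseteq Q, \]
which satisfies $\phi(Q)=\binom{n}{2}$ and $\phi(\{1\})=0$. Every symbol fixes state $1$, so the only possible synchronizing state is $1$, and I must drive $\phi$ from $\binom{n}{2}$ down to $0$. The key claim is that each single symbol decreases $\phi$ by at most $1$. For $a_i$ with $i\geq 2$, the transposition of $i$ and $i+1$ changes $\phi$ by $0$ (if both or neither of $i,i+1$ lie in $S$), by $-1$ (if only $i+1\in S$), or by $+1$ (if only $i\in S$). For $a_1$: if $2\in S$, the element $2$ is either replaced by $1$ or absorbed into a pre-existing $1$, which decreases $\phi$ by exactly $1$; if $2\notin S$, then $Sa_1=S$. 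Consequently any synchronizing word has length at least $\phi(Q)-\phi(\{1\})=\binom{n}{2}$.

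The one piece of work that needs care is the case analysis verifying $\phi(Sa_i)-\phi(S)\geq -1$ for each $a_i$, but it is routine; all the other ingredients (power closure, the explicit construction, the fact that state $1$ is a sink) are immediate from the definition of $\mathcal{P}_n$. Crucially, no separate combinatorial argument for switch count is required: the power-closure observation transports the word-length bound to a switch-count bound for free.
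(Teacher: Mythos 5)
Your proposal is correct and follows essentially the same route as the paper: the identical synchronizing word $a_1\,(a_2a_1)\,(a_3a_2a_1)\cdots$ for the upper bound, and the same weight/potential $\sum_{q\in S}(q-1)$ with synchronizing state $1$ for the lower bound. The only cosmetic difference is that you transfer the length bound to switch count via the power-closure observation, whereas the paper notes directly that each power $a_i^k$ acts as $a_i$ or the identity and runs the weight argument per block of equal letters; these amount to the same thing.
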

\begin{proof}
	As for every symbol $a$ and every $k>1$ the operation $a^k$ is equal to the operation $a$ or the identity, we see that in a minimal synchronizing word never two consecutive equal symbols will occur. The switch count is equal to the minimal synchronizing word length.
	
	For $k = 1,\ldots,n-1$ we define $w_k$ by
	$w_1 = a_1$ and $w_k = w_{k-1} a_k a_{k-1} \cdots a_2 a_1$ for $k = 2,\ldots,n-1$. We prove by induction on $k$ that in the power automaton we have $\{1,2,\ldots,k+1\} w_k = \{1\}$.
	For $k=1$ this holds since $2 a_1 = 1 a_1 = 1$. For the induction step we compute
	\[ \begin{array}{rcl}
	\{1,2,\ldots,k+1\} w_k &=& \{1,2,\ldots,k+1\} w_{k-1} a_k a_{k-1} \cdots a_2 a_1 \\
	&=& \{1,k+1\} a_k a_{k-1} \cdots a_2 a_1 \\
	&=& \{1\} \end{array} \]
	using the induction hypothesis $\{1,2,\ldots,k\} w_{k-1} = \{1\}$ and the fact that all symbols in $w_{k-1}$ act as the identity on $k+1$.
	
	This proves that $w_{n-1}$ is synchronizing, and it has both length and switch count $\frac{n^2 - n}{2}$.
	
	To prove that no synchronizing word of lower switch count exists let $w$ be any synchronizing word. Define the weight $W(q)$ of state $q$ to be $q-1$ for $q = 1,2,\ldots,n$, and the weight $W(S)$ of a set $S$ of states
	to be the sum of the weights of its elements. Then for every set $S$ of states, for every $i,k > 0$  we obtain $W(S a_i^k) \geq W(S) -1$ since $a_i = a_i^k$ moves one state by one position, and
	leaves all other states. As 1 is the only state to which the word $w$ can synchronize, $W(\{1,2,\ldots,n\}) = \frac{n^2 - n}{2}$ and $W(\{1\}) = 0$ we conclude that the length of $w$ is at least $\frac{n^2 - n}{2}$.
	
\end{proof}

The automaton ${\cal P}_n$ can be seen as an instance of a more general idea: take a basic synchronizing automaton, in this case two states 1 and 2 with one symbol $a_1$ mapping both states to 1, and extend a particular state, in this case state 2, by a sequence of fresh states and fresh symbols that behave as symbols 2 to $n$ in ${\cal P}_n$. Other instances of the same idea may lead to slightly higher switch counts, but always still of the shape $\frac{n^2}{2} + O(n)$. For instance,

\begin{center}
	\begin{tikzpicture}[-latex',node distance =2 cm and 2cm ,on grid ,
	semithick , state/.style ={ circle ,top color =white , bottom
		color = white!20 , draw, black , text=black , minimum width =.5
		cm}]
	
	\node[state] (2) {1};
	\node[state] (3) [right =of 2] {2};
	\node[state] (4) [right =of 3] {3};
	\node (5) [right =of 4] {$\cdots$};
	\node[state] (6) [right =of 5] {};
	\node[state] (7) [right =of 6] {$n$};
	
	\path (2) edge [bend left = 15] node[above] {$a_1, a_2$} (3);
	\path (3) edge [bend left = 15] node[below] {$a_2$} (2);
	\path (3) edge [bend left = 15] node[above] {$a_3$} (4);
	\path (4) edge [bend left = 15] node[below] {$a_3$} (3);
	\path (4) edge [bend left = 15] node[above] {$a_4$} (5);
	\path (5) edge [bend left = 15] node[below] {$a_4$} (4);
	\path (6) edge [bend left = 15] node[above] {$a_n$} (7);
	\path (7) edge [bend left = 15] node[below] {$a_n$} (6);
	
	\end{tikzpicture}
\end{center}

has switch count $\frac{n^2 + n - 4}{2}$.

The highest value of a variant of this shape we found is obtained by starting from the automaton found by Roman in \cite{R08} on 5 states and 3 symbols having minimal synchronization length 16 (so the same length as ${\cal C}_5$), and switch count 15. This automaton is equal to ${\cal R}_5$ where for $n \geq 5$ the automaton ${\cal R}_n$ is defined by the picture below, again omitting self-loops.

\begin{center}
	\begin{tikzpicture}[-latex',node distance =2 cm and 2cm ,on grid ,
	semithick , state/.style ={ circle ,top color =white , bottom
		color = white!20 , draw, black , text=black , minimum width =.5
		cm}]
	
	\node[state] (1) {1};
	\node[state] (3) [below =of 1] {3};
	\node[state] (2) [right =of 1] {2};
	\node[state] (4) [right =of 3] {4};
	\node[state] (5) [right =of 4] {5};
	\node[state] (6) [right =of 2] {6};
	\node (7) [right =of 6] {$\cdots$};
	\node[state] (8) [right =of 7] {};
	\node[state] (9) [right =of 8] {$n$};
	\path (3) edge node[left] {$a_1$} (2);
	
	\path (1) edge [bend left = 15] node[above] {$a_3$} (2);
	\path (2) edge [bend left = 15] node[below] {$a_3$} (1);
	\path (3) edge [bend left = 15] node[above] {$a_2$} (4);
	\path (4) edge [bend left = 15] node[below] {$a_2$} (3);
	\path (4) edge [bend left = 15] node[above] {$a_3$} (5);
	\path (5) edge [bend left = 15] node[below] {$a_3$} (4);
	\path (2) edge [bend left = 15] node[right] {$a_1$} (4);
	\path (4) edge [bend left = 15] node[left] {$a_1$} (2);
	\path (2) edge [bend left = 15] node[above] {$a_4$} (6);
	\path (6) edge [bend left = 15] node[below] {$a_4$} (2);
	\path (6) edge [bend left = 15] node[above] {$a_5$} (7);
	\path (7) edge [bend left = 15] node[below] {$a_5$} (6);
	\path (8) edge [bend left = 15] node[above] {$a_{n-2}$} (9);
	\path (9) edge [bend left = 15] node[below] {$a_{n-2}$} (8);
	
	\end{tikzpicture}
\end{center}

\begin{theorem}
	For every $n \geq 5$ the automaton ${\cal R}_n$ is synchronizing and  has switch count $\frac{n^2 + n}{2}$.
\end{theorem}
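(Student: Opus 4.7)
My plan is to prove the theorem by constructing an explicit synchronizing word realizing switch count $\frac{n(n+1)}{2}$ (upper bound) and then establishing a matching lower bound, following the template of Theorem \ref{thm1} but with extra care for the richer structure of ${\cal R}_n$.

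For the upper bound I would induct on $n$. The base case ${\cal R}_5$ admits an explicit synchronizing word $w^{(5)}$ of switch count $15$ collapsing the full state set to $\{2\}$, the unique image of the collapse symbol $a_1$; this is essentially Roman's word from \cite{R08}. For $n \geq 6$, ${\cal R}_n$ adjoins state $n$ and the swap $a_{n-2}$, so I would build $w^{(n)}$ by prefixing $w^{(n-1)}$ with a short block that uses the chain cascade $a_{n-2} a_{n-3} \cdots a_4$ together with the core operations $a_1, a_2$ to ferry state $n$ down the chain and integrate it into the core before invoking the induction hypothesis. A careful count, analogous to the inductive definition $w_k = w_{k-1} a_k a_{k-1} \cdots a_1$ in the proof of Theorem \ref{thm1}, shows that incorporating state $k$ contributes exactly $k$ switches, yielding the total $15 + 6 + 7 + \cdots + n = \frac{n(n+1)}{2}$.

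For the lower bound I would adapt the weight-function approach of Theorem \ref{thm1}: assign weights $W(q) \in \mathbb{Z}_{\geq 0}$, extend additively to subsets, set $W(2) = 0$ at the sync target, and verify that each maximal run of a single symbol decreases $W$ by at most $1$. The chain states naturally carry $W(k) = k - 5$ for $k = 6, \ldots, n$, satisfying the swap constraints $|W(k) - W(k+1)| \leq 1$ and the interface $|W(2) - W(6)| \leq 1$, and contributing $\frac{(n-5)(n-4)}{2}$ to the total. The main obstacle is the core: the constraint $W(3) \leq 1$ forced by $\{3,4\} a_1^2 = \{4\}$, together with the double-swap bounds from $a_3$ linking $W(1), W(2), W(4), W(5)$, caps the additive core weight well below the required $5n - 10$, so a purely additive potential falls short by a $\Theta(n)$ term. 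To close this gap I would either refine to a non-additive potential on subsets tuned to the specific collapse dynamics of $a_1$ and $a_1^2$, or combine the weight bound on the chain with an inductive argument $\sw({\cal R}_n) \geq \sw({\cal R}_{n-1}) + n$ exploiting the fact that state $n$ can only be moved by $a_{n-2}$ and must therefore traverse the whole chain before reaching the collapse region $\{3,4\}$, forcing at least $n$ fresh switches. The base case $\sw({\cal R}_5) = 15$ can be confirmed by direct inspection of the subset automaton.
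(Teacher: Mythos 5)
Your upper-bound recursion runs in the wrong direction, and as described it cannot work. In $\mathcal{R}_n$ the only non-injective symbol is $a_1$ (merging $3,4\mapsto 2$), and state $n$ is touched only by $a_{n-2}$, which swaps $n-1$ and $n$. Hence, looking at image sets $Qw$, the chain positions $6,7,\ldots,n$ can only be vacated from the bottom up: $n$ leaves the image only when $a_{n-2}$ is applied at a moment when $n-1$ is already absent, and so on down to $6$, whose removal needs $2$ to be absent, which in turn needs core collapses to have happened first (before any collapse the full state set is invariant under every swap). So a \emph{prefix} of $w^{(n-1)}$ cannot ``ferry state $n$ down and integrate it into the core''; position $n$ is necessarily the \emph{last} chain position to be emptied. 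The paper's construction interleaves instead: after $a_1$ and one core reduction, each chain position $q=6,\ldots,n$ is funneled to state $2$ by the cascade $a_{q-2}a_{q-3}\cdots a_4$ ($q-5$ switches) and then eliminated by the fixed $5$-switch core word $a_2a_3a_1a_3a_1$, with the $9$-switch synchronization of $\{1,4,5\}$ appended at the very end; recursively, the block for the new state $n$ is inserted just before that final suffix, not at the front. Your total $15+6+\cdots+n$ is the right number, but the construction you propose does not realize it.

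For the lower bound you rightly note that a purely additive potential falls short by $\Theta(n)$, but neither proposed repair is carried out, and both miss the single quantitative fact that closes the gap: every time a chain state enters the core (i.e.\ $a_4$ puts $2$ back into the current set), expelling it again costs switch count at least $5$ --- formally, for reachable $V\subseteq\{1,\ldots,5\}$ with $2\in V$, any $w$ with $Vw\subseteq V\setminus\{2\}$ has switch count $\geq 5$, which the paper verifies by inspecting the power automaton of $\mathcal{R}_5$. The paper then uses the non-additive potential $W(V)=S(V)+\sum_{i\in V,\,i>5} i$, where $S(V)$ is the least switch count of a word mapping $V\cap\{1,\ldots,5\}$ to $\{2\}$, shows $W$ drops by at most $1$ per letter block (the $a_4$ case being exactly the fact above), and computes $W(Q)=15+\sum_{i=6}^n i=\frac{n^2+n}{2}$. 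Your alternative (b) attributes all $n$ extra switches per added state to the chain traversal, but the traversal only yields about $n-5$ of them; the missing $5$ is precisely this core-reset cost, so $\sw(\mathcal{R}_n)\geq\sw(\mathcal{R}_{n-1})+n$ does not follow from ``state $n$ must cross the chain'' (moreover, deleting the $a_{n-2}$'s from a word for $\mathcal{R}_n$ does not give a word for $\mathcal{R}_{n-1}$, since $a_{n-2}$ also moves $n-1$). Without the $5$-per-merge fact your count tops out near $\frac{(n-5)(n-4)}{2}+O(n)$, about $5n$ below the claimed $\frac{n^2+n}{2}$.
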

\begin{proof}
	Observe that $\{1,2,3,4,5\} a_1 = \{1,2,4,5\}$ and $\{1,2,4,5\} a_2a_3a_1a_3a_1 = \{1,4,5\}$. So by the string $a_2a_3a_1a_3a_1$ of switch count 5 the state 2 is removed from the set
	$\{1,2,4,5\}$. By doing this $n-4$ times in combination with $a_i$ for $i \geq 4$ similar to the proof of Theorem \ref{thm1}, all states $q \geq 6$ can be removed. Finally, the set $\{1,4,5\}$ synchronizes by $a_3a_2a_3a_1^2a_3a_1a_2a_3a_1$ to the state 2, yielding a total switch count $\frac{n^2 + n}{2}$.
	
	To prove that a lower switch count is not possible, for a set $V \subseteq \{1,2,\ldots,n\}$ we define $S(V)$ to be the smallest switch count of a word $w$ for which $(V \cap \{1,2,3,4,5\})w = \{2\}$, and define the weight $W(V)$ by
	\[ W(V) \; = \; S(V) + \sum_{i \in V \wedge i > 5} i .\]
	Let $Q = \{1,2,\ldots,n\}$. Note that $S(Q) = 15$ is equal to the switch count of the Roman automaton ${\cal R}_5$, in which the synchronizing state is 2. So $W(Q) = 15 + \sum_{i=6}^n i = \frac{n^2 + n}{2}$, and $W(\{2\}) = 0$. It remains to show that if $Va_i = V'$, then $W(V') \geq W(V)-1$, for all $i,V$. For $i\geq 5$ and $i\leq 3$ this is direct from the definition.
	For $i=4$ this follows from the fact that for reachable $V \subseteq \{1,2,3,4,5\}$ it holds that if $2 \in V$ and $Vw \subseteq V \setminus \{2\}$ then $w$ has switch count $\geq 5$. This is proved by analyzing the power automaton of the Roman automaton ${\cal R}_5$.
\end{proof}

\subsection{Restricting to binary automata}

A first idea to find automata on two symbols with a high switch count is to take the automata ${\cal P}_n$ or variants, and identify many of the symbols.
This turns out to yield very low switch counts, so we need something completely different. The modification $F_2({\cal C}_n)$ of ${\cal C}_n$ as given in the end of Section \ref{seccomp} only yields
$\frac{2n^2}{9} + O(n)$ for $n$ being the number of states. Again inspired by ${\cal C}_n$, instead we define the following binary automaton ${\cal Q}_n$ on the even number
$n = 2k$ of states $1,2,3,\ldots,2k$ on which the two symbols $a,b$ are defined by
\[ (2i)a = (2i-1)a = 2i \mbox{ for $i = 1,2,\ldots,k$, and}\]
\[ (2i-2)b = (2i-1)b = 2i-1 \mbox{ for $i = 2,3,\ldots,k$, and } 1b = 3,\; (2k)b = 1,\]
as is shown in the following picture:
\begin{center}
	\begin{tikzpicture}[-latex',node distance =2 cm and 2cm ,on grid ,
	semithick , state/.style ={ circle ,top color =white , bottom
		color = white!20 , draw, black , text=black , minimum width =.5
		cm}]
	
	\node[state] (3) {3};
	\node[state] (2) [below =of 1] {2};
	\node[state] (1) [below =of 2] {1};
	\node[state] (4) [right =of 3] {4};
	\node[state] (5) [right =of 4] {5};
	\node[state] (6) [right =of 5] {6};
	\node (7) [right =of 6] {$\cdots$};
	\node[state] (8) [right =of 1] {$2k$};
	\node[state] (9) [right =of 8] {};
	\node[state] (10) [right =of 9] {};
	\node (11) [right =of 10] {$\cdots$};
	\path (2) edge node[right] {$b$} (3);
	\path (1) edge node[right] {$a$} (2);
	\path (3) edge node[above] {$a$} (4);
	\path (4) edge node[above] {$b$} (5);
	\path (5) edge node[above] {$a$} (6);
	\path (8) edge node[below] {$b$} (1);
	\path (9) edge node[below] {$a$} (8);
	\path (10) edge node[below] {$b$} (9);
	
	\path (1) edge [bend left = 25] node[left] {$b$} (3);
	\path (7) edge [bend left = 25] (11);
	
	\end{tikzpicture}
\end{center}

in which self-loops are omitted as before.

\begin{theorem}
	\label{thm2}
	For every even $n\geq 4$ the DFA ${\cal Q}_n$ is synchronizing, and its switch count is equal to $\frac{n^2 - 6n + 10}{2}$.
\end{theorem}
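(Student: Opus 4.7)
The plan is to prove matching upper and lower bounds on $\sw(\mathcal{Q}_n)$; set $n = 2k$ throughout.

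\textbf{Upper bound.} I would exhibit the explicit synchronizing word
\[
w \;=\; b^2 \,\bigl[(ab)^{k-1}b\bigr]^{k-2}
\]
and verify two facts. First, $Qb = \{1,3,\ldots,2k-1\}$ because $b$ sends every state to an odd state with the wrap $2k\mapsto 1$, and $Qb^2 = \{3,5,\ldots,2k-1\}$ because a second $b$ collapses $1$ into $3$. Second, on any set of odd states the composition $ab$ acts as $q\mapsto q+2 \pmod{2k}$ on the odd cycle $1,3,\ldots,2k-1,1$ (the $a$ step sends $2i-1$ to $2i$, and the $b$ step then sends $2i$ to $2i+1$, using the wrap $2k\xrightarrow{b}1$), so $(ab)^{k-1}$ is a $-2$ shift that sends $\{3,5,\ldots,2j-1\}$ to $\{1,3,\ldots,2j-3\}$; the trailing $b$ then merges $1$ into $3$ and produces the $(j-2)$-element canonical set $\{3,5,\ldots,2j-3\}$. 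Iterating this reduction $k-2$ times starting from $Qb^2$ reaches $\{3\}$, so $w$ synchronizes. Its switch count is $1 + (k-2)\cdot 2(k-1) = \frac{n^2-6n+10}{2}$: the prefix $b^2$ is one group, each block $(ab)^{k-1}b$ contributes $2(k-1)$ groups (namely $k-1$ single $a$'s, $k-2$ single $b$'s, and a terminating $bb$), and boundaries between blocks do not merge because each block ends with $b$ and the next begins with $a$.

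\textbf{Lower bound.} Because $a^i = a$ for every $i\geq 1$ and $b^i$ equals either $b$ (for $i=1$) or $b^2$ (for $i\geq 2$), every maximal block of equal symbols in any word acts on $Q$ as one of the three atomic operations $a$, $b$, or $b^2$. In the spirit of the weight argument for $\mathcal{P}_n$ in the proof of Theorem~\ref{thm1}, I would construct a function $W : 2^Q \to \mathbb{Z}_{\geq 0}$ with $W(\{3\}) = 0$, $W(Q) = \frac{n^2-6n+10}{2}$, and $W(Vs) \geq W(V) - 1$ for every $V\subseteq Q$ and every $s \in \{a, b, b^2\}$; the switch count of any synchronizing word is then at least $W(Q) - W(\{3\})$. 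The natural template has the shape $W(V) = 2(k-1)\rho(V) + \epsilon(V)$, where $\rho(V) = j-2$ on every cyclic shift of the canonical set $\{3,5,\ldots,2j-1\}$ and $\epsilon(V)\in\{0,1\}$ records whether the unique $b^2$ step that collapses $Q$ onto an all-odd configuration has still to be paid.

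\textbf{Main obstacle.} The hard part is to extend $W$ to arbitrary subsets (cyclic shifts of canonical sets, mixed even/odd sets, proper subsets thereof, and the sets reached after non-canonical reductions) and to verify the one-step monotonicity $W(Vs)\geq W(V)-1$ uniformly. The crux is to show that the pairs mergeable in a single step---$\{2i-1,2i\}$ for $a$, and $\{1,3\}$ and $\{2i,2i+1\}$ for $b$---are ``rare'' inside shifts of the canonical all-odd configuration, so that producing such a pair out of a generic shifted canonical set requires a full cycle of $k-1$ applications of $ab$, accounting for the $2(k-1)$-switch cost per size reduction. Unlike the proofs for $\mathcal{P}_n$ and $\mathcal{R}_n$, here neither $a$ nor $b$ is close to the identity, so the weight cannot simply be summed over positions of individual states; instead one must track the cyclic shape of the whole set. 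I expect this case analysis to be the technical heart of the proof; once it is completed, combining it with the upper-bound construction gives $\sw(\mathcal{Q}_n) = \frac{n^2-6n+10}{2}$.
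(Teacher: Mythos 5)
Your upper bound is correct and is exactly the paper's: your word $b^2\bigl[(ab)^{k-1}b\bigr]^{k-2}$ is literally the same string as the paper's $b(b(ab)^{k-1})^{k-2}b$, the description of how $a$, $b$ and $ab$ act on the odd states is right, and the count $1+2(k-1)(k-2)=\frac{n^2-6n+10}{2}$ agrees; the reduction of arbitrary words to blocks acting as $a$, $b$ or $b^2$ is also sound. The genuine gap is the lower bound, which is the substantive half of the theorem and which you only sketch. Moreover, the specific template you propose, $W(V)=2(k-1)\rho(V)+\epsilon(V)$ with $\rho$ constant on all cyclic shifts of a canonical set and $\epsilon\in\{0,1\}$, provably cannot satisfy the required one-step inequality $W(Vs)\geq W(V)-1$: along your own optimal word, each block $(ab)^{k-1}b$ consists of $2(k-1)$ atomic operations during which the set only runs through shifts of a fixed canonical set (so $\rho$, hence $W$, is unchanged) until the final merged $bb$ collapses two states, at which point $\rho$ drops by $1$ and $W$ drops by $2(k-1)\geq 2$ in a single atomic step. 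So any workable weight must in addition record the rotation offset of the set around the cycle (a term ranging over $0,\dots,2(k-1)-1$), and verifying its monotonicity for arbitrary subsets, including mixed even/odd sets and strategies that do not follow your canonical route, is exactly the work you defer; as it stands, the method yields no bound. A smaller point: you would also need $W(\{q\})=0$ (or controlled) for every singleton that can actually be the synchronizing state, not only $\{3\}$.

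For comparison, the paper closes the lower bound without any set-valued potential. It first argues that a minimal-switch-count synchronizing word starts with $b$ and, after that first $b$, lives entirely on the odd states, where $b$ and $c=ab$ act exactly as the \v{C}ern\'y automaton ${\cal C}_k$ with $c$ the cyclic symbol; hence the switch count equals $2s+1$ with $s$ the number of $c$'s. It then proves that every synchronizing word of ${\cal C}_k$ contains at least $(k-1)(k-2)$ occurrences of $c$: pad with $c^j$ ($j\leq k-1$) so the number of $c$'s is divisible by $k-1$, cut the word into segments each containing exactly $k-1$ $c$'s, and note that each segment moves every state by $0$ or $-1$ on the cycle, so state $1$ needs at least $k-1$ segments to reach the synchronizing state. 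To complete your proof you should either carry out this reduction-and-counting argument or construct the finer rotation-aware weight; the coarse template you wrote down will not close the argument.
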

\begin{proof}
	First we observe that after one single $b$ step only the odd states occur, and on the odd states $ab$ acts as the cyclic permutation mapping $2i+1$ to $2i+3$ modulo $n$, and $b$ maps 1 to 3, and all other odd states to itself. So $ab$ and $b$ act as the \v{C}ern\'y automaton on the $k$ odd states in ${\cal Q}_n$, where $k = n/2$. Hence $b(b(ab)^{k-1})^{k-2} b$ is a synchronizing word, having switch count $2(k-1)(k-2)+1 = 2(\frac{n}{2}-1)(\frac{n}{2}-2)+1 = \frac{n^2 - 6n + 10}{2}$.
	
	It remains to prove that no synchronizing word of smaller switch count exists. Let $w$ be any synchronizing word of minimal switch count. Since for every $i>0$ the sequence $a^i$ acts the same as one single $a$, we may assume that $w$ does not contain consecutive $a$'s.  The word $w$ starts by a $b$, since
	$ab$ transforms the set of all states to the set of all odd states, just like $b$ does. After this $b$ step in the power automaton only the odd states are involved, and since the rest of $w$ does not contain consecutive $a$'s, it is in the language of the regular expression $(b + ab)^*$, may be followed by an $a$. But both $b$ and $ab$ map every odd state to an odd state, and $a$ is injective on all odd states, so the synchronization does not end in $a$. Hence after the first $b$ only $b$ steps and $ab$ steps occur and the whole synchronization game is only played in the odd states. We already observed that on these odd states the operations $b$ and $c = ab$ act as the \v{C}ern\'y automaton on the $k$ odd states, in which $c$ is the cyclic symbol. So $w = b h(w')$ in which $w'$ is a synchronizing word of this \v{C}ern\'y automaton on $k$ states on the symbols $b,c$, and $h$ is the homomorphism mapping $c$ to $ab$ and $b$ to $b$. Observe that the switch count of $w$ is $2s+1$ where $s$ is the number of $a$'s in $w$, which is equal to the number of $c$'s in $w'$. So it remains to prove the following claim.
	\begin{quote}
		{\bf Claim:} In every synchronizing word $w'$ of the \v{C}ern\'y automaton ${\cal C}_k$ on $k$ states, the number of $c$'s in $w'$ is at least $(k-1)(k-2)$.
	\end{quote}
	Number the states of ${\cal C}_k$ such that $1b = 2$, $ib = i$ for $i \neq 1$ and $ic = i+1$ modulo $k$ for all $i$. Assume that $w'$ ends in $b$ and synchronizes in 2, otherwise remove some elements of $w'$ from the end. Now choose $1 \leq j \leq k-1$ such that the number of $c$'s in $c^j w'$ is divisible by $k-1$. Write $c^j w' = v_1 v_2 \cdots v_p$ such that every $v_i$ is of the shape $(c b^*)^k-1$, so, starts by $c$ and contains exactly $k-1$ $c$'s. For every state $q$ and every word $v$ of this shape it holds that either $qv = q$ (if it contains a $b$ step from 1 to 2) or $qv = q-1$ modulo $k$ (if it does not contain such a step). The key observation here is that in processing such a $v$, at most one $b$ step from 1 to 2 is possible. Further we observe that since $v_1$ starts in $c$, we obtain $1 v_1 = k$. So for the state $1$ to synchronize in 2, it has to do $k-1$ such steps: after the first $v_1$-step it has moved to $k$, and then at least $k-2$ steps corresponding to $v_2,v_3,\ldots,v_p$ are needed to reach state 2. Hence $p \geq k-1$, and the number of $c$'s in $c^j w'$ is exactly $p(k-1) \geq (k-1)(k-1)$. Since $j \leq k-1$ we conclude that the number of $c$'s in $w'$ is at least $(k-1)(k-1) - j \geq  (k-1)(k-2)$, concluding the proof.
\end{proof}

\section{Analysis of the switch count of $\mathcal{A}_n$}
\label{secmr}

In this section we prove that the maximal possible switch count of automata on $n$ states grows at least like $\frac23 n^2$. We prove this by presenting the series of binary automata $\mathcal{A}_n$
and showing that the switch counts of these automata are of this order. We are not aware of any construction (either binary or non-binary) with quadratic switch count and a higher leading constant. For $n=8$ and $n=9$, our experimental results confirm that this construction indeed reaches the maximal possible switch count if we would restrict to binary automata. 

Let $n\geq 3$ and consider the automaton $\mathcal{A}_n = (Q_n,\Sigma_n)$ with state set $Q_n=\{1,\ldots,n\}$ and alphabet $\Sigma_n = \left\{a,b\right\}$ defined by
\begin{align*}
&\begin{array}{lll}
1a = 1, & 1b = 2;&\\
(2k)a = 2k+1, \qquad & (2k)b = 2k-1,& k = 1,\ldots,\lfloor\frac{n-1}{2}\rfloor;\\
(2k+1)a = 2k, & (2k+1)b = 2k+2,\qquad & k = 1,\ldots,\lfloor\frac{n-2}{2}\rfloor;
\end{array}\\
&na = nb = \left\{\begin{array}{ll}
\frac{n}{3}&\text{if $n=0$ (mod $3$)}\\
\frac{n+2}{3}&\text{if $n=1$ (mod $3$)}\\
\frac{n+4}{3}\qquad &\text{if $n=2$ (mod $3$)}
\end{array}\right.
\end{align*}
An illustration of this automaton for $n=7$ was already given in the introduction. Switch counts for $3\leq n\leq 12$ are given in the following table.
\[
\begin{array}{c||c|c|c|c|c|c|c|c|c|c}
n & 3 & 4 & 5 & 6 & 7 & 8 & 9 & 10 & 11 & 12\\
\hline
\ sc(\mathcal{A}_n)\ &\ \ 1\ \  &\ \  5 \ \  &\ \  9 \ \  &\ \  15 \ \  &\ \  23 \ \  &\ \  31 \ \  &\ \  41 \ \  &\ \  53 \ \  &\ \  65 \ \  &\ \  79 \ \
\end{array}
\]

Now we arrive at the main theorem giving the exact switch count of $\mathcal{A}_n$.
\begin{theorem}\label{theoem:lowerbound}
	The DFA $\mathcal{A}_n$ has switch count $\left\lceil  \frac23 n(n-2)-1 \right\rceil$ for all $n\geq 3$.
\end{theorem}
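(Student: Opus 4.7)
\medskip

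\textbf{Proof plan.} I would prove this by establishing matching upper and lower bounds on the switch count of $\mathcal{A}_n$.

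First, I would carefully describe the combinatorial action of $a$, $b$ and their products. On the interior $\{2,\ldots,n-1\}$ both $a$ and $b$ are involutions: $a$ swaps adjacent pairs $(2k,2k+1)$ and $b$ swaps pairs $(2k-1,2k)$. Consequently, the composition $(ab)$ shifts even interior states upward by two and odd interior states downward by two (mirrored by $(ba)$). The boundaries behave specially: $1a=1$, $1b=2$, and $na=nb=q$ with $q \approx n/3$. In the power automaton, the image size $|V|$ can therefore only decrease through an application involving state $n$, which identifies $n$ with the (unique) swap partner of $q$ under the letter that was applied.

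For the upper bound I would build a synchronizing word as a concatenation
\[
w \;=\; u_0 \, \prod_{i=1}^{m} v_i,
\]
where $u_0$ is a short prefix bringing $Q_n$ into a convenient ``aligned'' configuration, and each $v_i$ is a block of the form $(ab)^{k_i}x_i$ or $(ba)^{k_i}x_i$ with $x_i\in\{a,b\}$. The $(ab)^{k_i}$-prefix transports one ``outlier'' state toward $n$, and the trailing letter performs the merge with $q$. Summing $|v_i|_s$ across the rounds and using that the travel distance for a typical element is roughly $2n/3$ (because $q\approx n/3$), one obtains exactly $\lceil \tfrac{2}{3}n(n-2)-1\rceil$ switches, after telescoping and verifying the three residue classes $n\bmod 3$ separately, which accounts for the ceiling in the formula.

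The main obstacle is the matching lower bound. Here I would design a weight function $W\colon 2^{Q_n}\to\mathbb{Z}_{\geq 0}$ with $W(Q_n)=\lceil \tfrac{2}{3}n(n-2)-1\rceil$, $W(S)=0$ on singletons, and the crucial one-step property: for every $V\subseteq Q_n$, every symbol $x\in\{a,b\}$, and every $k\geq 1$, $W(Vx^k)\geq W(V)-1$. A plausible candidate combines a size penalty with a sum of ``distance-to-$n$'' weights adapted to parity, plus a correction that activates precisely when $V$ contains $n$ or $1$. The difficulty lies in verifying the one-step property for all powers $x^k$ simultaneously, because different $k$'s give genuinely different actions near the boundary (for example, $a^2$ may identify a pair that $a$ alone does not), so the single switch consumed by writing $a^k$ must be matched by a bounded gain in the weight regardless of $k$. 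I expect the proof to reduce to a careful case split on (i) whether $V\ni n$, (ii) whether $V\ni 1$, (iii) the parity of $k$, and (iv) the value of $n\bmod 3$; and to show by direct computation, working inside the orbits of $(ab)$ and $(ba)$, that in each case the weight cannot drop by more than one. Once this invariant is established, comparing $W(Q_n)$ with $W(\{q'\})=0$ for any terminal singleton gives the required lower bound, and combined with the construction above completes the proof.
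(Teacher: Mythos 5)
Your plan has genuine gaps on both sides, and they are not just matters of detail. For the upper bound, the strategy you describe --- transport one outlier at a time toward $n$ with an $(ab)^{k_i}$-block and merge it with the trailing letter --- is essentially the ``obvious'' word, and it does not give the claimed constant: the paper computes that this kind of word (namely $(ab)^{n/3}b^2((ab)^{\frac23 n-1}b^2)^{\frac23 n-2}$, exploiting the \v{C}ern\'y automaton hidden on the cycle $C$ of length $\frac23 n$) has switch count $\frac89 n^2+\frac23 n$, which is strictly worse than $\frac23 n^2$. The optimal word has a more delicate block structure (symbolically $b\circlearrowright a\circlearrowleft b$, i.e.\ blocks like $b(ba)^k(ab)^{n/3}b$ whose switch counts are $\frac23 n+2k-1$ for $k\leq \frac n3$ and $2n-2k+1$ afterwards), so ``summing travel distances of roughly $2n/3$'' and asserting the total telescopes to $\lceil\frac23 n(n-2)-1\rceil$ does not follow from your construction.

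For the lower bound, a single additive potential $W$ on subsets with $W(Vx^k)\geq W(V)-1$ per letter-switch is exactly the technique the paper uses for the easier automata $\mathcal{P}_n$ and $\mathcal{R}_n$, but for $\mathcal{A}_n$ you offer no concrete candidate, and there is a structural reason this is hard: merges can only happen at state $n$, and the cost of making progress is not a sum of per-state quantities but depends nonlinearly on the configuration. The paper instead passes to a double-cover automaton $\mathcal{B}_n$ (states $\pm q$), identifies the cycle $C$ on which $ab$ rotates and on which a \v{C}ern\'y-like structure lives, defines a cyclic distance $d(p,q)$ and then a measure $\mu(A)=\max_{p\in A}\min_{q\in A}d(p,q)$ (the largest gap). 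The argument is two-level: $\mu$ can increase by at most $1$ per letter (and must go from $1$ to $\frac23 n$), every increase forces a distance increase for some pair already in $C$, and increasing a pair's distance from $k-1$ to $k+1$ costs switch count at least $\frac23 n+2k-1$ or $2n-2k+1$ depending on $k$; summing these $k$-dependent costs gives $\frac23 n(n-2)-1$. That $k$-dependent per-increment cost is the key idea your sketch is missing, and without it (or an explicitly verified potential encoding it) neither the constant $\frac23$ nor the exact value is reachable.
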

For notational convenience, we will prove this result for $n$ divisible by 6. Other cases are very similar. We work towards the proof of this theorem by defining some tools for the analysis. First we construct an auxiliary automaton $\mathcal{B}_n$ as follows. The state set is $Q_{\mathcal{B}_n}= \bigcup_{q=1}^n \left\{q,-q\right\}$ and the alphabet is $\Sigma_{\mathcal{B}_n} = \Sigma_n = \left\{a,b\right\}$. This means we slightly abuse notation, because we want to be able to apply the same word in both automata. From the context it will be clear which automaton we are talking about.

All transitions from the states $2,\ldots,n-1$ are equal to those in the original automaton $\mathcal{A}_n$. Moreover, we let
\[
1a = -1,\qquad 1b = 2,\qquad na = nb = -\frac{n}{3}.
\]
Finally, we define $(-q)a = -(qa)$ and $(-q)b = -(qb)$ for all $q\in \{1,\ldots,n\}$. See Figure \ref{fig:B} for $\mathcal{B}_n$.

\begin{figure}
	\begin{tikzpicture}[-latex ,node distance =3 cm and 3cm ,on grid ,
	semithick ,
	state/.style ={ circle ,top color =white , bottom color = white!20 ,
		draw, black , inner sep =0pt, text=black , minimum width =.9 cm}]
	\node[state,white] (0) {}; 
	\node[state] (1) at ($(0)+3*({-2*cos(14)},{sin(14)})$) {$\scriptstyle 1$};
	\node[state, top color = gray!40, bottom color = gray!40] (2) at ($(0)+3*({-2*cos(36)},{sin(36)})$) {$\scriptstyle 2$};
	\node[state,white,text=black] (3) at ($(0)+3*({-2*cos(53)},{sin(53)})$)  {$\ldots$};
	\node[state] (4) at ($(0)+3*({-2*cos(70)},{sin(70)})$)  {$\scriptstyle \frac{n}{3}-1$};
	\node[state, top color = gray!40, bottom color = gray!40] (5) at ($(0)+3*({-2*cos(90)},{sin(90)})$)  {$\scriptstyle\frac{n}{3}$};
	\node[state] (6) at ($(0)+3*({-2*cos(110)},{sin(110)})$)  {$\scriptstyle\frac{n}{3}+1$};
	\node[state,white, text=black] (7) at ($(0)+3*({-2*cos(127)},{sin(127)})$)  {$\ldots$};
	\node[state] (8) at ($(0)+3*({-2*cos(144)},{sin(144)})$)  {$\scriptstyle n-1$};
	\node[state, top color = gray!40, bottom color = gray!40] (9) at ($(0)+3*({-2*cos(166)},{sin(166)})$)  {$\scriptstyle n$};
	
	\node[state, top color = gray!40, bottom color = gray!40] (-1) at ($(0)+3*({-2*cos(14)},{-sin(14)})$) {$\scriptstyle -1$};
	\node[state] (-2) at ($(0)+3*({-2*cos(36)},{-sin(36)})$) {$\scriptstyle -2$};
	\node[state,white,text=black] (-3) at ($(0)+3*({-2*cos(53)},{-sin(53)})$)  {$\ldots$};
	\node[state, bottom color = gray!40, top color = gray!40] (-4) at ($(0)+3*({-2*cos(70)},{-sin(70)})$)  {$\scriptstyle -\frac{n}{3}+1$};
	\node[state] (-5) at ($(0)+3*({-2*cos(90)},{-sin(90)})$)  {$\scriptstyle -\frac{n}{3}$};
	\node[state,top color = gray!13, bottom color = gray!13] (-6) at ($(0)+3*({-2*cos(110)},{-sin(110)})$)  {$\scriptstyle -\frac{n}{3}-1$};
	\node[state,white, text=black] (-7) at ($(0)+3*({-2*cos(127)},{-sin(127)})$)  {$\ldots$};
	\node[state,top color = gray!13, bottom color = gray!13] (-8) at ($(0)+3*({-2*cos(144)},{-sin(144)})$)  {$\scriptstyle -n+1$};
	\node[state] (-9) at ($(0)+3*({-2*cos(166)},{-sin(166)})$)  {$\scriptstyle -n$};
	
	\path (1) edge node[above left]{$b$} (2);
	\path (2) edge node[above left]{$a$} (3);
	\path (3) edge node[above]{$a$} (4);
	\path (4) edge node[above]{$b$} (5);
	\path (5) edge node[above]{$a$} (6);
	\path (6) edge node[above]{$b$} (7);
	\path (7) edge node[above right]{$a$} (8);
	\path (8) edge node[above right]{$b$} (9);
	
	\path (8) edge (7);
	\path (7) edge (6);
	\path (6) edge (5);
	\path (5) edge (4);
	\path (4) edge (3);
	\path (3) edge (2);
	\path (2) edge (1);
	\path (1) edge node[left]{$a$}(-1);
	
	\path (-1) edge node[below left]{$b$} (-2);
	\path (-2) edge node[below left]{$a$} (-3);
	\path (-3) edge node[below]{$a$} (-4);
	\path (-4) edge node[below]{$b$} (-5);
	\path (-5) edge node[below]{$a$} (-6);
	\path (-6) edge node[below]{$b$} (-7);
	\path (-7) edge node[below right]{$a$} (-8);
	\path (-8) edge node[below right]{$b$} (-9);
	
	\path (-8) edge (-7);
	\path (-7) edge (-6);
	\path (-6) edge (-5);
	\path (-5) edge (-4);
	\path (-4) edge (-3);
	\path (-3) edge (-2);
	\path (-2) edge (-1);
	\path (-1) edge (1);
	
	\path (9) edge [bend right = 30] node[below]{$a,b$}(-5);
	\path (-9) edge [bend right = -30] node[above]{$a,b$}(5);
	\end{tikzpicture}\caption{The automaton $\mathcal{B}_n$. All shades states are in $S$, darker shaded states in $C$.}\label{fig:B}
\end{figure}
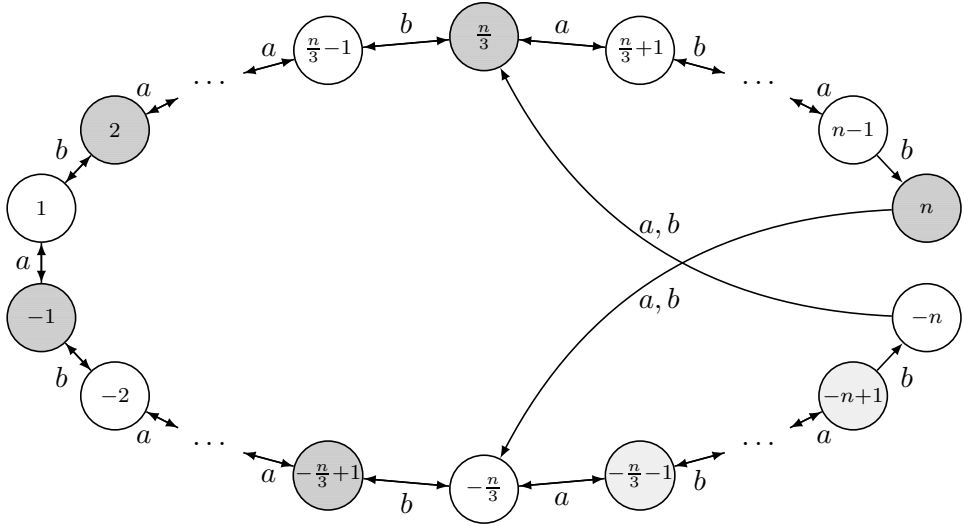

Define the set $S\subseteq Q_{\mathcal{B}_n}$ by
\[
S = \left\{2k:1\leq k\leq \frac{n}{2}\right\}\cup\left\{-2k+1:1\leq k\leq \frac{n}{2}\right\}.
\]
Our goal will be to synchronize the set $S$. Observe that $Sw\subseteq S$ if $|w|$ is even, and $Sw\subseteq S^c$ if $|w|$ is odd. Therefore, from now on we will assume that every subset under consideration is either contained in $S$ or in $S^c$. Furthermore, $w\in\Sigma_n^\star$ is a synchronizing word for $\mathcal{A}_n$ if and only if $|Sw|=1$ in $\mathcal{B}_n$.

For $A\subseteq S$, we define $-A :=\left\{q:-q\in A\right\}$. Then $(-A)w = -(Aw)$ for every word $w$ and every set $A\subseteq S$, so we will write $-Aw$. In particular this means that $|Aw|=|-Aw|$. Because of this symmetry, in some of the results that follow it suffices to study subsets of $S$, for subsets of $S^c=-S$ analogous conclusions then follow instantly and will not be explicitly stated.

For $x,y\in S$, define $[x,y]=\left\{q\in S:x\leq q\leq y\right\}$. For $x,y\in -S$, we let $[x,y]=\left\{q\in -S:x\leq q\leq y\right\}$. These sets will be called \emph{intervals}.
An important set is the interval $C:=[-\frac{n}{3}+1,n]$, on which we will make some remarks to quickly derive a synchronizing word for $\mathcal{A}_n$ and get a better understanding. First, the word $ab$ defines a cyclic permutation on $C$. A closer look reveals that $b^2$ is the identity on $C$, except on the state $n$, where we have $nb^2 = -\frac{n}{3}+1$. So we observe that actually the \v{C}ern\'y automaton is hidden here. Since $|C|=\frac23 n$, a word that synchronizes $C$ is $b^2((ab)^{\frac23 n-1}b^2)^{\frac23n-2}$. Using the fact that $S(ab)^{\frac{n}{3}}=C$, we find the following synchronizing word for $S$ and hence for the automaton $\mathcal{A}_n$:
\[
w = (ab)^{\frac{n}{3}}b^2((ab)^{\frac23 n-1}b^2)^{\frac23n-2}.
\]
This word has switch count $|w|_s = \frac89n^2+\frac23n$. As we will see, synchronizing words with smaller switch count exist, but some patterns are similar and the cycle $C$ will play a key role in the analysis.

The following lemma gives one reason why $C$ is important. As can be seen in Figure \ref{fig:B}, $a$ and $b$ can only rotate sets that do not contain $n$ or $-n$. To make progress, we need to visit sets that do contain $n$ or $-n$. The lemma essentially  tells that such sets will be contained in $C$ (or $-C$).

\begin{lemma}\label{lem:subsetC}
	Suppose $A\subseteq C$ and $w\in\Sigma_n^\star$. If $n\in Aw$, then $Aw\subseteq C$. If $-n\in Aw$, then $Aw\subseteq -C$.
\end{lemma}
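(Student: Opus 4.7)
\textit{Proof plan.} The two statements are swapped by the automorphism $\tau \colon q \mapsto -q$ of $\mathcal{B}_n$, which sends $C$ to $-C$ and satisfies $\tau(qx) = (\tau q) x$ for $x \in \{a,b\}$, so it suffices to prove the first claim. I would do this by induction on $|w|$, with the base case $|w| = 0$ being immediate.

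For the inductive step, write $w = vx$ with $x \in \{a,b\}$. The key combinatorial fact about $\mathcal{B}_n$ is that $n$ has a \emph{unique} preimage: a direct inspection of the transition rules shows the only pair $(q,y)$ with $qy = n$ is $(n-1, b)$, because under $a$ we only have the swaps $2k \leftrightarrow 2k+1$ along with $1a = -1$ and $na = -n/3$, under $b$ only $(n-1)b = n$ produces $n$, and sign-flipping transitions are confined to $1 \leftrightarrow -1$ and $\pm n \to \mp n/3$. Hence $x = b$ and $n-1 \in Av$, and the task reduces to showing $(Av)b \subseteq C$.

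To make the induction work I would strengthen the hypothesis by tracking the entire set $Av$. Let $\mathcal{W}$ be the smallest family of subsets of $Q_{\mathcal{B}_n}$ containing $2^C$ and closed under $V \mapsto Va$ and $V \mapsto Vb$; every reachable $Av$ with $A \subseteq C$ belongs to $\mathcal{W}$ by construction. A first auxiliary step is the containment $\mathcal{W} \subseteq 2^W$, where
\[
W \;:=\; C \cup Ca \cup Cb \cup (-C) \cup (-C)a \cup (-C)b,
\]
which follows from identities like $Cab = C$ (cyclic $ab$-action on $C$), $Cb^2 = C \setminus \{n\}$, and $Ca^2 \subseteq C \cup (-C)a$ (using that $a^2$ is the identity off $\{n,-n\}$), together with their $\tau$-symmetric counterparts.

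The proof then reduces to the claim that \emph{every $V \in \mathcal{W}$ with $n \in V$ is contained in $C$.} \textbf{This is the main obstacle}, because mere membership in $2^W$ is not enough: for instance $\{n, -(n/3+1)\}$ is a subset of $W$ that contains $n$ but is not a subset of $C$, so the argument must exploit that a general $V \in \mathcal{W}$ is actually reachable from $2^C$, not just an arbitrary subset of $W$. A minimal counterexample approach seems cleanest: suppose $v$ is a shortest word producing $Av \in \mathcal{W}$ with $n \in Av$ and $Av \not\subseteq C$; the unique-preimage observation forces $v = v'b$, so $n-1 \in Av'$ and some state $q \in Av'$ has $qb \notin C$. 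A case analysis on which stratum of $W$ such a $q$ can inhabit, combined with the constraint that $n-1$ also lies in $Av'$, should produce a strictly shorter counterexample and close the induction.
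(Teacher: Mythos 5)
Your reduction to the letter $b$ via the unique preimage of $n$ (only $(n-1)b=n$) is correct, and you correctly diagnose that the naive induction hypothesis says nothing about $Av$ when $n\notin Av$, so a strengthened invariant is needed. But the strengthening you actually establish, $\mathcal{W}\subseteq 2^W$ with $W=C\cup Ca\cup Cb\cup(-C)\cup(-C)a\cup(-C)b$, is --- as you yourself note --- too weak: a set can sit inside $W$, contain $n$, and fail to lie in $C$. The statement you then need, ``every reachable $V$ with $n\in V$ is contained in $C$,'' \emph{is} the lemma; your minimal-counterexample plan (``a case analysis on which stratum of $W$ such a $q$ can inhabit \ldots should produce a strictly shorter counterexample'') is not carried out, and it is not clear it can be closed at this level of information. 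Knowing $n-1\in Av'$ together with $q\in Av'\cap W$ does not localize $q$ relative to $n-1$; to rule out a $q$ with $qb\notin C$ you need to know that the whole reachable set is confined to a single ``window'' of the cycle, and nothing in your invariant records the relative position of the elements of $Av'$. So there is a genuine gap exactly at the step you flag as the main obstacle.

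The paper closes this gap with a different, stronger invariant that your sketch is missing: it introduces the sliding intervals $I_x=[x,x+\tfrac43 n-1]$ for $-n\leq x\leq -\tfrac n3+1$ (each containing $\tfrac23 n$ states of $S$ or of $-S$), and proves by a one-letter case analysis --- interior $x$ gives a rotation $I_x\mapsto I_{x\pm1}$, and the two boundary cases $x=-n$ and $x=-\tfrac n3+1$ collapse inward --- that any subset of some $I_x$ is mapped by any word into some $I_y$. Since $n$ lies only in $I_{-\frac n3+1}=C$ and $-n$ only in $I_{-n}=-C$, the lemma follows immediately. This is the quantitative version of the confinement your argument needs: it tracks not just membership in a fixed union of six sets but the position of the window containing the image set. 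If you replace your family $\mathcal{W}\subseteq 2^W$ by the invariant ``$Av$ is contained in some $I_x$,'' your induction closes in one step and the unique-preimage observation becomes unnecessary.
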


\begin{proof}
	Define intervals $I_x = [x,x+\frac43 n-1]$ for $-n\leq x\leq -\frac{n}{3}+1$. This means that $I_x\subseteq S$ if $x$ odd and $I_x\subseteq -S$ if $x$ even. We will first prove the following claim. If $v\in\Sigma_n^\star$ and $B \subseteq I_x$ for some $x$, then $Bv \subseteq I_y$ for some $y$. It suffices to prove this for $|v|=1$, since we can lift the property to longer words by induction. First assume $-n+1\leq x\leq -\frac{n}{3}$. In this case $B$ does not contain $n$ or $-n$ so that $a$ and $b$ will rotate $B$. Thus, if $x$ is even, then $Ba\subseteq I_{x-1}$ and $Bb\subseteq I_{x+1}$; if $x$ is odd, then $Ba\subseteq I_{x+1}$ and $Bb\subseteq I_{x-1}$. To conclude, we consider the two border cases. If $x=-n$, then $Ba\subseteq I_{x+1}$ and $Bb\subseteq I_{x+1}$. If $x=-\frac{n}{3}+1$, then $Ba\subseteq I_{x-1}$ and $Bb\subseteq I_{x-1}$, proving the claim.
	
	Now let $A\subseteq C=I_{-\frac{n}{3}+1}$ so that by the claim $Aw\subseteq I_x$ for some $-n\leq x\leq -\frac{n}{3}+1$. If $Aw\subseteq I_x$ for $x\neq -\frac{n}{3}+1$, then $n\not\in Aw$. Otherwise $Aw\subseteq I_{-\frac{n}{3}+1}= C$. The second statement follows analogously.
\end{proof}

The core ingredient of our approach will be a measure on subsets of $S$. Before introducing this measure, we define and study distances between states. The measure of a set will be based on this notion of distance.

\begin{definition}
	For $p,q\in S$, define the (asymmetric) distance from $p$ to $q$ by
	\[
	d(p,q) = \min\left\{k>1:p(ab)^{\frac{n}{3}+k}=q(ab)^\frac{n}{3}\right\}.
	\]
\end{definition}
For $p,q >-\frac{n}{3}$, this can be interpreted as a clockwise distance on the cycle $C$. For every state $q\in S$ it holds that $q(ab)^\frac{n}{3}$ is on this cycle, so we can measure the distance for $p$ or $q$ smaller than $-\frac{n}{3}$ as well. For $p,q\not\in S$, we define $d(p,q) = d(-q,-p)$, which fits the interpretation of clockwise distance. Some properties of the distance function are listed in the following lemma, where we write $p\simeq q$ if $p(ab)^{\frac{n}{3}}=q(ab)^{\frac{n}{3}}$.

\begin{lemma}\label{lem:distprop} Let $p,q,r\in S$ be such that $p\not\simeq q$ and $d(p,q)<d(p,r)$. Then
	\begin{enumerate}
		\item\label{distprop1} $0< d(p,q)<\frac23 n$;
		\item\label{distprop2} $d(p,q)+d(q,p)=\frac23 n$;
		\item\label{distprop3} $d(q,q)=\frac23 n$;
		\item\label{distprop4} $d(p,q)+d(q,r)=d(p,r)$.
	\end{enumerate}
\end{lemma}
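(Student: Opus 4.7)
The plan is to reduce the four items to elementary arithmetic on the cycle $C$. The preparatory step is to verify two structural facts about $\mathcal{B}_n$: (i) $ab$ restricts to a cyclic permutation of $C$ of order $|C|=\frac{2n}{3}$, and (ii) for every $q\in S$ the state $\sigma(q):=q(ab)^{n/3}$ lies in $C$. Fact (i) can be checked by tracing $ab$ around $C$ and using that the only non-local jumps in the cycle come from the special transitions $na=nb=-\frac{n}{3}$ and, dually, $(-n)a=(-n)b=\frac{n}{3}$; all other states are moved by $ab$ in the obvious way. For (ii), the set $S\setminus C$ consists of the negative odd states $-n+1,-n+3,\ldots,-\frac{n}{3}-1$, a tail of $n/3$ states, and a short direct computation shows that each application of $ab$ shifts such a state by $+2$ toward $-\frac{n}{3}+1\in C$; hence $n/3$ applications of $ab$ certainly land in $C$.

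Once (i) and (ii) are established, the definition of $d(p,q)$ can be reinterpreted: it is the smallest $k\geq 1$ with $\sigma(p)(ab)^k=\sigma(q)$, i.e.\ the clockwise cyclic distance on $C$ from $\sigma(p)$ to $\sigma(q)$, normalized to take values in $\{1,2,\ldots,\frac{2n}{3}\}$. Under this interpretation items 1--4 become standard facts about clockwise distance on a cycle of length $\frac{2n}{3}$, and the four claims can be handled one by one with almost no further computation.

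For item 1, the lower bound $d(p,q)\geq 1$ is immediate from the definition, while $p\not\simeq q$ means $\sigma(p)\neq\sigma(q)$, so a full revolution is unnecessary and $d(p,q)<\frac{2n}{3}$. Item 2 is the statement that the clockwise arc from $\sigma(p)$ to $\sigma(q)$ and the clockwise arc from $\sigma(q)$ back to $\sigma(p)$ together cover $C$ exactly once. Item 3 follows because $d(q,q)$ is precisely the order of $ab|_C$, namely $\frac{2n}{3}$. For item 4 the hypothesis $d(p,q)<d(p,r)$ says that, travelling clockwise from $\sigma(p)$, we meet $\sigma(q)$ strictly before $\sigma(r)$; if $p\not\simeq r$ then $d(p,r)<\frac{2n}{3}$ and clockwise additivity on the arc through $\sigma(q)$ gives $d(p,r)=d(p,q)+d(q,r)$, while if $p\simeq r$ then $d(p,r)=\frac{2n}{3}$ and $\sigma(r)=\sigma(p)$ yields $d(q,r)=d(q,p)$, so item 2 closes this case.

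The main obstacle is not deep but is bureaucratic: establishing (i) and (ii) cleanly amid the mixed positive/negative indexing of $\mathcal{B}_n$, the special transitions at $1$, $n$, $-1$ and $-n$, and the parity convention (noting that $a$ and $b$ each flip parity, so $ab$ preserves parity, which is why $C$ can be traversed by $ab$ without leaving $S\cup -S$ appropriately). Once this bookkeeping is in place, the four properties are simply restatements of the fact that clockwise distance on a cycle is well defined, is complementary under reversal, and is additive along any clockwise arc that does not wrap around.
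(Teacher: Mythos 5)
Your proposal is correct and follows the same route the paper intends: the paper's proof is just ``these properties follow straight from the definitions,'' relying on the clockwise-distance interpretation of $d$ on the cycle $C$ (together with the facts that $ab$ cyclically permutes $C$ and that $q(ab)^{n/3}\in C$ for all $q\in S$), which is exactly what you make explicit and then use. Your elaboration, including the case split $p\not\simeq r$ versus $p\simeq r$ in item 4, is sound.
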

\begin{proof}
	These properties follow straight from the definitions.
\end{proof}

One thing worth noting is that $p\not\simeq q$ is equivalent to $p\neq q$ for $p,q\in C$. We will study how distances change when reading words for ordered pairs $(p,q)\subseteq S$ with $p\neq q$. For all such pairs,
\begin{align}\label{eq:apair}
d(pa,qa) &= d(p,q).
\end{align}
Furthermore, for such pairs we also have
\begin{align}
d(pb,qb) &= \left\{\begin{array}{lll}
d(p,q)&\text{if}&p\neq n,q\neq n,\\
d(p,q)-1\qquad&\text{if}&p=n \text{ and } d(p,q)\geq 2,\\
d(p,q)+1&\text{if}&q=n \text{ and } d(p,q)\leq\frac23 n-1.
\end{array}\right.\label{eq:bpair}
\end{align}
There are two exceptional cases, namely
\begin{align}
d(pb,qb) &= \left\{\begin{array}{lll}
\frac23 n\qquad &\text{if}&p= n \text{ and } d(p,q) = 1,\\
1&\text{if}&q=n\text{ and }d(p,q) = \frac23 n.
\end{array}\right.\label{eq:bpairexcept}
\end{align}
The first exception concerns the pair $(p,q)=(n, -\frac{n}{3}+1)$. In this case the distance jumps from $1$ to $\frac23 n$, since $pb=qb = -\frac{n}{3}$ so that $d(pb,qb) = \frac23n$ (by Lemma \ref{lem:distprop}(\ref{distprop3})). Note that the reversed pair $(p,q)=( -\frac{n}{3}+1,n)$ has distance $\frac23 n-1$ (by Lemma \ref{lem:distprop}(\ref{distprop2})), so that this pair satisfies the second line of (\ref{eq:bpair}). The second exception is the pair $(p,q) = (-\frac{n}{3}-1,n)$, where the distance jumps from $\frac23 n$ to $1$.

The distance can increase and decrease by 1 and can jump in the boundary cases. The subsequent small result shows something extra for pairs in $C$. If we start with a pair in $C$ and increase the distance to $\frac23 n$, then actually the pair is synchronized, so that the distance will never decrease again.

\begin{lemma}\label{lem:Cpairsync}
	Let $p,q\in C$ and let $w$ be a word. If $d(pw,qw)=\frac23 n$, then $pw=qw$.
\end{lemma}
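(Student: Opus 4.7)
I plan to prove the lemma by induction on $|w|$, tracking the distance via equations (\ref{eq:apair})--(\ref{eq:bpairexcept}). For the base case $w = \epsilon$, the hypothesis $d(p, q) = \frac{2}{3}n$ amounts to $p \simeq q$. Because $ab$ is a cyclic permutation of length $\frac{2}{3}n$ on $C$, the map $(ab)^{n/3}$ is a bijection of $C$, so $p, q \in C$ with $p \simeq q$ forces $p = q$.

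For the inductive step, write $w = vc$ with $c \in \{a, b\}$ and set $p' = pv$, $q' = qv$; suppose $d(p'c, q'c) = \frac{2}{3}n$. If already $d(p', q') = \frac{2}{3}n$, the inductive hypothesis applied to the shorter word $v$ gives $p' = q'$, hence $p'c = q'c$. Otherwise $d(p', q') < \frac{2}{3}n$, and I use the explicit formulas for how $d$ evolves under a single letter. By (\ref{eq:apair}), $c = a$ preserves the distance, so this subcase cannot raise $d$ to $\frac{2}{3}n$. By the first line of (\ref{eq:bpair}), $c = b$ also preserves the distance whenever neither $p'$ nor $q'$ equals $n$. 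If $p' = n$ with $d(p', q') \geq 2$, the second line of (\ref{eq:bpair}) drops the distance by one, so it stays below $\frac{2}{3}n$; if $q' = n$ with $d(p', q') \leq \frac{2}{3}n - 2$, the third line raises the distance by one, still below $\frac{2}{3}n$. Consequently the only remaining configurations that can produce $d(p'b, q'b) = \frac{2}{3}n$ are the exceptional pair $(p', q') = (n, -\frac{n}{3} + 1)$ from (\ref{eq:bpairexcept}); and $q' = n$ with $d(p', q') = \frac{2}{3}n - 1$, which via Lemma \ref{lem:distprop}(\ref{distprop2}) gives $d(n, p') = 1$ and hence $p' = -\frac{n}{3} + 1$. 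In both configurations the definition of $b$ at state $n$ yields $p'b = q'b = -\frac{n}{3}$, so $pw = qw$ as required.

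The case $p', q' \in -S$ is handled identically using the symmetry $d(r, s) = d(-s, -r)$ and $(-r)x = -(rx)$, with $n$ replaced by $-n$ and the common image becoming $\frac{n}{3}$. Observe also that the second exception in (\ref{eq:bpairexcept}), describing a drop from distance $\frac{2}{3}n$ to $1$, is irrelevant here because any pair already at distance $\frac{2}{3}n$ has been absorbed by the inductive hypothesis.

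The step requiring the most care is matching the two configurations that can raise the distance to $\frac{2}{3}n$ with the two states where $b$ causes a collision at $n$. No separate structural argument about $C$ is needed: the content of the lemma is precisely the fortunate alignment that the only way an ordered pair in $S$ with $d < \frac{2}{3}n$ can jump to distance $\frac{2}{3}n$ under a letter is when that letter also collapses the two states to a single image.
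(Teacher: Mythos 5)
Your induction on $|w|$ is a genuinely different route from the paper's proof: the paper picks the first prefix at which the distance jumps to $\frac23 n$, invokes Lemma \ref{lem:subsetC} to place the pair inside $C$ (or $-C$), and then uses that on $C$ the relation $\simeq$ coincides with equality. You instead classify, via (\ref{eq:apair})--(\ref{eq:bpairexcept}), the configurations of a pair at distance $<\frac23 n$ that can reach distance $\frac23 n$ in a single letter, and argue that in each such configuration the letter $b$ actually merges the two states. The base case, the absorption of the case $d(pv,qv)=\frac23 n$ into the induction hypothesis, and the exhaustiveness of your case split are fine.

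There is, however, one step you assert rather than prove, and it is precisely the point that Lemma \ref{lem:subsetC} exists to handle. From $q'=n$ and $d(p',q')=\frac23 n-1$ you get $d(n,p')=1$ and conclude ``hence $p'=-\frac{n}{3}+1$''; likewise you take for granted that the first exception in (\ref{eq:bpairexcept}) can only occur for the pair $(n,-\frac{n}{3}+1)$. But a distance condition pins a state down only up to $\simeq$, and $\simeq$ is not injective on $S$: every state of $S\setminus C$ shares its $(ab)^{\frac{n}{3}}$-image with some state of $C$ (for instance $-\frac{n}{3}-1\simeq n$, which is exactly how the second exception arises). So you must still exclude that $p'$ is a state of $S\setminus C$ with $p'\simeq -\frac{n}{3}+1$; for such a hypothetical $p'$ one has $p'b\neq nb$ (the $b$-preimage of $-\frac{n}{3}$ is only $\{-\frac{n}{3}+1,n\}$), and your conclusion $p'b=q'b$ would not follow even though the distance formulas report a jump to $\frac23 n$. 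The missing fact is true -- the states of $S\setminus C$ form a chain that enters the cycle at $-\frac{n}{3}+1$ after at most $\frac{n}{3}$ applications of $ab$, so none of them is $\simeq -\frac{n}{3}+1$ -- but it needs an argument, or alternatively you should use Lemma \ref{lem:subsetC} as the paper does to confine the pair to $C$ before identifying it. Your closing remark that ``no separate structural argument about $C$ is needed'' is therefore overstated; with this one verification added, your proof is correct.
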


\begin{proof} This is trivial if $p=q$, so assume $p\neq q$. Let $v$ be a strict (possibly empty) prefix of $w$ and let $x$ be the symbol directly after $v$ so that $vx$ is a prefix as well. Choose $v$ and $x$ in such a way that $d(pv,qv)\neq\frac23 n$ and $d(pvx,qvx) = \frac23 n$. Since $x$ changes the distance, we conclude from (\ref{eq:apair}), (\ref{eq:bpair}) and (\ref{eq:bpairexcept}) that $x=b$ and that $n\in \left\{pv,qv\right\}$ (or $-n\in \left\{pv,qv\right\}$), so we know by Lemma \ref{lem:subsetC} that $pv,qv\in C$ (or $-C$). Then also $pvb,qvb\in C$. The fact that $d(pvb,qvb) = \frac23 n$ implies that $pvb\simeq qvb$ (by Lemma \ref{lem:distprop}), which in turn implies that $pvb=qvb$. Since $vb$ is a prefix of $w$, we conclude that $pw=qw$.
\end{proof}

The distance between states $p$ and $q$ can only change if $n\in\left\{p,q\right\}$ (or $-n\in\left\{p,q\right\}$). If $-n+2\leq p,q \leq n-2$, then $\left\{p,q\right\}a^2 = \left\{p,q\right\}b^2=\left\{p,q\right\}$. So to change the distance for such pairs, we have to alternate (odd powers of) $a$ and $b$ until we find a pair of different shape, actually until $n$ or $-n$ is in the image. This alternation can be seen as a rotation in Figure \ref{fig:B}. If we want to change the distance with a word of minimal length, then also the direction of rotation is determined. For instance if $\left\{p,q\right\} = \left\{2,\frac{n}{3}\right\}$, we have to rotate clockwise, since otherwise we would reach the set $-\left\{p,q\right\}$ and by symmetry of the automaton, the word length would not be minimal. This observation is formalized in the following property.

\begin{property}\label{prop:rotprop}
	Let $A = \left\{p,q\right\}$ such that $|q|>|p|$. Suppose that $w$ is such that $d(pw,qw)\neq d(p,q)$ and has minimal switch count. Additionally, assume $w$ is the shortest such word.
	\begin{enumerate}
		\item If $n-|q|$ is even, then $(ab)^{(n-|q|)/2}$ is a prefix of $w$.
		\item If $n-|q|$ is odd, then $(ba)^{(n-|q|-1)/2}b$ is a prefix of $w$.
	\end{enumerate}
\end{property}

By reading one letter, in almost all cases the distance changes by at most 1. In the next lemma, we show that in general longer words are needed to increase the distance by 2. We give lower bounds for the switch count of such words.

\begin{lemma}\label{lem:pairdist}  Let $w\in\Sigma_n^\star$ be a word and $p,q\in C$.
	\begin{enumerate}
		\item Choose $2\leq k\leq \frac23 n-2$. If $d(p,q)\leq k-1$ and $d(pw,qw)=k+1$, then
		\[
		|w|_s \geq \left\{
		\begin{array}{ll}
		\frac23 n+2k-1\qquad & 2\leq k\leq \frac{n}{3},\\
		2n-2k+1\qquad & \frac{n}{3}+1\leq k\leq \frac23 n-2.
		\end{array}
		\right.
		\]
		\item Choose $k = \frac23 n-1$. If $d(p,q)=k-1$ and $d(pw,qw)=k+1$, then
		\[
		|w|_s \geq 2n-2k+1.
		\]
	\end{enumerate}
\end{lemma}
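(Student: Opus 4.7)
The plan is to decompose $w$ at its distance-changing letters and lower-bound each segment via Property~\ref{prop:rotprop}. By the formulas (\ref{eq:apair})--(\ref{eq:bpairexcept}), the distance $d(pw', qw')$ changes only when a $b$ is read with $\pm n$ in the pair; I will call such a $b$ \emph{critical}. In part~1, the final distance $k+1$ is strictly below $\tfrac{2}{3}n$, so Lemma~\ref{lem:Cpairsync} forbids the boundary jumps of (\ref{eq:bpairexcept}) along $w$ (each such jump would synchronize the pair and lock the distance at $\tfrac{2}{3}n$). Hence every critical $b$ contributes exactly $\pm 1$, and since the net change from $d \leq k-1$ to $k+1$ is at least $2$, there must be at least two ``up'' critical $b$'s (those of the form $q = \pm n$ in the second line of (\ref{eq:bpair})).

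Next, I would write $w = u_0 c_1 u_1 c_2 \cdots c_m u_m$, where the $c_i$ are the critical $b$'s and each $u_i$ contains no critical $b$. Each factor $u_{i-1} c_i$ is a word that, starting from the pair just after $c_{i-1}$ (or from $\{p, q\}$ when $i = 1$), changes the distance for the first time. Applying Property~\ref{prop:rotprop} to that starting pair yields $|u_{i-1} c_i|_s \geq n - M_i$, where $M_i$ is the larger of the two absolute values in the pair. To evaluate $M_i$ I would track the pair through each up critical $b$: starting from $\{p^*, \pm n\}$ at distance $d$, the image is $\{p^* b, \mp \tfrac{n}{3}\}$ at distance $d+1$, and the position of $p^*$ on the cycle $C$ (clockwise distance $d$ behind $\pm n$) determines $|p^* b|$ via a case split on which half of $C$ (the positive-even part $\{2,4,\ldots,n\}$ or the negative-odd part $\{-1,-3,\ldots,-\tfrac{n}{3}+1\}$) contains $p^*$. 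Summing the segment bounds, and correcting by the at most $m$ switches absorbed at junctions between consecutive segments, gives $|w|_s \geq \sum_{i}(n - M_i) - m$; plugging in the values of $M_i$ then splits into the two claimed formulas according as $k \leq n/3$ (the critical $p^*$'s lie in the positive-even half of $C$) or $k \geq n/3 + 1$ (they lie in the negative-odd half).

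The main obstacle I foresee is showing that the extremum is attained at the configuration with exactly two ups and no downs: an exchange argument is needed to verify that inserting a matched down-up pair (keeping the net change at $2$), or any additional critical $b$, strictly increases $\sum (n - M_i) - m$. I would do this by checking that each extra critical event forces $p^*$ at some later stage to occupy a position farther from $\pm n$ on the cycle, shrinking its $M_i$ and thereby enlarging $n - M_i$ by more than the extra $-1$ junction correction. Part~2, where $k+1 = \tfrac{2}{3}n$, is handled by the same decomposition with one modification: the final critical $b$ is now allowed to be a boundary jump, and a short enumeration of the resulting sub-cases (two normal $+1$'s; or a normal $+1$ followed by a boundary $b$ from distance $1$ with $p = \pm n$) confirms the bound $2n - 2k + 1$ in each.
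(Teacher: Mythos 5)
Your overall frame (cut $w$ at the distance-changing $b$'s, use Lemma \ref{lem:Cpairsync} to exclude the jumps of (\ref{eq:bpairexcept}) in part 1, and bound the inter-critical segments via Property \ref{prop:rotprop}) is the same toolkit the paper uses, but the quantitative heart of your argument does not work. Your segment bound $|u_{i-1}c_i|_s \ge n-M_i$, with $M_i$ the larger absolute value in the current pair, is only the cost of bringing \emph{some} element of the pair to $\pm n$; it ignores \emph{which} element of the ordered pair reaches the boundary, and by (\ref{eq:bpair}) that is exactly what decides whether the critical $b$ is an ``up'' ($q$-element at $n$) or a ``down'' ($p$-element at $n$). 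In the regime $2\le k\le \frac{n}{3}-1$, after the first up the pair is $\{n-2k+1,-\frac{n}{3}\}$ at distance $k$, and the element that Property \ref{prop:rotprop} forces to the boundary first is the one whose $b$-image would \emph{decrease} the distance; an up from this configuration costs about $\frac23 n+2k-2$ switches (rotate clockwise $\approx 2k$ steps, pass through $n$ with $a$, which wraps to $-\frac{n}{3}$ without changing the distance, rotate counterclockwise $\approx\frac23 n$ steps, then $b$), not $n-M_i\approx 2k-1$. Concretely, with two ups your total $\sum_i(n-M_i)-m$ comes out near $2k$, far below the claimed $\frac23 n+2k-1$, so the formulas cannot be recovered by ``plugging in'' the $M_i$ as you define them. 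The missing ingredient is precisely the paper's key step: after the nearest-boundary rotation, minimality of the switch count forbids the distance-decreasing $b$, which forces the wrap-around $a$ and a second rotation of length $\approx\frac23 n$ in the opposite direction before the second up; following these forced moves (after first reducing to the extremal instance $d(p,q)=k-1$ with a critical first letter) produces the exact words $b(ba)^k(ab)^{n/3}b$, $b(ba)^{\frac23 n-1}b^2$, $(ba)^{n-k}b^2$ and their switch counts.

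The second substantial hole is that the exchange argument you announce as the ``main obstacle'' (downs or extra critical $b$'s only increase the total) is exactly where the lemma lives, and you give no proof of it; the paper avoids this global optimization altogether by pinning down the extremal configuration and showing the minimal word is forced block by block. Finally, your enumeration for part 2 misdescribes the exceptional branch: a jump from $1$ to $\frac23 n$ can only occur after the distance has \emph{descended} from $k-1=\frac23 n-2$ all the way to $1$, and the paper bounds that branch by locating a prefix at distance $\frac23 n-4$, passing to the reversed pair via Lemma \ref{lem:distprop}(\ref{distprop2}) (distance $2$ to $4$) and applying part 1 with $k=3$ to get $|w|_s\ge \frac23 n+5$; your sub-case ``a normal $+1$ followed by a boundary $b$ from distance $1$'' does not account for the cost of that descent.
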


\begin{proof}
	Fix $2\leq k\leq \frac23 n-1$ and choose $p,q$ and $w$ as in the lemma in such a way that $|w|_s$ attains its minimal possible value. Under these conditions, assume that $|w|$ is minimal as well. For $1\leq l\leq |w|$, let $w_l$ be the prefix of length $l$.
	
	First assume $2\leq k\leq \frac23 n-2$, so that $d(pw,qw)\neq \frac23 n$. Then for all prefices $d(pw_l,qw_l)\neq \frac23 n$, by Lemma \ref{lem:Cpairsync}. Consequently, the jumps mentioned in (\ref{eq:bpairexcept}) do not occur so that $|d(pw_l,qw_l)-d(pw_{l+1},qw_{l+1})|\leq 1$ for all $1\leq l < |w|$. Furthermore, $|d(pw_l,qw_l)-d(pw_{l+1},qw_{l+1})|=0$ if $n\not\in\left\{pw_l,qw_l\right\}$, so that by Lemma \ref{lem:subsetC} the distance can only change if $\left\{pw_l,qw_l\right\}\subseteq C$. That means the minimal switch count of $w$ is attained if $d(p,q)=k-1$ and already the first letter of $w$ increases the distance, i.e. $d(pw_1,qw_1) = k$. It follows from (\ref{eq:bpair}) that $w_1=b$ and $q=n$, and since $d(p,q)$ is known to be $k-1$, we can also identify $p$:
	\[
	p = \left\{
	\begin{array}{ll}
	n-2k+2&k = 2,\ldots,\frac{n}{2},\\
	n-2k+1\qquad&k = \frac{n}{2}+1,\ldots,\frac23 n-2.
	\end{array}
	\right.
	\]
	We will further build the word $w$, by keeping track of the images $d(pw_l,qw_l)$ and using properties of pairs. We will subdivide the range $2\leq k\leq \frac23 n-2$ into three cases.
	
	Consider the case $2\leq k\leq \frac{n}{3}-1$. Then $pb = n-2k+1 >\frac{n}{3}$ and $qb = -\frac{n}{3}$. By Property \ref{prop:rotprop}, we have to rotate $\left\{pb,qb\right\}$ clockwise until $n$ is in the image, so we find that $b(ba)^{k-1}b$ is a prefix of $w$. Then we obtain $pb(ba)^{k-1}b = n$ and
	\[
	qb(ba)^{k-1}b= \left\{
	\begin{array}{ll}
	-\frac{n}{3}+2k-1\qquad & k = 2,\ldots,\frac{n}{6},\\
	-\frac{n}{3}+2k\qquad & k = \frac{n}{6}+1,\ldots,\frac{n}{3}-1.
	\end{array}
	\right.
	\]
	If the next letter would be $b$, by (\ref{eq:bpair}) the distance would decrease again to $k-1$, contradicting the minimality of $|w|_s$. Therefore, $b(ba)^k$ is a prefix of $w$, keeping the distance equal to $k$ and giving $pb(ba)^k = -\frac{n}{3}$ and
	\[
	qb(ba)^k= \left\{
	\begin{array}{ll}
	-\frac{n}{3}+2k & k = 2,\ldots,\frac{n}{6}-1,\\
	-\frac{n}{3}+2k+1\qquad & k = \frac{n}{6},\ldots,\frac{n}{3}-1.
	\end{array}
	\right.
	\]
	Then $qb(ba)^k <\frac{n}{3}$, so by Property \ref{prop:rotprop}, this time we have to rotate counterclockwise until $-n$ is in the image. Hence the prefix of $w$ extends to $b(ba)^k(ab)^{\frac{n}{3}}$. This leads to $pb(ba)^k(ab)^{\frac{n}{3}} = -n$, $qb(ba)^k(ab)^{\frac{n}{3}} = -n+2k$ and distance $d(-n,-n+2k) = d(n-2k,n) = k$. Extending the prefix with one more $b$ gives distance $k+1$, so that $w = b(ba)^k(ab)^{\frac{n}{3}}b$ and $|w|_s = \frac{2}{3}n+2k-1$.
	
	The action of the word $w$ could be symbolically summarized as $w = b\circlearrowright a \circlearrowleft b$. For other values of $k$, the analysis is very similar, so we only give the symbolic summary.
	\begin{itemize}
		\item If $k=\frac{n}{3}$, then $w = b\circlearrowright a\circlearrowright b = b(ba)^{\frac23 n-1}b^2$ giving $|w|_s = \frac43 n-1.$
		\item If $\frac{n}{3}+1\leq k\leq \frac23 n-2$, then $w = b\circlearrowleft a\circlearrowleft b = (ba)^{n-k}b^2$ giving $|w|_s = 2n-2k+1$.
	\end{itemize}
	This completes the proof of the first statement of the lemma.
	
	For the second statement, take $k=\frac23 n -1$. In this case $d(p,q) = \frac23 n-2$ and $d(pw,qw)=\frac23 n$. By minimality of $w$, there are no strict prefices for which $d(pw_l,qw_l)=\frac23 n$ and therefore no distance jump from $\frac23 n$ to $1$. However, in view of (\ref{eq:bpairexcept}), we can not immediately conclude that $|d(pw_l,qw_l)-d(pw_{l+1},qw_{l+1})|\leq 1$ for all $1\leq l < |w|$, as there could be a jump from $1$ to $\frac23 n$. Therefore, we split into two cases.
	
	Suppose there is indeed a jump from $1$ to $\frac23 n$. Before this jump occurs, the distance has to change in steps of size $1$ and eventually reach $1$. In particular, this implies existence of a prefix $w_l$ of $w$ for which $d(pw_l,qw_l) = \frac23 n-4$. However, by Lemma \ref{lem:distprop}(\ref{distprop2}), this implies that $d(q,p) = 2$ and $d(qw_l,pw_l)=4$. The word length of $w_l$ can be lower bounded by taking $k=3$ in the first statement of the lemma. This gives $|w|_s\geq |w_l|_s \geq \frac23 n+5$.
	
	If such a jump does not exist, then we are back in the situation where $|d(pw_l,qw_l)-d(pw_{l+1},qw_{l+1})|\leq 1$ for all $1\leq l < |w|$. In the notation as before, we obtain $w = b\circlearrowleft a\circlearrowleft b = (ba)^{\frac{n}{3}+1}b^2$ giving $|w|_s = \frac23 n+3$, which is the minimal possible value.
\end{proof}

Loosely speaking, the previous lemma tells how many pairs of states with mutual distance $k$ have to be visited if we want to increase the distance from $1$ to $\frac23 n$ in steps of size $1$. As a function of $k$, this number first increases and then decreases. Our next goal is to prove that the switch count of $\mathcal{A}_n$ is obtained by summation of this function over $k$ and adding some lower order terms. In order to reach this goal, we have to understand not only how words act on pairs, but also how they act on more general sets.


If a word $w$ synchronizes a set $A\subseteq S$, then it synchronizes each pair in $A$, i.e. for each pair it increases the distance to $\frac23 n$. This means that the switch count of $w$  is determined by distances between pairs. To further investigate this relation between pairs and general sets $A\subseteq S$, we define the measure of a set $A$ as follows.

\begin{definition}
	For $A\subseteq S$, define the measure $\mu(A)$ by
	\[
	\mu(A) = \max_{p\in A}\min_{q\in A} d(p,q).
	\]
\end{definition}
The interpretation is that this measures the largest gap in $A(ab)^{\frac{n}{3}}$ on the cycle $C$. For $A\subseteq S^c$, we use the same definition, so that $\mu(-A)=\mu(A)$. Since we will only consider sets that are either contained in $S$ or $S^c$, we will not define $\mu(A)$ for other subsets of $Q_{\mathcal{B}_n}$.

The maximal possible measure of a set $A$ is $\frac23 n$, which is attained if $A$ is a singleton or a pair $\left\{p,q\right\}$ for which $p\simeq q$. If $A\subseteq C$ has measure $\frac23 n$, then $A$ is a singleton. Since all singletons have measure $\frac23 n$, a word for which $|Sw|=1$ increases the measure from 1 to $\frac23 n$. The minimal possible measure of a set is 1.  Sets with measure 1  are $S$ and $-S$, but some other sets have measure 1 as well. In particular, if $C\subseteq A$, then $\mu(A)=1$.

For pairs $A=\left\{p,q\right\}$, we find
\[
\mu(A) = \max\left\{\min\left\{d(p,p),d(p,q)\right\},\min\left\{d(q,p),d(q,q)\right\}\right\} = \max\left\{d(p,q),d(q,p)\right\}.
\]
The main idea of our argument will be that increasing the measure requires long words and can only be done by steps of size 1. In the next lemma we study the word length needed to increase the measure from 1 to 2.

\begin{lemma}\label{lem:prefix}
	Let $w$ be such that $|Sw|=1$ and $w$ has minimal switch count. Assume $w$ is the shortest such word. Let $v$ be the longest prefix of $w$ such that $\mu(Sv)=1$. Then
	$b(ab)^{\frac{n}{3}-1}$ is a prefix of $v$.
\end{lemma}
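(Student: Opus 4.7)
I would prove the lemma in two stages: first verify that $b(ab)^{n/3-1}$ is admissible as a prefix (every prefix of it applied to $S$ keeps $\mu=1$), and then show that the shortest minimal-switch-count word $w$ must begin with this specific string.

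For the admissibility check, I use the characterization that $\mu(A)=1$ for $A\subseteq S$ (resp.\ $-S$) if and only if $A(ab)^{n/3}=C$ (resp.\ $-C$), which is immediate from the fact that $C$ and $-C$ are the two $ab$-orbits of length $2n/3$ in $\mathcal{B}_n$. A direct computation in $\mathcal{B}_n$ gives $Sb=\{-2,-4,\ldots,-n\}\cup\{1,3,\ldots,n-3\}$; the negative evens already form the negative portion of $-C$, while the positive odd ``excess'' states outside $-C$, namely $\{n/3+1,n/3+3,\ldots,n-3\}$, each migrate leftward by $2$ under successive applications of $ab$. The largest excess, $n-3$, reaches $-C$ after exactly $n/3-1$ applications, so $Sb(ab)^k(ab)^{n/3}=-C$ for every $0\leq k\leq n/3-1$, giving $\mu(Sb(ab)^k)=1$; moreover $Sb(ab)^{n/3-1}=-C$ as a set.

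For the forcing argument I would exploit the automorphism $\sigma:q\mapsto -q$ of $\mathcal{B}_n$, which commutes with both letter actions and so yields the set identities $-(Sab)=Sb$ and $-(Sa)=Saa$ (both verifiable by a one-line computation together with the collision patterns of $a$ and $b$). Combined with the fact that $a^3=a$ on all of $\mathcal{B}_n$, a leading run of $a$'s in a shortest $w$ has length at most $2$. If $w=abu$ then $S\cdot bu=\sigma(S\cdot abu)=\{-p\}$, so $bu$ is synchronizing with $\sw(bu)=\sw(w)-1$, contradicting minimality of $\sw(w)$. If $w=aau''$ then similarly $au''$ is synchronizing with $|au''|=|w|-1$ and equal switch count, contradicting shortest. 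Hence $w$ begins with $b$. Iterating this reasoning is intended to pin down the next $n/3-1$ letter pairs as $ab$: at each position, choosing a different letter either drops $\mu$ below $1$ (impossible, since $\mu\geq 1$ always) or produces a set reachable from $S$ via a strictly shorter prefix of the same switch count, in both cases contradicting the choice of $w$.

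The main obstacle is this iterated step, since unlike the first-letter case there is no single symmetry swap that eliminates an inserted $b$ in the middle of the phase — in particular $-(Sb)$ does not equal $S b^k$ for any $k$, so one cannot reduce $w=bbu$ to a shorter word by $\sigma$ alone. The cleanest remedy is a direct comparison driven by the excess-state description of $Sb(ab)^k$ from the first stage, together with Lemma \ref{lem:pairdist} to lower-bound the switch count of any alternative continuation: a premature deviation forces a collision with an excess state still in transit to $-C$, and recovering the lost element subsequently costs strictly more switches than it saves, so the resulting word cannot be a shortest minimum-switch-count synchronizer.
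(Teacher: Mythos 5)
Your first stage (the admissibility check that $\mu(Sb(ab)^k)=1$ for $0\leq k\leq n/3-1$, via the excess-state description of $Sb$) and your first-letter argument (using the sign symmetry $q\mapsto -q$ together with $Sa^2=-Sa$ and $Sab=-Sb$ to force a leading $b$) are correct and coincide with the paper's proof. But the heart of the lemma is exactly the step you yourself flag as ``the main obstacle'': showing that after the leading $b$ each of the next $n/3-1$ blocks must be $ab$, and your proposed remedy does not close it. The dichotomy ``a deviation either drops $\mu$ below $1$ or yields a set reachable by a strictly shorter prefix of the same switch count'' is asserted, not proved --- and the second horn is the entire content of the lemma. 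Moreover, the tool you invoke, Lemma \ref{lem:pairdist}, does not apply in this phase: it concerns pairs $p,q\in C$ and words increasing their distance, whereas here the relevant sets $Sb(ab)^j$ are long intervals $[-n,n-3-2j]$ not contained in $C$ (nor in $-C$ until the last step), and their measure stays equal to $1$, so no distance or measure increase occurs that the lemma could price. The informal claim that a premature deviation ``forces a collision with an excess state'' is also not in itself a contradiction --- collisions are precisely what a synchronizing word must produce; one has to show the deviation is wasteful in switches by comparing images.

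The paper closes this step with a concrete interval computation that your proposal would need to supply. Writing $Sb(ab)^j=[-n,n-k]$ with $k=3+2j$ odd, one checks $[-n,n-k]a^2b=[-n,n-k]b$, $[-n,n-k]a^3=[-n,n-k]a$, $[-n,n-k]b^2a=[-n,n-k]a$ and $[-n,n-k]b^3=[-n,n-k]b$, so a continuation beginning with $a^2$ or $b^2$ can be deleted without changing the image (note these identities carry the \emph{following} letter, which is what makes the switch-count bookkeeping sound --- a subtlety your sketch glosses over, since switch counts do not simply add across a splice); and $[-n,n-k]ba=[-n+4,n-k+2]$ is an interval containing neither $n$ nor $-n$, so by the rotation principle behind Property \ref{prop:rotprop} any optimal continuation must alternate letters until $\pm n$ re-enters the image, landing on $[-n,n-k-2]$ or its negative, i.e.\ on what $ab$ already achieves with strictly fewer switches. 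Iterating this forces the prefix $b(ab)^{n/3-1}$, after which your admissibility computation (equivalently, $\mu(S\tilde u)=1$ for every prefix $\tilde u$) finishes the proof as in the paper. Without this or an equivalent argument, your proposal has a genuine gap at precisely the step the lemma is about.
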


\begin{proof} First note that $\mu(S)=1$ and $\mu(Sw)=\frac23 n$. Observe that $Sa = [-n+2,n-1]$ and $Sb = [-n,n-3]$. Further note that $Sa^2=-Sa$ and $Sab=-Sb$. This means that by minimality $w$ does not start with $a^2$ or $ab$, so it starts with $b$.
	
	In the spirit of Property \ref{prop:rotprop}, to synchronize intervals $[x,y]$ not containing $n$ or $-n$, we have to rotate by alternating $a$ and $b$ until $n$ or $-n$ is in the image. The direction of rotation is clockwise if $|y|>|x|$ and counterclockwise otherwise. Now suppose we look for a word $v_k$ with minimal switch count to synchronize the set $[-n,n-k]$ for some odd $k$ satisfying $3\leq k\leq \frac23 n-1$. Noting that
	\[
	[-n,n-k]a^2b=[-n,n-k]b,\qquad\text{and}\qquad [-n,n-k]a^3=[-n,n-k]a,
	\]
	we can assume that does not start with $a^2$. Similarly, it does not start with $b^2$ since
	\[
	[-n,n-k]b^2a=[-n,n-k]a,\qquad\text{and}\qquad [-n,n-k]b^3=[-n,n-k]b.
	\]
	So $v_k$ starts either with $ab$ or with $ba$. Now
	\[
	[-n,n-k]ab = [-n,n-k-2],\qquad\text{and}\qquad [-n,n-k]ba = [-n+4,n-k+2],
	\]
	and the latter turns out to be of the shape $[x,y]$ as above. So if $v_k$ would start with $ba$, we would have to alternate $a$ and $b$ until $n$ or $-n$ is in the image. This means $v_k$ starts with $ba(ab)^2$ or with $ba(ba)^{(-k+3)/2}b$, giving image $[-n,n-k-2]$ and $-[-n,n-k-2]$ respectively. So if $v_k$ would start with $ba$, the switch count would fail to be minimal. Hence $v_k$ starts with $ab$.
	
	Iterating this argument, we conclude that $w$ starts with $u = b(ab)^{\frac{n}{3}-1}$. Finally, observe that for every prefix $\tilde u$ of $u$, we have $\mu(S\tilde u)=1$, so that $u$ is a prefix of $v$.
\end{proof}

\begin{corollary}\label{cor:prefix}
	Let $w,v$ be as in Lemma \ref{lem:prefix}. Then $|v|_s \geq \frac23 n-1$.
\end{corollary}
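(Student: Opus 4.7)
The plan is to derive the corollary directly from Lemma~\ref{lem:prefix} by a simple monotonicity argument on switch count.

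First, I would compute the switch count of the prefix $u = b(ab)^{n/3-1}$ explicitly. Writing it out, $u = bababab\cdots ab$ has length $1 + 2(n/3 - 1) = \frac{2}{3}n - 1$, and since consecutive letters in $u$ alternate between $a$ and $b$, no collapsing occurs when computing its switch count. Hence $|u|_s = |u| = \frac{2}{3}n - 1$.

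Next, I would invoke monotonicity of switch count under taking prefixes: if $u$ is a prefix of $v$, say $v = uw$, then collapsing consecutive equal symbols in $v$ collapses at most one boundary (where the last run of $u$ meets the first run of $w$), so $|v|_s \geq |u|_s + |w|_s - 1 \geq |u|_s$ (and equality $|v|_s = |u|_s$ holds trivially when $w = \epsilon$). This is immediate from the inductive definition of $\swc{\cdot}$ given in the introduction.

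Combining these two observations: by Lemma~\ref{lem:prefix}, $u = b(ab)^{n/3-1}$ is a prefix of $v$, so $|v|_s \geq |u|_s = \frac{2}{3}n - 1$, which is exactly the claimed bound. There is no real obstacle here; the corollary is essentially a bookkeeping consequence of the previous lemma, and the only thing to verify is that the explicit prefix given there has switch count equal to its length because it alternates letters.
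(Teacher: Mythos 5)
Your proof is correct and matches the paper's intended argument: the corollary is an immediate consequence of Lemma \ref{lem:prefix}, since the alternating prefix $b(ab)^{\frac{n}{3}-1}$ has switch count equal to its length $\frac{2}{3}n-1$ and switch count is monotone under taking prefixes. Nothing further is needed.
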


The following lemma is the key to control which word length is needed to increase the measure of a general set $A$. Namely, an increase in measure can only achieved by increasing the distance between some $p,q\in A$ by at least the same amount.

\begin{lemma}\label{lem:setpair}
	Let $A\subseteq S$ and let $w$ be a word. Then there exist $p,q\in A$ such that
	\[
	d(p,q)\leq \mu(A)\qquad\text{and}\qquad d(pw,qw) = \mu(Aw).
	\]
	If in addition $|Aw|=1$, then there exist $p,q\in A$ such that
	\[
	d(p,q)= \mu(A)\qquad\text{and}\qquad d(pw,qw) = \mu(Aw) = \frac23 n.
	\]
	
\end{lemma}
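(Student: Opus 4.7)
Proof plan.

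The second claim is straightforward. When $|Aw|=1$, write $Aw=\{r\}$, so every $p,q\in A$ satisfies $pw=qw=r$, hence $d(pw,qw)=d(r,r)=\tfrac23 n$ by Lemma~\ref{lem:distprop}(\ref{distprop3}), and $\mu(Aw)=\tfrac23 n$ since $Aw$ projects to a single point on $C$. Choosing $p,q\in A$ that attain $d(p,q)=\mu(A)$---which exist by the very definition $\mu(A)=\max_{p}\min_{q}d(p,q)$---yields both required equalities.

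For the first claim, my plan is to trace a maximum-gap pair in $Aw$ backwards through $w$, one letter at a time. Let $m=\mu(Aw)$ and pick $r\in Aw$ maximising $\min_{s\in Aw}d(r,s)=m$, so that $r$ sits at the clockwise start of a maximum gap of $Aw$ on $C$; let $s\in Aw$ with $d(r,s)=m$ be its clockwise neighbour. The task reduces to exhibiting preimages $p,q\in A$ of $r,s$ under $w$ with $d(p,q)\le\mu(A)$. Proceeding by induction on $|w|$, the inductive step is the following \emph{single-letter lifting property}: for any set $B\subseteq S\cup(-S)$ and letter $x\in\{a,b\}$, and any $r',s'\in Bx$ with $d(r',s')\le\mu(Bx)$, there exist preimages $p,q\in B$ of $r',s'$ under $x$ with $d(p,q)\le\mu(B)$. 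The base case $|w|=0$ is immediate from the definition of $\mu(A)$.

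The main obstacle is this single-letter lifting step. Away from the collapse points $n$ and $-n$, equations~(\ref{eq:apair}) and (\ref{eq:bpair}) show that $a$ and $b$ preserve all pairwise $C$-distances, so $\mu$ is invariant, lifts are unique, and the statement holds trivially. The delicate case is when $n$ (or $-n$) lies in $B$: then $x$ collapses $n$ together with the swap partner of its image, the exceptional distance jumps of equation~(\ref{eq:bpairexcept}) can occur, and by Lemma~\ref{lem:subsetC} the only way for a subset of $C$ to leave $C$ (or $-C$) is via this single transition. A case analysis based on which of $r',s'$ is the image of the collapsed pair---together with the observation that the alternative preimage of that endpoint sits within clockwise distance $\mu(B)$ of the unique preimage of the remaining endpoint---lets us choose $p,q\in B$ with $d(p,q)\le\mu(B)$.
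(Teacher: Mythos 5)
Your treatment of the second statement and of the base case is fine, but the reduction of the first statement to your ``single-letter lifting property'' is a genuine gap: that property is false as stated. Take $n=12$, so $C=\{-3,-1,2,4,6,8,10,12\}$ with clockwise order $-3\to-1\to2\to4\to6\to8\to10\to12\to-3$, and let $B=\{-1,4,6,8,12\}\subseteq S$. Then $\mu(B)=2$ (the nearest-neighbour gaps are $2,1,1,2,2$). Reading $b$ moves $n=12$ one position clockwise and preserves all distances among the other elements, so $Bb=\{-2,3,5,7,-4\}$ and $\mu(Bb)=3$, realised by the gap from $8b=7$ to $12b=-4$. Now take $r'=(-1)b=-2$ and $s'=6b=5$: by (\ref{eq:bpair}), $d(r',s')=d(-1,6)=3\le\mu(Bb)$, so the hypotheses of your lifting step hold; but $r'$ and $s'$ have \emph{unique} $b$-preimages $-1$ and $6$ in $B$, and $d(-1,6)=3>\mu(B)=2$. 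The case you wave through as trivial (``away from the collapse points distances are preserved, so the statement holds'') is exactly where this bites: when $n\in B$ the measure can increase by one while the particular pair you are lifting sits away from the merge, has a forced preimage pair, and keeps its distance, so nothing bounds that distance by $\mu(B)$.

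The repair is the stronger invariant that the paper carries through its (forward) induction: not merely $d(p_1,q_1)\le\mu(Aw)$, but $d(p_1,q_1)=\min_{q\in Aw}d(p_1,q)$, i.e.\ the traced pair is always a nearest-clockwise-neighbour pair of the current image set. The single-letter step then asserts that such a pair lifts to a pair $(p_0,q_0)$ with $d(p_0,q_0)=\min_{q\in A}d(p_0,q)$ (hence in particular $\le\mu(A)$), proved by a case distinction on whether $p_1$, $q_1$, or neither equals $nb$, using (\ref{eq:bpair}) and Lemma \ref{lem:distprop}(\ref{distprop2}),(\ref{distprop4}); applied to a maximum-gap pair of $Aw$ this gives the lemma. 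Your counterexample pair is excluded under this invariant, since $(-2,5)$ is not a nearest-neighbour pair of $Bb$ (it is separated by $3\in Bb$ at distance $2$ from $-2$). Your weaker invariant ``$d\le\mu$'' does not survive even a single letter, so your induction cannot close without this strengthening; with it, your plan essentially becomes the paper's proof.
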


\begin{proof}
	We will prove the following claim: if $A\subseteq S$ and $p_1,q_1\in Aw$ are such that
	\begin{equation}\label{eq:claim}
	d(p_1,q_1) = \min_{q\in Aw}d(p_1,q),
	\end{equation}
	then there exist $p_0,q_0\in A$ such that $p_0w=p_1$, $q_0w=q_1$ and
	\begin{equation}\label{eq:claimconcl}
	d(p_0,q_0) = \min_{q\in A}d(p_0,q).
	\end{equation}
	This claim implies the first statement of the lemma, since we can take $p_1$ such that $\min_{q\in Aw}d(p_1,q)$ is maximal, i.e. $d(p_1,q_1)=\mu(Aw)$. Clearly we also have $d(p_0,q_0)\leq \mu(A)$. We will take $p_1,q_1$ as above so that in particular (\ref{eq:claim}) is satisfied, and prove this claim by induction on the length of $w$.
	
	First we will prove the claim for $|w|=1$, which is the most involved part of the proof. We distinguish several cases, but in most cases the argument is similar. Namely, for given $p_1$ and $q_1$ we indicate how to choose $p_0$ and $q_0$. We then assume that (\ref{eq:claimconcl}) is false, and show that this leads to a contradiction with (\ref{eq:claim}).
	
	If $w=a$ or $n\not\in A$, then $d(pw,qw) = d(p,q)$ for all $p,q\in A$ by (\ref{eq:apair}) and (\ref{eq:bpair}). Choose $p_0,q_0\in A$ such that $p_0w=p_1$ and $q_0w = q_1$. If there would be $q\in A$ such that $d(p_0,q)<d(p_0,q_0)$, then we would have $d(p_1,qw)<d(p_1,q_1)$, contradicting (\ref{eq:claim}). Therefore, $d(p_0,q_0) = \min_{q\in A}d(p_0,q)$.
	
	From now on, assume $w = b$ and $n\in A$. If $A = \left\{n\right\}$, then the claim clearly holds. If $A = \left\{-\frac{n}{3}+1,n\right\}$, then $Ab = \left\{-\frac{n}{3}\right\}$ so that $p_1=q_1 = -\frac{n}{3}$. Taking $p_0 = -\frac{n}{3}+1$ and $q_0=n$ gives $p_0w = p_1$, $q_0w = q_1$ and $d(p_0,q_0) = \min_{q\in A}d(p_0,q)$. So we can in addition assume that $A\setminus\left\{-\frac{n}{3}+1,n\right\} \neq\emptyset$. This means in particular that $|Aw|\geq 2$ and that $\min_{q\in Aw}d(p_1,q)<\frac23 n$ for every $p_1\in Aw$.
	
	Since $n\in A$, we have $nb = -\frac{n}{3}\in Aw$. It is not possible that $p_1=q_1=nb$, since then $d(p_1,q_1)=\frac23 n$ and (\ref{eq:claim}) would not hold. We split the remaining possibilities into three cases:
	\begin{itemize}
		\item If $q_1 = nb$, then $p_1\neq nb$ and there is a unique $p_0\in A$, $p_0\neq n$ such that $p_0b=p_1$. Choose $q_0=n$. If there would be $q\in A$ such that $d(p_0,q)<d(p_0,n)$, we would by (\ref{eq:bpair}) have $d(p_1,qb) = d(p_0,q)<d(p_0,n)=d(p_1,q_1)-1$, contradicting (\ref{eq:claim}).
		\item If $p_1=nb$, then $q_1\neq nb$ and there is a unique $q_0\in A$, $q_0\not\in nb^{-1}=\left\{-\frac{n}{3}+1,n\right\}$ such that $q_0b=q_1$. There are two subcases:
		\begin{itemize}
			\item	If $-\frac{n}{3}+1\in A$, choose $p_0 = -\frac{n}{3}+1$ so that $d(n,p_0) = 1$ and $p_0b=p_1$. Since $p_0\not\simeq q_0$, we have $d(p_0,q_0)\leq \frac23 n-1$. 	 	
			If there would be  $q\in A$ such that $d(p_0,q)<d(p_0,q_0)$, we would have $d(p_0,q)\leq \frac 23 n-2$. As Lemma \ref{lem:distprop}(\ref{distprop2}) gives $d(p_0,n) = \frac23 n-d(n,p_0) = \frac23 n-1$, it follows that $q\neq n$.  Using (\ref{eq:bpair}), this means $d(p_1,qb) = d(p_0b,qb) = d(p_0,q)<d(p_0,q_0)=d(p_0b,q_0b)=d(p_1,q_1)$, contradicting (\ref{eq:claim}).
			\item If $-\frac{n}{3}+1\not\in A$, choose $p_0=n$. If there would be $q\in A$ such that $d(p_0,q)<d(p_0,q_0)$, by (\ref{eq:bpair}) we would have $d(p_1,qb)=d(p_0b,qb) = d(p_0,q)-1<d(p_0,q_0)-1=d(p_1,q_1)$, contradicting (\ref{eq:claim}).
		\end{itemize}
		
		\item If $p_1\neq nb\neq q_1$, there exist unique $p_0,q_0\in A$ such that $p_0b=p_1$ and $q_0b=q_1$. Also $p_0,q_0\not\in\left\{-\frac{n}{3}+1,n\right\}$. Suppose $q\in A$ is such that $d(p_0,q)<d(p_0,q_0)$. We consider two subcases:
		\begin{itemize}
			\item If $q=n$, then $d(q,q_0)\geq 2$, since $q_0\neq -\frac{n}{3}+1$. Therefore, $d(p_0,q) = d(p_0,q_0)-d(q,q_0)\leq d(p_0,q_0)-2$, using Lemma \ref{lem:distprop}(\ref{distprop4}). It follows that $d(p_1,qb) = d(p_0b,qb) = d(p_0,q)+1 \leq d(p_0,q_0)-1 = d(p_1,q_1)-1<d(p_1,q_1)$, contradicting (\ref{eq:claim}).
			\item If $q\neq n$, then $d(p_1,qb)=d(p_0b,qb)=d(p_0,q)<d(p_0,q_0)=d(p_1,q_1)$, again contradicting (\ref{eq:claim}).
		\end{itemize}
	\end{itemize}	
	In all subcases the assumption that there exists $q\in A$ for which $d(p_0,q)<d(p_0,q_0)$ leads to a contradiction. Therefore, we conclude that $d(p_0,q_0) = \min_{q\in A}d(p_0,q)$, completing the proof for the case $|w|=1$.
	
	Now suppose the claim is true for $|w|=k$. Let $A\subseteq S$ and let $w$ be a word with length $|w|=k+1$. Let the first letter be $u$, so that $w=uv$ for $|v|=k$. Further, let $p_1,q_1\in Aw$ be such that $d(p_1,q_1) = \min_{q\in Aw}d(p_1,q)$. By the induction hypothesis, there exist $\tilde p_1,\tilde q_1\in Au$ such that $\tilde p_1v=p_1$, $\tilde q_1v=q_1$ and $d(\tilde p_1,\tilde q_1)=\min_{q\in Au} d(\tilde p_1,q)$. Since the claim is true for words of length 1, there exist $p_0,q_0\in A$ such that $p_0u=\tilde p_1$, $q_0u=\tilde q_1$ and $d(p_0,q_0)=\min_{q\in A}d(p_0,q)$. Clearly, we also have $p_0w=p_1$ and $q_0w=q_1$, proving the claim for $|w|=k+1$ and hence for all lengths. This establishes the first statement of the lemma.
	
	We proceed to prove the second statement. Obviously, there have to exist $\tilde p$ and $\tilde q$ for which $d(\tilde p,\tilde q)=\mu(A)$. Observe that in this case $pw=qw$ and $d(pw,qw)=\mu(Aw)=\frac23 n$ for all $p,q\in A$, so that in particular this is true for $p=\tilde p$ and $q=\tilde q$, which completes the proof.
\end{proof}

For distances between states, we have seen in (\ref{eq:bpairexcept}) that jumps from $1$ to $\frac23n$ might occur. The following lemma excludes this behavior for measures, so that we can conclude that measures have to increase in steps of size 1.

\begin{lemma}\label{lem:measurediff}
	For all $A\subseteq S$,
	\[
	\mu(Aa) = \mu(A)\qquad\text{and}\qquad  \mu(Ab)\leq \mu(A)+1.
	\]
	Moreover, if $n\not\in A$, then
	\[
	\mu(Ab)=\mu(A).
	\]
\end{lemma}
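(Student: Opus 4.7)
The plan is to dispatch all three assertions by a common technique. For each $p_1$ in the image set, exhibit a companion $q_1$ in the image set whose distance from $p_1$ is controlled by distances in $A$; since $\mu$ is a max-min over ordered pairs, this controls $\mu$ of the image. For $\mu(Aa)=\mu(A)$, equation (\ref{eq:apair}) gives $d(pa,qa)=d(p,q)$ for all $p\neq q$ in $A$, and both sides equal $\frac23 n$ trivially when $p=q$. Hence for any $p_1\in Aa$ and any preimage $p_0\in A$ of $p_1$, $\min_{q_1\in Aa}d(p_1,q_1)=\min_{q_0\in A}d(p_0,q_0)$, and taking the max gives $\mu(Aa)=\mu(A)$. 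When $n\notin A$, neither the adjustments of (\ref{eq:bpair}) nor the jumps of (\ref{eq:bpairexcept}) can trigger, so $d(pb,qb)=d(p,q)$ for all $p,q\in A$ and the identical argument yields $\mu(Ab)=\mu(A)$. This settles the third assertion and the bound whenever $n\notin A$.

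For the bound in the remaining case $n\in A$, the strategy is to show that every $p_1\in Ab$ admits some $q_1\in Ab$ with $d(p_1,q_1)\leq \mu(A)+1$; taking the max over $p_1$ then gives $\mu(Ab)\leq \mu(A)+1$. Since $b$ is injective on $S\setminus\{n\}$, every $p_1\in Ab$ with $p_1\neq -\frac{n}{3}$ has a unique preimage $p_0\in A$, and necessarily $p_0\neq n$; choosing $q_0\in A$ attaining $\min_{q\in A}d(p_0,q)\leq \mu(A)$, an inspection of (\ref{eq:bpair})--(\ref{eq:bpairexcept}) in every subcase on $q_0$ gives $d(p_0b,q_0b)\leq d(p_0,q_0)+1\leq \mu(A)+1$.

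The only real obstacle is the case $p_1=-\frac{n}{3}$: the naive choice $p_0=n$ together with $q_0=-\frac{n}{3}+1$ triggers the exceptional $1\to \frac23 n$ jump of (\ref{eq:bpairexcept}), yielding $d(p_0b,q_0b)=\frac23 n$, which need not be bounded by $\mu(A)+1$. The fix exploits the asymmetry of $d$. If $-\frac{n}{3}+1\in A$, take $p_0=-\frac{n}{3}+1$ and pick $q_0\in A$ attaining $\min_{q\in A}d(p_0,q)\leq \mu(A)$; the worst shift (\ref{eq:bpair})--(\ref{eq:bpairexcept}) can produce then is the benign case $q_0=n$ with $d(-\frac{n}{3}+1,n)=\frac23 n-1$, giving $d(p_0b,q_0b)=\frac23 n\leq \mu(A)+1$ (since this forces $\mu(A)\geq \frac23 n-1$). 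If $-\frac{n}{3}+1\notin A$, one must take $p_0=n$, but then no $q\in A$ satisfies $d(n,q)=1$, so the exceptional jump cannot occur and the standard rule of (\ref{eq:bpair}) gives $d(nb,q_0b)=d(n,q_0)-1\leq \mu(A)-1$ for a minimising $q_0\in A$. Combining the subcases completes the proof.
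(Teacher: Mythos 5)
Your proof is correct and follows essentially the same route as the paper's: both arguments rest on the update rules (\ref{eq:apair})--(\ref{eq:bpairexcept}) and on dodging the exceptional jump of the pair $\left(n,-\frac{n}{3}+1\right)$ by trading $n$ for $-\frac{n}{3}+1$ when both lie in $A$ (you do this by choosing $-\frac{n}{3}+1$ as the preimage of $-\frac{n}{3}$, the paper via the identity $d(nb,rb)=d(n,r)-1=d\left(-\frac{n}{3}+1,r\right)$), while the remaining case uses that no element of $A$ is at distance $1$ from $n$. The only nitpick is the degenerate subcase where every element of $A$ is $\simeq n$ (e.g.\ $A=\{n\}$), in which your chain $d(nb,q_0b)=d(n,q_0)-1\leq\mu(A)-1$ presupposes a minimiser $q_0\neq n$; there $\mu(A)=\frac23 n$ and the bound is trivial, so nothing breaks.
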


\begin{proof}
	Let $A\subseteq S$. From (\ref{eq:apair}) and (\ref{eq:bpair}), we directly deduce that $\mu(Aa)=\mu(A)$ and that $\mu(Ab)=\mu(A)$ if $n\not\in A$. If $\left\{n,-\frac{n}{3}+1\right\}\not\subseteq A$, then similarly from (\ref{eq:bpair}) and (\ref{eq:bpairexcept}) it follows that $\mu(Ab)\leq\mu(A)+1$. So assume $\left\{n,-\frac{n}{3}+1\right\}\subseteq A$.
	
	If $\left\{n,-\frac{n}{3}+1\right\} = A$, then
	\[
	\mu(A) = \max\left\{d\left(n,-\frac{n}{3}+1\right),d\left(-\frac{n}{3}+1,n\right)\right\} = \max\left\{1,\frac23n-1\right\} = \frac23n-1.
	\]
	Furthermore, $Ab$ is the singleton $-\frac{n}{3}$ so that $\mu(Ab)=\frac23n$, satisfying the lemma.
	
	Finally, suppose that $A\setminus\left\{-\frac{n}{3}+1,n\right\}$ is nonempty. Let $p\in A$. If $p\neq n$, then $d(pb,qb)\leq d(p,q)+1$ for all $q\in A$. Hence
	\[
	\min_{q\in A}d(pb,qb) \leq \min_{q\in A}d(p,q)+1.
	\]
	If $p=n$, let $r\in A$ be such that $d(pb,rb) = d(-\frac{n}{3},rb)$ is minimal. For $q\in A\setminus\left\{-\frac{n}{3}+1,n\right\}$ we have $qb\not\simeq -\frac{n}{3}$ and consequently $d(pb,qb)\leq\frac23 n-1$. In particular this implies $d(pb,rb)\leq\frac23 n-1$. Then, using (\ref{eq:bpair}),
	\[
	\min_{q\in A}d(pb,qb) = d(nb,rb) = d(n,r) - 1 = d\left(-\frac{n}{3}+1,r\right) = \min_{q\in A}d\left(-\frac{n}{3}+1,q\right).
	\]
	So for all $\tilde p\in A$, we find
	\[
	\min_{q\in A}d(\tilde pb,qb) \leq \max_{p\in A}\min_{q\in A}d(p,q)+1.
	\]
	Taking the maximum over $\tilde p$ gives $\mu(Ab)\leq \mu(A)+1$.
\end{proof}

The measure of a set can only increase if there is a distance change for some pair $\left\{p,q\right\}$ in the set. The following result states that every time we increase the measure of a set, this set was contained in $C$ or $-C$. This guarantees that Lemma \ref{lem:pairdist} applies to the pairs for which the distance changes.

\begin{lemma}\label{lem:inC}
	Let $w$ be a synchronizing word with minimal switch count. If $vb$ is a prefix of $w$ such that $\mu(Svb)=\mu(Sv)+1$, then $Sv\subseteq C$ or $Sv\subseteq -C$.
\end{lemma}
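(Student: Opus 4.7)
The approach is to combine three earlier tools. Lemma \ref{lem:measurediff} (plus the $q\leftrightarrow -q$ symmetry of $\mathcal{B}_n$) tells us that a $b$-step strictly increasing the measure forces $n\in Sv$ when $Sv\subseteq S$, or $-n\in Sv$ when $Sv\subseteq -S$. By that same symmetry it suffices to treat the first case, so WLOG I assume $Sv\subseteq S$ and $n\in Sv$, and must show $Sv\subseteq C$. In particular this forces $|v|$ to be even, so $|v|\neq |u|=2n/3-1$, where $u=b(ab)^{n/3-1}$ is the mandatory prefix identified in Lemma \ref{lem:prefix}.

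The main case is $|v|>|u|$. By the iteration inside the proof of Lemma \ref{lem:prefix} --- starting from $Sb=[-n,n-3]$ in $-S$ and using $[-n,n-k]\cdot ab=[-n,n-k-2]$ for the relevant range of $k$ --- one obtains
\[
Su \;=\; [-n,\, n/3-1]_{-S} \;=\; I_{-n},
\]
where $I_x = [x,x+4n/3-1]$ is the interval family used in the proof of Lemma \ref{lem:subsetC}. Writing $v=uv'$ gives $Sv\subseteq I_{-n}\cdot v'$, and the closure invariant established inside the proof of Lemma \ref{lem:subsetC} --- subsets of $I_x$ with $-n\le x\le -n/3+1$ are carried by every single letter into subsets of some $I_y$ in the same range --- iterated along $v'$ yields $Sv\subseteq I_y$ for some such $y$. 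Finally, $n\in I_y=[y,y+4n/3-1]$ forces $y+4n/3-1\ge n$, hence $y\ge -n/3+1$; combined with $y\le -n/3+1$ this pins down $y=-n/3+1$, so $Sv\subseteq I_{-n/3+1}=C$, as required.

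The complementary case $|v|<|u|$ should be handled by observing that the measure-jump hypothesis can never occur within $u$: I would verify, by the same iterative interval computation used to prove Lemma \ref{lem:prefix}, that $\mu(Sv)=1$ for every prefix $v$ of $u$, and likewise $\mu(Svb)=1$ when $vb$ is still a prefix of $u$, rendering the lemma vacuous on this range. The main obstacle is precisely this last verification --- Lemma \ref{lem:prefix} as stated only locates the end of the longest $\mu=1$ prefix, not that $\mu\equiv 1$ throughout $u$ --- so the careful work will be to propagate the invariant $\mu(Sv)=1$ along each intermediate step of $u$, most cleanly by showing the stronger fact that $Sv\cdot(ab)^{n/3}$ equals $C$ or $-C$ in full at every prefix $v$ of $u$.
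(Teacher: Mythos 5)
Your argument is essentially the paper's: Lemma \ref{lem:measurediff} forces $n$ or $-n$ into $Sv$, the prefix $u=b(ab)^{n/3-1}$ from Lemma \ref{lem:prefix} satisfies $Su=-C$, and the interval invariance then pins $Sv$ inside $C$ or $-C$; the paper phrases this last step by citing Lemma \ref{lem:subsetC} directly rather than re-running the interval claim from its proof, but that is the same computation. The one step you leave open --- that no measure jump can occur strictly inside $u$ --- is exactly the auxiliary fact the paper also relies on: it asserts $\mu(S\tilde u)=1$ for every prefix $\tilde u$ of $u$ and uses it to conclude that $u$ is a prefix of $v$, which plays the role of your case split on $|v|$ versus $|u|$. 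Closing it is easier than you suggest: the prefixes of $u$ ending in $b$ are $b(ab)^j$ with $Sb(ab)^j=[-n,\,n-3-2j]\supseteq -C$ for $0\leq j\leq n/3-1$ (the interval computation you already quote), and any set containing $C$ or $-C$ has measure $1$, as remarked right after the definition of $\mu$; the prefixes ending in $a$ then also have measure $1$ because $\mu(Aa)=\mu(A)$ by Lemma \ref{lem:measurediff}. Since $\mu\geq 1$ always, no prefix $vb$ of $u$ can satisfy $\mu(Svb)=\mu(Sv)+1$, so your short case is indeed vacuous and the stronger ``full projection'' verification you propose is not needed. One cosmetic caveat: the ``WLOG by symmetry'' reduction to $n\in Sv$ is not literally a symmetry of the fixed set $S$, but it is harmless, since in the case $-n\in Sv$ the same interval argument gives $Sv\subseteq I_{-n}=-C$ directly.
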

\begin{proof} By Lemma \ref{lem:measurediff}, we find $n\in Sv$ (or $-n\in Sv$). By Lemma \ref{lem:prefix}, the word $u = b(ab)^{\frac{n}{3}-1}$ is a prefix of $w$. This word satisfies $Su \subseteq -C$ and furthermore $\mu(S\tilde u)=1$ for every prefix $\tilde u$ of $u$. This implies that $u$ is a prefix of $v$ and now Lemma \ref{lem:subsetC} gives $Sv \subseteq C$ or $Sv \subseteq -C$.
\end{proof}

Having collected these results, we are ready to provide the proof of Theorem \ref{theorem:lowerbound} in a few propositions. First we prove that the expression given in Theorem \ref{theorem:lowerbound} is a lower bound, by breaking up a synchronizing word in subwords that increase the measure in steps of size 1.

\begin{proposition}\label{prop:lower}
	Let $w$ be a synchronizing word for $\mathcal{A}_n$. Then \[|w|_s\geq \left\lceil \frac23 n(n-2)-1\right\rceil.\]
\end{proposition}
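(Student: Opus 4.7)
The plan is to follow the trajectory of the measure $\mu$ along $w$ and apply Lemma \ref{lem:pairdist} to sub-words corresponding to consecutive pairs of measure increments. I would begin by assuming without loss of generality that $w$ has minimal switch count and is the shortest such word; this validates Lemma \ref{lem:prefix}, so the longest prefix $u_0$ of $w$ on which $\mu(S u_0)=1$ satisfies $|u_0|_s \geq \tfrac{2n}{3}-1$ by Corollary \ref{cor:prefix}. By Lemma \ref{lem:measurediff} the measure $\mu(S w[1..t])$ changes by at most one per letter and only via $b$-letters, so as it climbs from $\mu(S)=1$ to $\mu(Sw)=\tfrac{2n}{3}$ it takes every integer value in between. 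Let $\tau_k$ denote the position of the $b$ that first raises the measure to $k$, for $k=2,\ldots,\tfrac{2n}{3}$.

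For each $k$ with $2 \leq k \leq \tfrac{2n}{3}-1$, consider the sub-word $W_k = w[\tau_{k-1}+1,\ldots,\tau_{k+1}]$, which takes the running set from measure $k-1$ to measure $k+1$. Lemma \ref{lem:setpair} supplies a pair $(p_k,q_k)$ in the set at time $\tau_{k-1}$ with $d(p_k,q_k)\leq k-1$ and $d(p_k W_k, q_k W_k)=k+1$; Lemma \ref{lem:inC} places the set (hence the pair) in $C$ or $-C$ just before each measure increase, and by the symmetry of $\mathcal{B}_n$ I may take $p_k,q_k\in C$. Lemma \ref{lem:pairdist} then applies with parameter $k$, giving
\[
|W_k|_s \geq f(k) := \begin{cases} \tfrac{2n}{3}+2k-1 & \text{if } 2\leq k \leq \tfrac{n}{3},\\ 2n-2k+1 & \text{if } \tfrac{n}{3}+1 \leq k \leq \tfrac{2n}{3}-2. \end{cases}
\]
The boundary case $k=\tfrac{2n}{3}-1$ is handled by part (2) of Lemma \ref{lem:pairdist}, using that the witness pair for the final two-step transition is forced to start at distance exactly $\tfrac{2n}{3}-2$.

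A direct computation shows that $\sum_{k=2}^{2n/3-1} f(k) = \tfrac{2}{3} n(n-2) - 2$ when $n$ is divisible by $6$, matching the target up to a single unit. Converting the family of sub-word bounds into a bound on $|w|_s$ itself is the main technical step, since consecutive sub-words $W_k$ and $W_{k+1}$ overlap on $w[\tau_k+1,\ldots,\tau_{k+1}]$, so naive summation double-counts switches lying in the overlap. I plan to remedy this by a precise double-counting argument: for each position of $w$ count how many $W_k$ contain it, exploiting that the measure-forcing letters at positions $\tau_k$ each lie in exactly one $W_k$ and, by the minimality of $|w|$, must themselves be switches (else an adjacent duplicate $b$ could be removed). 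Combined with $|u_0|_s \geq \tfrac{2n}{3}-1$ and the identity $|v|_s = \#\text{switches}+1$, this bookkeeping should yield $|w|_s \geq \lceil \tfrac{2}{3} n(n-2) - 1 \rceil$, with the extra $+1$ absorbed by the forced initial letter of $w$ from Lemma \ref{lem:prefix}. The combinatorial accounting here is the principal obstacle I anticipate; the remaining congruence classes of $n$ should follow from the same scheme with small arithmetic adjustments.
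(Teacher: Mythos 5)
Your overall strategy --- tracking the measure $\mu$ along $w$, decomposing $w$ according to measure increments, extracting a witness pair via Lemma \ref{lem:setpair} and Lemma \ref{lem:inC}, and summing the bounds of Lemma \ref{lem:pairdist} --- is the paper's strategy, and your arithmetic target $\sum_k f(k)=\frac23 n(n-2)-2$ is correct. But the step you defer, the ``combinatorial accounting'', is exactly where the argument breaks as you have set it up. Your windows $W_k=w[\tau_{k-1}+1,\ldots,\tau_{k+1}]$ cover essentially every position of $w$ exactly twice: a position in the stretch $(\tau_j,\tau_{j+1}]$ lies in $W_j$ and in $W_{j+1}$ and in no other window, and in particular the measure-forcing positions $\tau_k$ lie in \emph{two} windows ($W_{k-1}$ and $W_k$), not one as you assert. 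Consequently $\sum_k|W_k|_s$ is roughly $2|w|_s$, and summing the bounds $|W_k|_s\geq f(k)$ can only yield $|w|_s\gtrsim\frac13 n(n-2)$ --- half of what you need. No bookkeeping of switch positions can recover the lost factor, because the overlap of $W_k$ with $W_{k+1}$ is the entire one-step stretch from $\tau_k$ to $\tau_{k+1}$, which may be long.

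The missing observation is that each one-step increment segment both \emph{starts and ends with the letter $b$}. The paper defines $u_k$ as the \emph{longest} prefix with $\mu(Su_k)=k$ and sets $v_k$ by $u_k=u_{k-1}v_k$; then the $v_k$ tile $w$ disjointly, each $v_k$ begins with $b$ (since $a$ never changes the measure) and ends with $b$ (the measure can only increase when $n\in Su_k$, and no $a$-transition enters $n$, so the last letter of $u_k$ is $b$). This buys two things at once: the two-step window needed for Lemma \ref{lem:pairdist} is just $v_kb$, and since $v_k$ already ends in $b$ the extra letter is free, $|v_kb|_s=|v_k|_s$; and the junctions contribute the exact identity $|w|_s=1+\sum_k\left(|v_k|_s-1\right)$, with no double counting. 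A second, smaller repair: Lemma \ref{lem:pairdist} requires the witness pair to lie in $C$ (or $-C$), and Lemma \ref{lem:inC} places the image of $S$ in $\pm C$ only \emph{just before} a measure-increasing $b$; your pair is taken from the set at time $\tau_{k-1}$, i.e.\ just \emph{after} such a $b$, where the image need no longer be contained in $\pm C$. Taking the pair from $A_k=Su_{k-1}$, as the paper does, fixes both issues simultaneously.
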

\begin{proof} The proof is given for $n$ divisible by 6. Let $w$ be a synchronizing word for $\mathcal{A}_n$, with minimal switch count. Assume that $w$ is the shortest such word. Then $\mu(S)=1$ and $|Sw|=1$ in $\mathcal{B}_n$, meaning that $\mu(Sw)=\frac23 n$. For $1\leq k\leq \frac23 n$, we let $u_k$ be the longest prefix  of $w$ such that $\mu(Su_k) = k$. By Lemma \ref{lem:measurediff}, such a prefix exists and if $2\leq k\leq\frac23 n$, then $|u_{k-1}|<|u_k|$. So for $2\leq k\leq\frac23 n$ we can write $u_k=u_{k-1}v_k$ for some non-empty word $v_k$. This implies that $w$ is the concatenation $v_1v_2\ldots v_{\frac23 n}$, where we set $v_1=u_1$. Furthermore, since $w$ is a synchronizing word with minimal switch count, we have $|v_{\frac23 n}|_s=1$.
	
	For all $2\leq k\leq\frac23 n$, if the first letter of $v_k$ would be $a$, then $u_{k-1}a$ would be a prefix of $w$ and $\mu(Su_{k-1}a) = \mu(Su_{k-1}) = k-1$ by Lemma \ref{lem:measurediff}. This contradicts the fact that $u_{k-1}$ is the longest prefix with this property, therefore the first letter of $v_k$ is $b$. Similarly, for $1\leq k\leq \frac23 n-1$, Lemma  \ref{lem:measurediff} implies that $n\in Su_{k}$, because otherwise $\mu(Su_kb) =\mu(Su_k)= k$. Since there are no states $q$ for which $qa=n$, we conclude that the last letter of $u_k$ is $b$. Hence also $v_k$ ends with $b$.
	
	Knowing the first and last letters of the words $v_k$, we can express the switch count of $w$ by
	\[
	|w|_s = 1+\sum_{k=1}^{\frac23 n} \left(|v_k|_s-1\right) = -\frac23 n+2+\sum_{k=1}^{\frac23 n-1} |v_k|_s.
	\]
	For $k=1$, we apply Corollary \ref{cor:prefix} to conclude that $|v_k|_s\geq \frac23n-1$, giving \[|w|_s = 1+\sum_{k=2}^{\frac23 n-1}|v_k|_s.\]
	
	For $2\leq k\leq \frac23 n-1$, we have that $u_kb$ is a prefix of $w$ and by definition $\mu(Su_kb)\neq k$. If $\mu(Su_kb)$ would be less than $k$, there would be a prefix $\tilde w$, longer than $u_k$ for which $\mu(S\tilde w)=k$, since the measure can only increase in steps of size 1 by Lemma \ref{lem:measurediff}. Therefore $\mu(Su_kb)=k+1$. Letting  $A_k = Su_{k-1}$, we observe that $\mu(A_k)=k-1$, $\mu(A_kb)=k$ and $\mu(A_kv_kb)=k+1$. Lemma \ref{lem:inC} gives $A_k\subseteq C$ (or $-C$). By Lemma \ref{lem:setpair}, there exist $p,q\in A_k$ such that $d(p,q)\leq k-1$ and $d(pv_kb,qv_kb) = k+1$. Lemma \ref{lem:pairdist} gives
	\[
	|v_kb|_s \geq \left\{
	\begin{array}{ll}
	\frac23 n+2k-1\qquad & 2\leq k\leq \frac{n}{3},\\
	2n-2k+1\qquad & \frac{n}{3}+1\leq k\leq \frac23 n-1.
	\end{array}
	\right.
	\]
	Since $v_k$ ends with $b$, we have $|v_k|_s=|v_kb|_s$, so that finally
	\[
	|w|_s\geq 1+\sum_{k=2}^{n/3} \left(\frac23 n+2k-1\right)+\sum_{k=n/3+1}^{2n/3-1} (2n-2k+1) = \frac23 n^2-\frac43 n-1 = \frac23 n(n-2)-1.
	\]
\end{proof}

In the previous proof we have seen that the subwords $v_k$ are at least as long as those given in Lemma \ref{lem:pairdist} and Lemma \ref{lem:prefix}. In the next result, we will see that we can actually construct a synchronizing word by choosing $v_kb$ equal to the words of these lemmata and taking $w=\prod_{k}v_k$. This gives that the switch count of $\mathcal{A}_n$ is exactly equal to the bound of Proposition \ref{prop:lower}.

\begin{proposition} Let $w$ be a synchronizing word for $\mathcal{A}_n$ with minimal switch count. Assume that $w$ is the shortest such word. Then $w$ is uniquely determined and $|w|_s = \left\lceil\frac23 n(n-2)-1\right\rceil$.
\end{proposition}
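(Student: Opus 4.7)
The strategy is to establish the matching upper bound and uniqueness simultaneously by constructing a candidate word $w$ that saturates the lower bound of Proposition \ref{prop:lower} at every stage, and then arguing that each piece of the construction is forced.

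I would define $w = v_1 v_2 \cdots v_{\frac{2n}{3}}$ by
\[
v_k = \begin{cases} b(ab)^{\frac{n}{3}-1} & k = 1,\\ b(ba)^k(ab)^{\frac{n}{3}} & 2 \leq k \leq \frac{n}{3}-1,\\ b(ba)^{\frac{2n}{3}-1}b & k = \frac{n}{3},\\ (ba)^{n-k}b & \frac{n}{3}+1 \leq k \leq \frac{2n}{3}-1,\\ b & k = \frac{2n}{3}, \end{cases}
\]
so that $v_1$ is the forced prefix from Lemma \ref{lem:prefix} and, for $k\geq 2$, each $v_k$ is obtained by stripping the trailing $b$ from the tight word produced in the case analysis of Lemma \ref{lem:pairdist}. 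Each $v_k$ ends in $b$, so at every junction the last block of $v_k$ and the first block of $v_{k+1}$ merge, giving $|w|_s = \sum_{k=1}^{\frac{2n}{3}}|v_k|_s - (\frac{2n}{3}-1)$. Substituting the switch counts from those lemmas and collapsing the sums exactly as in the proof of Proposition \ref{prop:lower} yields $\left\lceil \frac{2}{3}n(n-2)-1\right\rceil$, matching the lower bound.

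Next I would verify that $w$ synchronises $\mathcal{A}_n$ by induction on $k$, tracking $A_{k+1} := S v_1 v_2 \cdots v_k$ inside $\mathcal{B}_n$. The base case reduces to a direct computation: $A_2 = Sv_1$ is a concrete set in $-C$, and reading the leading $b$ of $v_2$ pushes it back into $C$ with measure raised to $2$. For the inductive step, Lemma \ref{lem:inC} guarantees that whenever the measure rises from $k-1$ to $k$ the current set already lies in $C$ or $-C$, so I may reuse the rotation arithmetic of Lemma \ref{lem:pairdist}: the leading $b$ of $v_k$ moves an extremal pair from distance $k-1$ to distance $k$, then the tail of $v_k$ (following the $b\circlearrowright a\circlearrowleft b$, $b\circlearrowright a\circlearrowright b$ or $b\circlearrowleft a\circlearrowleft b$ shorthand used in the proof of that lemma) transports the remainder of $A_k$ back into $C$ with $\mu(A_{k+1}) = k$. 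The terminating piece $v_{\frac{2n}{3}} = b$ then collapses the last surviving pair, which by the dynamics must be $\{-\frac{n}{3}+1, n\}$ or its negative, to a singleton.

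For uniqueness, I would recycle the internal uniqueness already present in the supporting lemmas: Lemma \ref{lem:prefix} forces the opening letters of $v_1$ one at a time, while the case analysis of Lemma \ref{lem:pairdist}, driven by Property \ref{prop:rotprop}, identifies a unique shortest word realising each tight switch count. Proposition \ref{prop:lower} requires every $v_k$ to saturate its bound, and the \emph{shortest such word} hypothesis of the proposition forbids padding by extra runs of a single letter, pinning down each $v_k$ and hence $w$. The main technical obstacle is the inductive bookkeeping: one must verify that the pair $(p,q)$ dictating the shape of $v_k$ in Lemma \ref{lem:pairdist} is a pair in $A_k$ that actually realises the extremal max-min distance of that set, and that applying $v_k$ to the remaining states of $A_k$ neither prematurely collapses the set nor breaks the invariant $A_{k+1} \subseteq C \cup (-C)$. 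Lemma \ref{lem:setpair} supplies the pair-to-set reduction, so the remainder is a finite collection of explicit rotation calculations inside the $\frac{2n}{3}$-state cycle $C$, entirely analogous to those already carried out in the proofs of the supporting lemmas.
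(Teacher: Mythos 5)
Your proposal is essentially the paper's proof: it builds exactly the same blocks $v_k$, obtains the switch count by the same merging of adjacent $b$'s, and derives uniqueness from the uniqueness of the tight words in Lemmas \ref{lem:prefix} and \ref{lem:pairdist}; the paper makes the synchronization check concrete via the invariant $\left(I_{k-1}\cup\{n\}\right)v_k=-\left(I_k\cup\{n\}\right)$ for the explicit intervals $I_k\subseteq C$. Two small repairs to your sketch: the appeal to Lemma \ref{lem:inC} is circular when verifying that the constructed word synchronizes (its hypothesis already assumes a minimal synchronizing word; use Lemma \ref{lem:subsetC} instead), and $Sv_1b=-(Cb)$ is not yet contained in $C$ --- the image only re-enters $\pm C$ after the whole block $v_2$, which is exactly what the interval invariant records.
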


\begin{proof}
	Define the words $v_k$ for $1\leq k\leq \frac23 n$ as follows:
	\[
	\begin{array}{lll}
	v_1 &= b(ab)^{\frac{n}{3}-1},&\\
	v_kb &= b(ba)^k(ab)^{\frac{n}{3}}b\qquad &2\leq k\leq \frac{n}{3}-1,\\
	v_kb &= b(ba)^{\frac{2}{3}n-1}b^2&k=\frac{n}{3},\\
	v_kb & = (ba)^{n-k}b^2&\frac{n}{3}+1\leq k\leq \frac23 n-1,\\
	v_{\frac23 n}& = b. &
	\end{array}
	\]
	%
	Observe that $Sv_1 = -C$. Define intervals $I_k$ as follows
	\[
	I_k = \left\{\begin{array}{ll}
	\left[-\frac{n}{3}+1,n-2k\right] & 0\leq k\leq \frac{n}{2}-1, \\
	\left[-\frac{n}{3}+1,n-2k-1\right]\qquad & \frac{n}{2}\leq k\leq \frac23 n-1.
	\end{array}
	\right.
	\]
	Then $I_0 = C = I_1\cup \left\{n\right\}$. A straightforward check reveals that
	\[
	\left(I_{k-1}\cup\left\{n\right\}\right)v_k = -(I_k\cup\left\{n\right\})
	\]
	for $2\leq k\leq \frac23 n-1$. Also note that for $k = \frac23 n-1$, we have $I_k\cup\left\{n\right\} = \left\{-\frac{n}{3}+1,n\right\}$. This pair is synchronized by applying $v_{\frac23 n}=b$. We conclude that the concatenation
	\[
	w = \prod_{k=1}^{\frac23 n}v_k = b(ab)^{\frac{n}{3}-1}\prod_{k=2}^{\frac{n}{3}-1}\Bigl(b(ba)^k(ab)^{\frac n 3}\Bigr) b(ba)^{\frac23 n-1}b\prod_{k=\frac n 3+1}^{\frac23 n-1}\Bigl((ba)^{n-k}b\Bigr)b
	\]
	is a synchronizing word for $\mathcal{A}_n$ with switch count $\frac23 n(n-2)-1$. Uniqueness follows from the fact that the words constructed in Lemma \ref{lem:pairdist} and Lemma \ref{lem:prefix} are unique.
\end{proof}

\section{Cyclic automata}
\label{seccyc}

An automaton is called {\em cyclic} if one of the symbols $a$ is cyclic, that is, the $n$ states can be numbered by $0,1,\ldots,n-1$ in such a way that $qa = (q+1 \mod n)$ for all states $q$.
For cyclic automata it was proved by Dubuc \cite{dubuc} that \v{C}ern\'y's conjecture holds. In 1978 this was already proved by Pin in \cite{pin78} for $n$ being a prime number.
Note that the \v{C}ern\'y automaton ${\cal C}_n$ is cyclic for every $n$; it has linear switch count, more precisely, its switch count is exactly $2n-3$, as we observed in the introduction.

In this section we prove that for $n$ being a prime number, every synchronizing cyclic automaton on $n$ states has switch count at most $2n-3$. Part of our proof is similar to the proof in \cite{pin78} of \v{C}ern\'y's conjecture for cyclic automata for which $n$ is a prime number. For $n$ being non-prime, we expect that the switch count is still linear, but we give examples that it may exceed $2n-3$.

The result for $n$ being prime is based on the following lemma; its proof needs some algebra. We write $\mathbb{Z}$ for the set of integers.

\begin{lemma}
	\label{lem1}
	Let $n$ be a prime number and $Q = \{0,1,\ldots,n-1\}$.
	
	Let $A \in \mathbb{Z}$, $f,g : Q \to \mathbb{Z}$ satisfy
	\[ \sum_{i=0}^{n-1} f(i) g(i+k) \; = \; A\]
	for all $k = 0,1,\ldots,n-1$, in which the arguments of $g$ are taken modulo $n$. Then at least one of the functions $f$ and $g$ is constant.
\end{lemma}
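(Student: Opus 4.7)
I would attack this with discrete Fourier / polynomial methods over $\mathbb{Q}(\omega)$, using that for prime $n$ the cyclotomic polynomial $\Phi_n(x) = 1 + x + \cdots + x^{n-1}$ is irreducible over $\mathbb{Q}$.

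The plan is to encode $f$ and $g$ by the polynomials $F(x) = \sum_{i=0}^{n-1} f(i) x^i$ and $G(x) = \sum_{i=0}^{n-1} g(i) x^i$ in $\mathbb{Z}[x]$. A direct expansion shows that in the ring $\mathbb{Z}[x]/(x^n-1)$ one has
\[ F(x)\, G(x^{-1}) \;=\; \sum_{k=0}^{n-1} \Bigl(\sum_{i=0}^{n-1} f(i) g(i+k)\Bigr) x^{-k} \;=\; A \sum_{k=0}^{n-1} x^k, \]
where indices on $g$ are mod $n$ and $x^{-k}$ is read as $x^{n-k}$. So modulo $x^n-1$ the product $F(x)G(x^{-1})$ equals $A\cdot(1+x+\cdots+x^{n-1})$.

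Next I would substitute $x = \omega^j$ for a fixed primitive $n$-th root of unity $\omega$ and $j \in \{1,\ldots,n-1\}$. Since $\omega^j \neq 1$, the geometric sum $1+\omega^j+\cdots+\omega^{j(n-1)}$ vanishes, so
\[ F(\omega^j)\, G(\omega^{-j}) \;=\; 0 \qquad \text{for every } j = 1,\ldots,n-1. \]
Hence, for each such $j$, either $F(\omega^j) = 0$ or $G(\omega^{-j}) = 0$. Here is the key step: because $n$ is prime, every $\omega^j$ with $1 \leq j \leq n-1$ is a primitive $n$-th root of unity, and all of them share the same minimal polynomial $\Phi_n$ over $\mathbb{Q}$. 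Consequently $F(\omega^j) = 0$ for one such $j$ forces $\Phi_n \mid F$ in $\mathbb{Q}[x]$, and then $F(\omega^{j'}) = 0$ for every $j'$; the same holds for $G$. So if $\Phi_n \nmid F$ and $\Phi_n \nmid G$, both factors $F(\omega^j)$ and $G(\omega^{-j})$ would be nonzero for every $j \neq 0$, contradicting the display. Therefore $\Phi_n \mid F$ or $\Phi_n \mid G$.

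The conclusion follows from a degree count: $\deg F, \deg G \leq n-1 = \deg \Phi_n$, so $\Phi_n \mid F$ means $F = c \cdot \Phi_n$ for some constant $c \in \mathbb{Q}$, which forces $f(i) = c$ for all $i$; analogously for $g$. The main delicate point is the "either/or" step, which I expect to be the only part that really uses primality of $n$: it is exactly the irreducibility of $\Phi_n$ (equivalently, the transitive Galois action on the primitive $n$-th roots of unity) that lets a single vanishing $F(\omega^j)$ propagate to all of them. For composite $n$ this propagation fails because $\Phi_n$ factors into smaller cyclotomic pieces and one can split the non-vanishing sets of $F$ and $G$ across the factorisations, which is presumably why the paper flags that the result need not hold beyond primes.
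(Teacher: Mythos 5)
Your proof is correct and follows essentially the same route as the paper: both arguments evaluate the correlation identity at a primitive $n$-th root of unity, obtain the vanishing product $F(\omega)\,G(\omega^{-1})=0$, and use the irreducibility of $1+x+\cdots+x^{n-1}$ for prime $n$ to conclude that the corresponding function must be constant. The only cosmetic difference is that you phrase it as a divisibility-plus-degree-count argument over all $\omega^j$ (where a single $j$ already suffices), while the paper argues contrapositively that a non-constant $f$ gives a nonzero evaluation.
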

\begin{proof}
	The cyclotomic polynomial $\Phi_n$ is defined by $\Phi_n[X] = X^{n-1}+X^{n-2}+\cdots+X+1$, note that $\Phi_n[X](X-1) = X^n - 1$.
	The complex number $\zeta = e^{2 \pi i/n}$ satisfies $\zeta \neq 1$ and $\zeta^n = 1$. So $\Phi_n[\zeta] = \frac{\zeta^n - 1}{\zeta -1} = 0$.
	A well-known fact from algebra is that $\Phi_n$ is irreducible, see e.g. \cite{W71}, Chapter 8, Section 60. A direct consequence is that no non-zero polynomial $P$ exists with integer coefficients and degree $< n-1$
	satisfying $P[\zeta] = 0$. If we would have
	$\sum_{j=0}^{n-1} f(j) \zeta^j = 0$ for $f : Q \to \mathbb{Z}$ being a non-constant function, then $(\sum_{j=0}^{n-1} f(j)X^j) - f(n-1)\Phi_n[X]$ would be such a non-existing polynomial $P$. So for $f : Q \to \mathbb{Z}$
	being non-constant we have $\sum_{j=0}^{n-1} f(j) \zeta^j \neq 0$.  Since $\zeta^n = 1$, in computing $\zeta^k$ we may take $k$ modulo $n$.
	
	Assume that neither $f$ nor $g$ is constant. Then according to the above fact we obtain $F = \sum_{j=0}^{n-1} f(j) \zeta^j \neq 0$ and $G = \sum_{j=0}^{n-1} g(j) \zeta^j \neq 0$.
	Since the conjugate of $\zeta$ is $\overline{\zeta} = \zeta^{-1}$, we also obtain  $\overline{F} = \sum_{j=0}^{n-1} f(j) \zeta^{-j} \neq 0$.
	We obtain
	\[\begin{array}{rcl}
	0 & = & \sum_{k=0}^{n-1} A \zeta^k \\
	& = & \sum_{k=0}^{n-1} \sum_{j=0}^{n-1} f(j) g(j+k) \zeta^k \\
	& = & \sum_{j=0}^{n-1} \sum_{k=0}^{n-1} f(j) g(j+k) \zeta^k \\
	& = & \sum_{j=0}^{n-1} f(j) \sum_{k=0}^{n-1} g(j+k) \zeta^k \\
	& = & \sum_{j=0}^{n-1} f(j) \sum_{k=0}^{n-1} g(k) \zeta^{k-j} \\
	& = & \sum_{j=0}^{n-1} f(j) \sum_{k=0}^{n-1} g(k) \zeta^k \zeta^{-j}\\
	& = & \sum_{j=0}^{n-1} f(j) \zeta^{-j} \sum_{k=0}^{n-1} g(k) \zeta^k \\
	& = & \overline{F} G,\end{array}\]
	contradicting $\overline{F} \neq 0 \neq G$.
\end{proof}

If $n$ is not prime then Lemma \ref{lem1} does not hold, as is shown by the following example. Write $n=pq$ for $p,q > 1$. Define $f,g : \{0,1,2,\ldots,n-1\} \to \mathbb{Z}$ by $f(i)=1$ for
$i < p$ and $f(i)=0$ for $i \geq p$, and $g(i) = 1$ if $i$ is divisible by $p$ and  $g(i) = 0$ otherwise. Then it is easily checked that
$\sum_{i=0}^{n-1} f(i) g(i+k) = 1$ for all $k$, but neither $f$ nor $g$ is a constant function.

\begin{theorem}
	\label{thmcyc}
	Let $n$ be a prime number. Then the maximal switch count of any synchronizing cyclic automaton on $n$ states is $2n-3$.
\end{theorem}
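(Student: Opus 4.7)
The lower bound $\SW(\mathcal{C}_n) = 2n-3$ is already recorded in the introduction, so only the upper bound requires work. Let $A$ be a synchronizing cyclic automaton on prime $n$ states, with cyclic symbol $a$; identify states with $\mathbb{Z}/n\mathbb{Z}$ so that $a$ acts as $q \mapsto q+1$. By the power-closure reformulation from Section \ref{secprel}, the switch count of $A$ equals the length of a shortest synchronizing word in the power closure $\overline{A}$, and because $n$ is prime the $n-1$ rotations $a, a^2, \ldots, a^{n-1}$ are pairwise distinct letters of $\overline{A}$: any rotation of a subset is realisable with a single symbol.

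Let $w = c_1 c_2 \cdots c_m$ be a shortest synchronizing word in $\overline{A}$. Three easy observations constrain $w$: no two consecutive symbols may both be powers of $a$ (else they would combine to a shorter word); the first symbol $c_1$ must be a non-$a$ letter because every $a$-power fixes $Q$; and trailing $a$-powers may be trimmed so $c_m$ is also non-$a$. Writing $r$ for the number of non-$a$-letter positions in $w$, these three facts force $m \leq 2r - 1$. It therefore suffices to construct a synchronizing word of $\overline{A}$ with $r \leq n - 1$, i.e.\ one in which every non-$a$-letter step decreases the current set size by exactly one.

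I would build such a word greedily. Starting from $V_0 = Q$, at each stage with current set $V$ of size $k \geq 2$ I would pick a rotation $V+j$ (a single $a^j$) and a non-$a$ letter $b$ such that $|(V+j)b| = k - 1$; equivalently, $V + j$ must contain two distinct elements $p, p'$ with $pb = p'b$. The existence of such a $(j, b)$ is the crux, and Lemma \ref{lem1} supplies it. Fix a non-$a$ letter $b$ that is not a permutation (such $b$ must exist because $A$ is synchronizing) and a state $q$ with $|W| \geq 2$, where $W = qb^{-1}$; apply Lemma \ref{lem1} with $f = \mathbbm{1}_V$ and $g = \mathbbm{1}_W$. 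The correlation $\sum_i f(i) g(i+k) = |(V+k) \cap W|$ cannot be constant in $k$, since neither $V$ nor $W$ is $\emptyset$ or $Q$. Combined with the averaging identity $\sum_k |(V+k) \cap W| = |V| \cdot |W|$, whenever $|V| \cdot |W| \geq n$ some $k$ must satisfy $|(V+k) \cap W| \geq 2$, producing the desired collision and the size reduction.

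The main obstacle is the low-density regime $|V|\cdot|W| < n$, where Lemma \ref{lem1} only separates the values $0$ and $1$ among the $|(V+k) \cap W|$ and need never force a $\geq 2$; then no single rotation of $V$ admits a direct $b$-reduction. To handle this, one exploits flexibility on two fronts: first, $b$ may be varied across all non-$a$ letters so that the union of their collapsible-pair differences is tested against $V - V$; and second, the greedy construction has latitude in which orbit of size $k - 1$ to reduce to, and may be steered into orbits whose difference set is rich enough to support further reductions. Verifying that this steering always succeeds through all $n - 1$ greedy stages, again invoking Lemma \ref{lem1} to rule out pathological obstructions rather than just the pigeonhole consequence used in the generic case, is the technical heart of the argument.
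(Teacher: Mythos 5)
There is a genuine gap, and you have correctly located it yourself: the ``low-density regime'' you defer to at the end is not a technical verification left to the reader but is precisely the content of the theorem, and your forward greedy scheme does not survive it. Your use of Lemma \ref{lem1} with $f=\mathbbm{1}_V$, $g=\mathbbm{1}_W$ only tells you that the correlations $|(V+k)\cap W|$ are not all equal; when $|V|\cdot|W|<n$ they may perfectly well all lie in $\{0,1\}$, so no rotation of $V$ need contain a collapsible pair of $b$, and varying $b$ over the (possibly single) non-permutation letters does not repair this. The requirement you reduce to --- that every non-$a$ step shrinks the current set by exactly one --- is the wrong dual of what can be proved: already for $\mathcal{C}_n$ the only collapsible difference is $1$, so a forward set with no two cyclically adjacent elements admits no size-reducing step at all, and guaranteeing that the greedy can always be ``steered'' away from such sets is exactly the unproved heart of your argument.

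The paper avoids this entirely by running the construction \emph{backward} on preimages. Starting from a set $V_0$ with $|V_0|\ge 2$ that $b$ collapses to a point, one shows the claim: for every $V$ with $\emptyset\neq V\neq Q$ there exist $W$ and $m$ with $Wba^m\subseteq V$ and $|W|>|V|$. Here the averaging identity is exact, $\sum_{m=0}^{n-1}|(V-m)b^{-1}|=n|V|$, so if no rotation had a strictly larger preimage then all preimages would have size exactly $|V|$; Lemma \ref{lem1} applied to $f(i)=|ib^{-1}|$ (non-constant since $b$ is non-injective) and $g=\mathbbm{1}_V$ (non-constant since $V$ is proper and nonempty) then yields a contradiction, with no density hypothesis whatsoever. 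Iterating gives $V_k=Q$ after $k\le n-2$ steps and a synchronizing word of the shape $(ba^*)^k b$ of switch count at most $2n-3$. Your preliminary reductions (shape of a shortest word in the power closure, the bound $m\le 2r-1$, the choice of a non-injective $b$) match the paper's accounting and are fine; it is the existence argument at each stage that must be inverted from images to preimages for Lemma \ref{lem1} to close the proof.
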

\begin{proof}
	We already observed that the switch count $2n-3$ is achieved for the \v{C}ern\'y automaton ${\cal C}_n$; it remains to prove that the switch count cannot exceed $2n-3$.
	
	Let $a$ be the cyclic symbol. Let  $Q = \{0,1,\ldots,n-1\}$ be the set of states, numbered in such a way that $qa = (q+1 \mod n)$ for all states $q$. Since the automaton is synchronizing it contains a non-injective symbol $b$. We will prove that there is a synchronizing word of the shape $(b a^*)^k b$ for $k \leq n-2$, having switch count $\leq 2n-3$, so proving the theorem.
	
	The reasoning in the power automaton is backward, that is, we start by $V_0$ for which $V_0 b$ is a singleton, and then construct $V_1,V_2,\ldots,V_k$ satisfying $V_i b a^m \subseteq V_{i-1}$ for some $m$, for all $i=1,2,\ldots,k$, and $V_k = Q$: this yields a synchronizing word. Since $b$ is non-injective, $V_0$ can be chosen satisfying $|V_0| \geq 2$. The construction repeats applying the following claim.
	\begin{quote}
		{\bf Claim:} Let $V \subseteq Q$ satisfy $\emptyset \neq V \neq Q$. Then a set $W \subseteq Q$ and a number $m$ exist such that $W b a^m \subseteq V$ and
		$|W| > |V|$.
	\end{quote}
	Applying this claim on $V = V_i$ for $i \geq 0$ we define $V_{i+1} = W$, since $|V_0| \geq 2$ and $|V_{i+1}| > |V_i|$ for all $i$, we obtain $V_k = Q$ for $k \leq n-2$, yielding the required synchronizing word of the shape $(b a^*)^k b$, proving the theorem. So it remains to prove the claim.
	
	Assume the claim does not hold. For any set $A \subseteq Q$ write $A-m = \{q \in Q \mid (q+m \mod n) \in A\}$.
	Due to the numbering we have
	$(A-m) a^m = A$ for all $m$. Since the claim does not hold and $(V-m)b^{-1} b a^m \subseteq V$, we have $| (V-m)b^{-1} | \leq |V|$ for all $m$. Combining this with
	\[ \sum_{i=0}^{n-1} | (V-i)b^{-1} | = \sum_{i=0}^{n-1} \sum_{q \in V-i} | q b^{-1} | = |V| \sum_{q \in Q} | q b^{-1} | = n|V|\]
	yields $| (V-m)b^{-1} | = |V|$ for all $m$. For $i \in Q$ define $f(i) = |i b^{-1}|$ and $g(i) = 1$ if $i \in V$ and $g(i) = 0$ if $i \not\in V$. Writing the arguments of $g$ modulo $n$, then for all $m$ we have
	\[ |V| = | (V-m)b^{-1} | = \sum_{q \in V-m} | q b^{-1} | = \sum_{i=0}^{n-1} g(i+m) | i b^{-1} | = \sum_{i=0}^{n-1} g(i+m) f(i).\]
	So by lemma \ref{lem1} we obtain that  at least one of the functions $f$ and $g$ is constant. However, $f$ is not constant since $b$ is not injective and $g$ is not constant since $\emptyset \neq V \neq Q$, contradiction. This proves the claim and concludes the proof of the theorem.
\end{proof}

For Theorem \ref{thmcyc} the requirement of $n$ being prime is essential, as is shown by the following example.

\begin{center}
	\begin{tikzpicture}[-latex',node distance =2 cm and 2cm ,on grid ,
	semithick , state/.style ={ circle ,top color =white , bottom
		color = white!20 , draw, black , text=black , minimum width =.5
		cm}]
	
	\node[state] (1) {};
	\node[state] (4) [below =of 1] {};
	\node[state] (2) [right =of 1] {};
	\node[state] (3) [right =of 4] {};
	\path (1) edge node[above] {$a$} (2);
	\path (1) edge node[above] {$b$} (3);
	\path (2) edge node[right] {$a$} (3);
	\path (3) edge node[above] {$a,c$} (4);
	\path (4) edge [looseness=1, in = 210, out = 330] node[below] {$c$} (3);
	\path (4) edge node[left] {$a$} (1);
	\path (2) edge [loop right,looseness=8, in = 0, out = 90] node[right] {$b,c$} (2);
	\path (3) edge [loop right,looseness=8, in = 270, out = 0] node[right] {$b$} (3);
	\path (4) edge [loop right,looseness=8, in = 180, out = 270] node[left] {$b$} (4);
	\path (1) edge [loop right,looseness=8, in = 90, out = 180] node[left] {$c$} (1);
	\end{tikzpicture}
\end{center}

In this automaton all three symbols are needed for synchronization, in contrast to the proof of Theorem \ref{thmcyc} where only the cyclic symbol $a$ and the non-injective symbol $b$ suffice.
The switch count is 6, being strictly larger than $2n-3 = 5$. A corresponding synchronizing word is $babacb$.

\section{Conclusions and open problems}
\label{secconcl}

We investigated synchronizing automata with high switch count rather than high synchronization length. For synchronization length for all $n$ the highest known value $(n-1)^2$ is achieved by the binary automaton ${\cal C}_n$,
and for $n > 6$ it is conjectured that ${\cal C}_n$ is the only one for which this value is achieved. For switch count the situation is quite different: for all $n \leq 9$ the best binary automata are outperformed by non-binary
automata, and for all $n \leq 11$ the automata with the best known value are not unique. For $n \geq 12$ the best known values $\left\lceil \frac{2}{3}n(n-2)-1\right\rceil$ are obtained by the binary automata ${\cal A}_n$. In particular, this establishes the lower bound $\frac23 n^2 +O(n)$ for the maximal synchronization length for the class of power closed automata.

Open problems include
\begin{itemize}
	\item For $n \geq 10$, does a binary automaton on $n$ states exist with switch count exceeding $\left\lceil \frac{2}{3}n(n-2)-1\right\rceil $?
	\item For $n \geq 12$, does an automaton on $n$ states exist, different from ${\cal A}_n$, with switch count $\geq  \left\lceil \frac{2}{3}n(n-2)-1\right\rceil $?
	\item Does $\limsup_{n\rightarrow\infty} \frac{\SW(n)}{n^2} = \frac{2}{3}$ hold?
	\item Is the switch count of cyclic automata linear for all $n$? (We only showed this for $n$ being prime.)
\end{itemize}

\bibliographystyle{plain}
\bibliography{ref}

\end{document}